\title{The Jacobiator of nonholonomic systems and the geometry of reduced nonholonomic brackets}
\author{P. Balseiro  \footnotemark}
\author{{\sc{Paula Balseiro}\thanks{
         Universidade Federal Fluminense, Instituto de Matem\'atica, Rua Mario Santos Braga S/N, 24020-140,
        Niteroi, Rio de Janeiro, Brazil. \newline{\texttt{E-mail: pbalseiro@vm.uff.br}}}
        }  \ \
}
\theoremstyle{plain}
\newtheorem{theorem}{Theorem}[section]
\newtheorem{lemma}[theorem]{Lemma}
\newtheorem{proposition}[theorem]{Proposition}
\newtheorem{corollary}[theorem]{Corollary}
\newtheorem*{theorem*}{Theorem}
\newtheorem{remarkth}[theorem]{Remark}
\theoremstyle{definition}
\newtheorem{definition}[theorem]{Definition}
\newtheorem{example}[theorem]{Example}
\newenvironment{remark}{\begin{remarkth}\upshape}{\hfill$\diamond$\end{remarkth}}
\def\W{\mathcal{W}}
\def\M{\mathcal{M}}
\def\V{\mathcal{V}}
\def\S{\mathcal{S}}
\def\C{\mathcal{C}}
\def\Ham{\mathcal{H}}
\def\Lag{\mathcal{L}}
\def\R{\mathbb{R}}
\def\L{\mbox{Leg}}
\def\red{{\mbox{\tiny{red}}}}
\def\nh{{\mbox{\tiny{nh}}}}
\def\kin{{\mbox{\tiny{kin}}}}
\def\B{{\mbox{\tiny{$B$}}}}
\def\subW{{\mbox{\tiny{$\W$}}}}
\def\subC{{\mbox{\tiny{$\C$}}}}
\def\subS{{\mbox{\tiny{$\S$}}}}
\def\subM{{\mbox{\tiny{$\M$}}}}
\def\vecOm{\boldsymbol{\Omega}}
\def\a{\alpha}
\def\vecom{\boldsymbol{\omega}}
\newcommand{\SO}{\mbox{$\textup{SO}$}}
\def\vecep{\boldsymbol{\epsilon}}
\def\vecL{\boldsymbol{\lambda}}
\def\vecR{\boldsymbol{\rho}}
\def\vecgamma{\boldsymbol{\gamma}}
\def\vecalpha{\boldsymbol{\alpha}}
\def\vecbeta{\boldsymbol{\beta}}
\begin{document}
\maketitle

\begin{abstract}

In this paper, we consider the hamiltonian formulation of
nonholonomic systems with symmetries and study several aspects of
the geometry of their reduced almost Poisson brackets, including the
integrability of their characteristic distributions. Our starting
point is establishing global formulas for the nonholonomic
Jacobiators, before and after reduction, which are used to clarify
the relationship between reduced nonholonomic brackets and twisted
Poisson structures. For certain types of symmetries (generalizing
the  Chaplygin case), we obtain genuine Poisson structures on the
reduced spaces and analyze situations in which the reduced
nonholonomic brackets arise by applying a gauge transformation to these
Poisson structures. We illustrate our results with mechanical
examples, and in particular show how to recover several well-known facts in
the special case of Chaplygin symmetries.
\end{abstract}

\tableofcontents

\section{Introduction} \label{S:Intro}

Nonholonomic systems are mechanical systems with constraints in their
velocities that are not derived from constraints in the positions.
Examples include rolling contact constraints and special types of
sliding constraints, see e.g. \cite{BlochBook,CushmannBook,BKMM}. In
this paper we use the hamiltonian viewpoint to nonholonomic systems
to study several aspects of their geometry in the presence of
symmetries.

Nonholonomic systems are described by a configuration manifold $Q$,
a lagrangian $\Lag: TQ \to \R$, assumed to be of mechanical type,
and a non-integrable distribution $D \subset TQ$ (i.e., a non-involutive subbundle of $TQ$), defining the
permitted velocities of the system. These systems admit a
hamiltonian formulation (developed e.g. in \cite{BS93,IbLeMaMa1999,
MarsdenKoon}) in terms of triples $(\M,\pi_\nh, \Ham_\subM)$, where
$\M$ is the submanifold of $T^*Q$ given by the image of the
constraint distribution $D$ under the Legendre transform $\L: TQ\to
T^*Q$, $\Ham_\subM$ is the restriction to $\M$ of the hamiltonian
function on $T^*Q$ corresponding to $\Lag$, and $\pi_\nh$ is a
bivector field on $\M$ naturally defined from $D$ and the canonical
symplectic form $\Omega_Q$ on $T^*Q$. The associated {\it
nonholonomic bracket} (see
\cite{IbLeMaMa1999,Marle1998,SchaftMaschke1994}) $\{ \cdot , \cdot
\}_\nh$ is given by
$$
\{ f , g \}_\nh = \pi_\nh(df,dg),\;\;\; f,g \in C^\infty(\M),
$$
and the evolution of the nonholonomic system is given by the flow of
the {\it nonholonomic vector field}
$$
X_{\nh} =  \{\cdot , \Ham_\subM \}_\nh.
$$
An important feature that distinguishes nonholonomic systems from
usual hamiltonian systems is that the bracket $ \{\cdot , \cdot
\}_\nh$ does not satisfy the Jacobi identity, so it is just an {\it
almost Poisson bracket}. In fact, it can be shown that the
characteristic distribution\footnote{The distribution on $\M$
generated by vector fields of the form $X_f:=\{\cdot , f \}_\nh$, for
$f\in C^\infty(\M)$.} of $ \{\cdot , \cdot \}_\nh$ is not
integrable, in contrast with Poisson structures, whose
characteristic distributions are always tangent to symplectic
leaves. Since almost Poisson brackets and bivector fields are
equivalent, we will treat them indistinguishably.

The present paper is concerned with the lack of integrability of
nonholonomic brackets, especially in the case of nonholonomic
systems with symmetries, i.e., when $Q$ is equipped with a free and
proper action of a Lie group $G$ preserving $D$ and $\Lag$. In this
case, $\M$ inherits a $G$-action such that $\pi_\nh$, $\Ham_\subM$
and $X_\nh$ are $G$-invariant. As a consequence the orbit space
$\M/G$ is endowed with a bivector field $\pi_\red^\nh$ (defining the
{\it reduced nonholonomic bracket} $ \{\cdot , \cdot \}_\red^\nh$),
a reduced hamiltonian $\Ham_\red$, and a reduced nonholonomic vector
field $X_\red^\nh$ defining the reduced dynamics and given by
$$
 X_\red^\nh =  \{\cdot , \Ham_\red \}_\red^\nh.
$$
In many concrete situations, the reduced bivector $\pi_\red^\nh$ has
better integrability properties than $\pi_{\nh}$. Indeed, in some
cases the reduced bracket is a Poisson bracket, possibly after a
time reparametrization, see e.g.
\cite{Chapligyn_reducing_multiplier, EhlersKoiller, FedorovJovan,
Hoch, Fernandez, Naranjo2008, JovaChap, Oscar}. This is a useful
scenario for the analysis of the integrability of the system
\cite{JovaChap,FedorovJovan} or to develop a Hamilton-Jacobi theory
\cite{Oscar,JC}.

Even if $\pi^\nh_\red$ is not a Poisson structure, it may still have
an integrable characteristic distribution, which means that it is
described by {\it almost symplectic leaves}. Note that the existence
of this foliation may give information about the nonholonomic
dynamics, for example concerning the presence of conserved
quantities, since each leaf is invariant by the flow of
$X_\red^\nh$. One of our motivations in this paper is understanding
geometric features of reduced nonholonomic brackets, particularly
the integrability of their characteristic distributions.

It is worth noticing that one may have many brackets generating the
nonholonomic vector field. More precisely, we will say that an
almost Poisson bracket $\{\cdot, \cdot \}$  on $\M$ (respectively on
$\M/G$) {\it describes the dynamics} if $\{\cdot ,  \Ham_\subM\} =
X_\nh$ (respectively $\{\cdot , \Ham_\red\}= X_\red^\nh$). So, even
if $\{\cdot , \cdot \}_\red^\nh$ does not satisfy properties of
interest, one can still hope to find another bracket describing
the dynamics with the desired properties. A general mechanism to
produce brackets describing the dynamics was introduced in
\cite{PL2011}. It is based on the fact that one can use 2-forms to
modify almost Poisson brackets via the so-called {\it gauge
transformations} \cite{SeveraWeinstein}. In this procedure, 2-forms
$B$ on $\M$ are used to ``deform" the bivector field $\pi_\nh$
keeping its characteristic distribution unchanged. The resulting
modified bracket will be denoted by $\pi_\B$, and two brackets
related by a gauge transformation are called {\it gauge related}. If
$B$ is annihilated by $X_{\nh}$, then $\pi_\B$ still describes the
dynamics. An interesting point is that the reduction of gauge
related bivector fields need not be gauge related on $\M/G$. In
fact, one may verify (see \cite{PL2011}) that $\pi_\nh$ and $\pi_\B$
may lead to reduced brackets $\pi_\red^\nh$, $\pi_\red^\B$ with
fundamentally different geometric properties. For example,
$\pi_\red^\B$ can have integrable characteristic distribution (or
even be a Poisson bracket) while $\pi_\red^\nh$ does not. So we will
be concerned in this paper not only with the geometry of
$\pi_\red^\nh$ but also with other brackets of the form
$\pi_\red^\B$ describing the reduced dynamics.

In order to describe our results, it is convenient to recall some
fundamental facts regarding {\it Chaplygin symmetries} (see e.g.
\cite{Koiller1992,BlochBook}). A nonholonomic system with symmetries
is called {\it Chaplygin} if the constraint distribution $D$
complements the vertical space $V \subset TQ$ with respect to the
group action:
\begin{equation} \label{Intro:Chaplygin}
TQ=D \oplus V.
\end{equation}
In this case the reduced nonholonomic bivector $\pi_\red^\nh$ is
nondegenerate \cite{Koiller1992}, so it is equivalent to a
nondegenerate 2-form $\Omega_\red^\nh$ on $\M/G$.  The nonholonomic
character of the reduced system appears in the fact that $\Omega_\red^\nh$ is not closed.
 The following results are well known.
\begin{enumerate}
\item[$(a)$] The differential of $\Omega_\red^\nh$ is given by  \cite{BS93}
\begin{equation} \label{Intro:BS}
d\Omega_\red^\nh = - d \langle {\mathcal J}, {\mathcal K} \rangle_\red,
\end{equation}
where $\langle{\mathcal J}, {\mathcal K} \rangle_\red$ is a 2-form
on $\M/G$ explicitly obtained from the canonical momentum map $\mathcal{J}$ for
the lifted action on $T^*Q$ and the curvature $\mathcal{K}$ of the constraint
distribution (viewed as a principal connection).
\item[$(b)$] There is an identification  \cite{Koiller1992}
\begin{equation} \label{Intro:Identif} \M/G \simeq T^*(Q/G).
\end{equation}
Hence $\M/G$ has a natural symplectic structure $\Omega_{\mbox{\tiny{
$Q/G$}}}$, usually not describing the dynamics.
\item[$(c)$] The relation between $\Omega_\red^\nh$ and $\Omega_{\mbox{\tiny{$Q/G$}}}$ is given by \cite{MovingFrames}
\begin{equation} \label{Intro:Jair}
\Omega_\red^\nh = \Omega_{\mbox{\tiny{$Q/G$}}} - \langle {\mathcal J}, {\mathcal K} \rangle_\red.
\end{equation}
Note that, in particular, this explains \eqref{Intro:BS} in $(a)$.
\end{enumerate}

In this paper, we provide extensions of the results in $(a)$, $(b)$,
$(c)$ to more general situations. Instead of assuming condition
\eqref{Intro:Chaplygin}, we will replace it by the weaker {\it
dimension assumption} \cite{BKMM}
\begin{equation} \label{Intro:DimenAssumption}
TQ=D+V.
\end{equation}
Note that many classical nonholonomic systems with symmetries
satisfy \eqref{Intro:DimenAssumption} but not
\eqref{Intro:Chaplygin}, see e.g. \cite{BlochBook,BGMConservation}.
Some explicit examples will be discussed at the end of the paper.

Our starting point is  establishing a global, coordinate-free
formula for the Jacobiator of the nonholonomic bracket
$\{\cdot,\cdot\}_\nh$ on $\M$ (extending the local formula given in
\cite[Sec.~2.5]{MarsdenKoon}), not making use of symmetries. Once
symmetries are present (satisfying \eqref{Intro:DimenAssumption}),
the geometric meaning of this Jacobiator formula becomes more clear,
and leads to an expression for the Jacobiator of the reduced
nonholonomic bracket $\{\cdot,\cdot\}_\red^\nh$ on $\M /G$. In fact,
all our formulas can be easily extended to more general brackets
obtained from $\{\cdot,\cdot\}_\nh$ and $\{\cdot,\cdot\}_\red^\nh$
via gauge transformations by 2-forms.

One application of our nonholonomic Jacobiator formulas is
clarifying the link between reduced nonholonomic bivectors
$\pi_\red^\nh$ (or, more generally, $\pi_\red^\B$) and twisted
Poisson brackets. This issue was already raised in e.g. \cite[Sec.~1.8]{YKS}. Recall that a {\it twisted Poisson structure}
\cite{SeveraWeinstein} is a special type of almost Poisson bracket
whose Jacobiator is controlled by a closed 3-form; an important
feature of twisted Poisson structures is that they always have
integrable characteristic distributions. From the formula describing
the reduced Jacobiators, we find explicit conditions for the reduced
brackets to be twisted Poisson as well as a mechanical
interpretation of the 3-form with respect to which the reduced bracket is twisted. Moreover, in the special case
of Chaplygin symmetries, the Jacobiator of
$\{\cdot,\cdot\}_\red^\nh$ can be expressed in terms of
$d\Omega_\red^\nh$ (see \eqref{E:Jacobi} and \eqref{Eq:Jacdomega}
below), and our formula recovers \eqref{Intro:BS} in $(a)$.

In order to generalize $(b)$ and $(c)$, we consider symmetries
satisfying \eqref{Intro:DimenAssumption} and fix an invariant
vertical complement $W \subseteq V$ of $D$, so that $TQ=D\oplus W$,
satisfying an additional condition saying that $W$ can be realized
as the vertical space for a subgroup of the group $G$ of symmetries.
Once such $W$ is chosen, we prove that one naturally obtains a {\it
Poisson structure} $\Lambda$ on the reduced space $\M/G$. We also
show that the submanifold $W^\circ/G$ of $T^*Q/G$, where
$W^\circ\subset T^*Q$ is the annihilator of $W$, has a natural
Poisson structure $\Lambda_0$, and that there is an identification
$$
\M/G \simeq W^\circ/G
$$
which is a Poisson diffeomorphism. In the Chaplygin case, the
vertical complement $W$ is necessarily equal to $V$, the quotient
$V^\circ/G$ is canonically identified with $T^*(Q/G)$, and
$\Lambda_0$ coincides with the Poisson structure defined by
canonical symplectic 2-form $\Omega_{\mbox{\tiny{$Q/G$}}}$, thus
recovering the result in $(b)$.

In general, the Poisson structure $\Lambda$ on $\M/G$ does not
describe the nonholonomic dynamics, so one is led to compare the
Poisson structure $\Lambda$ with the reduced nonholonomic bivector
$\pi_\red^\nh$, as done in $(c)$. Note that, in terms of bivector
fields, the relation in \eqref{Intro:Jair} is equivalent to saying
that the bivectors corresponding to $\Omega_\red^\nh$ and
$\Omega_{\mbox{\tiny{$Q/G$}}}$ are gauge related by the 2-form $\langle
{\mathcal J}, {\mathcal K} \rangle_\red$. Unlike the Chaplygin case,
we observe that, in general, $\pi_\red^\nh$ is not necessarily gauge
related to $\Lambda$. However we obtain sufficient conditions,
fulfilled in many examples, guaranteeing that $\pi_\red^\nh$ (or,
more generally, $\pi_\red^\B$) is obtained from the Poisson bivector
$\Lambda$ by a gauge transformation. We show that this happens e.g.
for the vertical rolling disk, the snakeboard and the Chaplygin
ball. Notice that proving that the reduced dynamics is described by
a bivector gauge related to the Poisson structure $\Lambda$ has
several implications; for example, the nonholonomic flow is
restricted to the symplectic leaves of $\Lambda$, so Casimirs of
$\Lambda$ are conserved quantities of the system.

\noindent {\it Outline of the paper}. In Section
\ref{S:Poisson-NHsystem} we review the basics of almost Poisson
structures, gauge transformations  and  nonholonomic systems that
are useful to our paper. In Section \ref{S:JacobiatorGeneral} we
prove the nonholonomic Jacobiator formula (see Theorem
\ref{T:GeneralJacobiatorNH} and Corollary~\ref{C:MarsdenKoon}).
Nonholonomic systems with symmetries satisfying the dimension
assumption \eqref{Intro:DimenAssumption} are considered in Section
\ref{S:Jacobiator-Symmetries}. We prove a formula for the Jacobiator
of nonholonomic brackets in this case (Theorem \ref{T:JacobiatorNH})
as well as a formula for the Jacobiator of the reduced brackets
(Corollary \ref{C:RedJacobiator}); here we also establish the
connection with twisted Poisson structures (Corollary
\ref{C:Twisted}). We show how the Jacobiator formulas simplify once
additional conditions are imposed on symmetries (see
Theorem~\ref{T:JacobVerticalSymm} and
Corollary~\ref{C:TwistedVertSymm}), and illustrate our results in
the example of a nonholonomic particle and Chaplygin systems. In
Section \ref{S:Poisson} we define the Poisson structure $\Lambda$ on
$\M/G$, based on a suitable choice of $G$-invariant vertical
complement $W$ of $D$ (see Prop.~\ref{P:poisson1}). We also define,
in two equivalent ways, a Poisson structure $\Lambda_0$ on
 $W^\circ\!/G$ (see Propositions~\ref{P:W0isPoisson} and \ref{P:W0-Poisson2})
which is independent of the kinetic metric, and prove that it is
Poisson diffeomorphic to $(\M/G, \Lambda)$
(Prop.~\ref{P:k-k0isomorphism}). In Section~\ref{S:KerVertical} we
give a description of the symplectic leaves of the Poisson structure
$\Lambda$ on $\M / G$ (in terms of the leaves of the canonical
Poisson structure on $T^*Q/G$, see Theorem~\ref{T:PoissonPw}), which
turns out to be related to the nonholonomic momentum map (see
Theorem~\ref{T:JnhIsMomenMapforJK} and
Corollary~\ref{C:LeavesOfPiP}). We also analyze, in Theorem
\ref{T:Reduced-Dyn}, when the reduced nonholonomic brackets can be
written as gauge transformations of the Poisson structure $\Lambda$.
In Section~\ref{S:Examples}, we work out several mechanical examples
illustrating our results, including the process of hamiltonization of the homogeneous sphere rolling without sliding on a cylinder using a suitable gauge transformation. We have also included an Appendix
collecting some facts about the reduction of presymplectic forms
used in the paper.



\medskip

\noindent {\it Acknowledgments}: I thank FAPERJ, CNPq (Brazil) and the GMC
Network (projects MTM2010-12116-E and MTM2012-34478, Spain), especially J.C. Marrero,
E. Padron and D. Martin de Diego, for supporting this project. I
have also benefited from attending the conferences {\it Applied
Dynamics and Geometric Mechanics}, held in Oberwolfach, and the {\it
Focus Program on Geometry, Mechanics and Dynamics, the Legacy of
Jerry Marsden}, held at the Fields Institute, so I thank the various
organizers, particularly A. Bloch and T. Ratiu, for their
invitations. I am grateful to Alejandro Cabrera, Oscar Fernandez,
Luis Garcia-Naranjo, David Iglesias-Ponte, Jair Koiller and Tom
Mestdag for several interesting discussions and encouragement, and
Henrique Bursztyn
 for his many comments on the manuscript. I particularly thank Oscar Fernandez for pointing out the
snakeboard example and Luis Garcia-Naranjo for drawing my attention to the homogeneous ball rolling on a cylinder.
Finally I thank the referees for their valuable comments that led to several improvements in the paper. 

\section{Almost Poisson structures and nonholonomic systems} \label{S:Poisson-NHsystem}

\subsection{Almost Poisson brackets and gauge transformations.}  \label{Sss:GaugeTransf}

In this section we recall basic definitions related to almost Poisson manifolds.

An {\it almost Poisson bracket} $\{ \cdot , \cdot \}$ on a manifold $P$ is an $\R$-bilinear bracket $\{ \cdot , \cdot \}: C^\infty (P) \times C^\infty(P) \to C^\infty(P)$ that is skew-symmetric and satisfies the Leibniz condition:
$$
\{fg,h\} = f \{g,h\} + \{f,h\}g, \qquad \mbox{for} \ f,g,h \in C^\infty(P).$$
If the Jacobi identity is also satisfied then the bracket $\{ \cdot , \cdot \}$ is called a {\it Poisson} structure.

Let $P$ be a  manifold equipped with an almost Poisson bracket $\{\cdot , \cdot \}$.   The {\it hamiltonian vector field} $X_f$ of a function $f\in C^\infty(P)$ is the vector field on $P$
defined by
$$
X_f(g)=\{g,f\},$$ for all $g\in C^\infty(P)$.
The \emph{characteristic distribution} of $\{\cdot , \cdot \}$ is the distribution on the manifold $P$ whose fibers are spanned by the hamiltonian vector fields. In general the characteristic distribution is not integrable. If it is integrable, so that it is tangent to leaves (of possibly varying dimensions), then each leaf inherits a nondegenerate 2-form; i.e., an almost Poisson structure with integrable characteristic distribution gives rise to a (singular) foliation with almost symplectic leaves. When an almost Poisson bracket is Poisson then its characteristic distribution is integrable and each leaf is symplectic.

Due to the Leibniz condition there is a correspondence between almost Poisson brackets and bivector fields $\pi \in \Gamma (\bigwedge ^2 TP)$ given by $$\pi(df, dg) = \{f , g \},$$ for $f, g \in C^\infty(P)$. We will work indistinguishably with almost Poisson brackets and bivector fields. We  denote by $\pi^\sharp : T^*P \rightarrow TP$ the map such that   $\beta(\pi^\sharp(\alpha)) = \pi(\alpha, \beta)$. More generally, if $\phi \in \Omega^k(P)$, then $\pi^\sharp(\phi)$ is a $k$-vector field defined on 1-forms $\a_1,...,\a_k$ by $\pi^\sharp(\phi)(\a_1,...,\a_k) = (-1)^k\phi(\pi^\sharp(\a_1),...,\pi^\sharp(\a_k))$.

Note that the characteristic distribution of $\pi$ is the image of
$\pi^\sharp$ and the hamiltonian vector field of $f \in C^\infty(P)$ is  $X_f = - \pi^\sharp(df)$.
The 3-vector field $[\pi, \pi]$, where $[\cdot,\cdot]$ is the
Schouten bracket (see e.g. \cite{GelfandDorfman79, GelfandDorfman80} or \cite{DufourZung}),  measures the
failure of the Jacobi identity of $\{\cdot,\cdot\}$ through the relation
\begin{equation}
\frac{1}{2}[\pi,\pi](df, dg, dh) = \{f, \{g,h\}\}+ \{g,\{h,f\}\} + \{h,\{f,g\}\}, \label{E:Jacobi}
\end{equation}
for $f, g, h \in C^\infty(P)$.

A bivector field $\pi$ on $P$ induces a bracket $[\cdot, \cdot ]_\pi$ on sections of $T^*P$ defined by \begin{equation}
[\alpha, \beta]_\pi = \pounds_{\pi^\sharp(\alpha)} \beta-
\pounds_{\pi^\sharp(\beta)}\alpha - d(\pi(\alpha, \beta))  =
\pounds_{\pi^\sharp(\alpha)} \beta -  {\bf i}_{\pi^\sharp(\beta)}
d\alpha. \label{eq:LAPoisson}
\end{equation} The bracket $[\cdot, \cdot ]_\pi$  is $\R$-bilinear, skew-symmetric and satisfies the Leibniz identity. In general, $\pi^\sharp$ is not necessarily bracket preserving; instead (see e.g. \cite{GelfandDorfman79, GelfandDorfman80} and \cite[Sec.~2.2]{BCrainic} for a proof),
\begin{equation}
\pi^\sharp ( [\alpha,\beta]_\pi) = [\pi^\sharp(\alpha),\pi^\sharp(\beta)] - \frac{1}{2} {\bf i}_{\alpha\wedge \beta} [\pi, \pi], \qquad \mbox{for } \alpha, \beta
\in \Omega^1(P). \label{E:not_Morph}
\end{equation}
If the bivector field $\pi$ is Poisson, i.e., $[\pi, \pi]=0$,  then there is Lie algebroid structure induced on $T^*P$ with bracket $[ \cdot, \cdot]_\pi$ and anchor map $\pi^\sharp: T^*P \to TP$, see e.g. \cite{DufourZung}.

The proof of Theorem \ref{T:GeneralJacobiatorNH} below will need the following formula. 
\begin{lemma}\label{L:CyclicFormula} If $(P, \pi)$ is an almost Poisson manifold, then for $f,g,h \in C^\infty(P)$,
\begin{equation}
 \frac{1}{2} [\pi, \pi] (df,dg,dh) = cycl. \left[ dh(\pi^\sharp ( [df,dg]_\pi) ) + dh( [\pi^\sharp(df),\pi^\sharp(dg)]) \right]. \label{E:Cyclic_not_Morph}
\end{equation}
\end{lemma}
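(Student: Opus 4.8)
The plan is to derive \eqref{E:Cyclic_not_Morph} directly from the two structural identities already recorded in the excerpt, namely the cyclic expression \eqref{E:Jacobi} for $\frac12[\pi,\pi]$ in terms of iterated brackets, and the ``failure of morphism'' identity \eqref{E:not_Morph}. The key observation is that the bracket $[\cdot,\cdot]_\pi$ on $1$-forms, as defined in \eqref{eq:LAPoisson}, is designed so that when applied to exact forms it reproduces the differential of a Poisson-type bracket: concretely, I expect $[df,dg]_\pi = d\{f,g\}$ for all $f,g\in C^\infty(P)$. First I would verify this: using the second expression in \eqref{eq:LAPoisson}, $[df,dg]_\pi = \pounds_{\pi^\sharp(df)}dg - {\bf i}_{\pi^\sharp(dg)}d(df) = \pounds_{\pi^\sharp(df)}dg$ since $d(df)=0$, and then $\pounds_{\pi^\sharp(df)}dg = d({\bf i}_{\pi^\sharp(df)}dg) = d(dg(\pi^\sharp(df))) = d(\pi(df,dg)) = d\{f,g\}$ by Cartan's magic formula (the ${\bf i}\,d$ term vanishes on the closed form $dg$). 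Hence $[df,dg]_\pi$ is exact and equals $d\{f,g\}$.

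Next I would pair the identity \eqref{E:not_Morph}, specialized to $\alpha = df$, $\beta = dg$, with the $1$-form $dh$. Writing $\langle \cdot,\cdot\rangle$ for the natural pairing of a vector field with a $1$-form, evaluating \eqref{E:not_Morph} against $dh$ gives
\begin{equation*}
dh\big(\pi^\sharp([df,dg]_\pi)\big) = dh\big([\pi^\sharp(df),\pi^\sharp(dg)]\big) - \tfrac12\, {\bf i}_{df\wedge dg}[\pi,\pi]\,(dh).
\end{equation*}
Using the antisymmetry of $[\pi,\pi]$ as a $3$-vector field, the last term is $-\tfrac12[\pi,\pi](df,dg,dh)$ up to sign bookkeeping, so rearranging yields
\begin{equation*}
\tfrac12[\pi,\pi](df,dg,dh) = dh\big([\pi^\sharp(df),\pi^\sharp(dg)]\big) - dh\big(\pi^\sharp([df,dg]_\pi)\big) = dh\big([\pi^\sharp(df),\pi^\sharp(dg)]\big) - dh\big(\pi^\sharp(d\{f,g\})\big),
\end{equation*}
using the first step. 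This already expresses the Schouten bracket in closed form; to bring it to the symmetric cyclic shape of \eqref{E:Cyclic_not_Morph} I would observe that the left side is totally antisymmetric (in particular invariant under cyclic permutation of $f,g,h$), so it equals the cyclic average of the right side, which is exactly $\mathrm{cycl.}\big[\,dh(\pi^\sharp([df,dg]_\pi)) + dh([\pi^\sharp(df),\pi^\sharp(dg)])\,\big]$ once one notes that $\pi^\sharp([df,dg]_\pi) = \pi^\sharp(d\{f,g\}) = -X_{\{f,g\}}$ and the sign works out consistently with \eqref{E:Jacobi}. As an alternative, fully self-contained route, I could just expand $dh([\pi^\sharp(df),\pi^\sharp(dg)])$ using $[\pi^\sharp(df),\pi^\sharp(dg)] = -[X_f,X_g]$ and the Lie-bracket identity $Z(h) $ for $Z=[X,Y]$ equal to $X(Y(h)) - Y(X(h))$, then sum cyclically and match term-by-term against the triple-bracket expression \eqref{E:Jacobi}; the cross terms involving $\pi^\sharp([df,dg]_\pi)$ are precisely what reconcile the ``non-morphism'' defect.

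The main obstacle I anticipate is sign and convention bookkeeping rather than any conceptual difficulty: one must be careful about the sign conventions relating $X_f$ to $\pi^\sharp(df)$ (the excerpt fixes $X_f = -\pi^\sharp(df)$), the sign in the definition $\pi^\sharp(\phi)(\alpha_1,\dots,\alpha_k) = (-1)^k\phi(\pi^\sharp(\alpha_1),\dots)$, and the coefficient $\tfrac12$ and the interior-product sign in \eqref{E:not_Morph}. I would handle this by consistently reducing everything to the iterated-bracket formula \eqref{E:Jacobi}, treating it as the ground truth, and checking that the proposed right-hand side of \eqref{E:Cyclic_not_Morph} expands to the same cyclic sum of second derivatives $\{f,\{g,h\}\} + \{g,\{h,f\}\} + \{h,\{f,g\}\}$. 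Once the two sides agree term-by-term on that expansion, the lemma follows.
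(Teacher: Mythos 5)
Your proposal is correct in substance, and its core ingredients are exactly the paper's: the identity $[df,dg]_\pi = d\{f,g\}$ (which you verify correctly) combined with the cyclic expression \eqref{E:Jacobi} for $\tfrac{1}{2}[\pi,\pi]$ — indeed the paper's proof is just the one-line remark that the lemma follows from these two facts, i.e.\ your ``fully self-contained'' fallback of expanding $dh(\pi^\sharp(d\{f,g\}))=\{\{f,g\},h\}$ and $dh([\pi^\sharp(df),\pi^\sharp(dg)])=\{\{h,g\},f\}-\{\{h,f\},g\}$ and matching the cyclic sum against \eqref{E:Jacobi}. Your primary route through \eqref{E:not_Morph} is a detour the paper does not take, and as stated it is loose at one point: pairing \eqref{E:not_Morph} with $dh$ yields $\tfrac{1}{2}[\pi,\pi](df,dg,dh) = dh([\pi^\sharp(df),\pi^\sharp(dg)]) - dh(\pi^\sharp([df,dg]_\pi))$, with a minus sign and no cyclic sum, and this does not turn into the plus-sign cyclic formula \eqref{E:Cyclic_not_Morph} by ``cyclic averaging'' alone. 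The two expressions do agree, but only because $\mathrm{cycl}\left[\, dh(\pi^\sharp(d\{f,g\}))\,\right] = \mathrm{cycl}\left[\,\{\{f,g\},h\}\,\right] = -\big(\{f,\{g,h\}\}+\{g,\{h,f\}\}+\{h,\{f,g\}\}\big)$, i.e.\ one must invoke \eqref{E:Jacobi} and $[df,dg]_\pi=d\{f,g\}$ once more — which is precisely what the direct expansion does anyway. There is also a small sign slip: with $X_f=-\pi^\sharp(df)$ one has $[\pi^\sharp(df),\pi^\sharp(dg)]=[X_f,X_g]$, not $-[X_f,X_g]$. Neither issue is fatal, since the verification strategy you commit to at the end (reduce everything to \eqref{E:Jacobi} and check the cyclic sum term by term) is sound and coincides with the paper's argument; the \eqref{E:not_Morph} route mainly buys a quicker, single-term (non-cyclic) formula for the Jacobiator at the cost of the sign bookkeeping you flagged.
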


\begin{proof} First, observe that the bracket defined in \eqref{eq:LAPoisson} satisfies $[df, dg]_{\pi} = d\{f, g\}$ and thus we have that $cycl.[dh(\pi^\sharp ( [df,dg]_\pi) ) ] = - cycl.[\{f,\{g,h\}\}]$. On the other hand, using \eqref{E:Jacobi} and \eqref{E:not_Morph} we get that 
\begin{equation*} \begin{split} 
cycl.[ dh( [\pi^\sharp(df),\pi^\sharp(dg)]) ] & =  cycl.[ dh(\pi^\sharp([df,dg]_\pi)) + \frac{1}{2} [\pi, \pi](df,dg,dh) \, ] \\
& = - cycl.[\{f,\{g,h\}\}] +3 \,cycl.[\{f,\{g,h\}\}].
\end{split} \end{equation*}
Summing up, and again by \eqref{E:Jacobi}, we obtain the desired result. 
\end{proof}

A regular distribution $F \subseteq TP$ and a  nondegenerate 2-section $\Omega_F \in \Gamma(\bigwedge^2F^*)$
define uniquely a bivector field $\pi$ on $P$ by the relation that, for $\alpha \in T^*P$ and $X \in F$,
\begin{equation} \label{Eq:bivector-section}
{\bf i}_{X} \Omega_F = \alpha |_F \qquad  \Longleftrightarrow \qquad \pi^\sharp(\alpha) = - X,
\end{equation}
where $\alpha|_F$ denotes the point-wise restriction of $\alpha$ to $F$.
Observe that, in this case, $F$ is the characteristic distribution of the bivector $\pi$. Moreover, any bivector field $\pi$ with regular characteristic distribution arises in this way. In particular, if $F$ is a regular distribution on $P$ and $\Omega \in \Omega^2(P)$ is a 2-form such that $\Omega |_F$ is nondegenerate then the pair $(F, \Omega)$ defines a bivector field on $P$.
 Note that in terms of the bracket $\{ \cdot , \cdot \}$ associated to $\pi$, \eqref{Eq:bivector-section} is written, for $f,g \in C^\infty(P)$, as
$$
\{f,g\} = \Omega_F(\pi^\sharp(df), \pi^\sharp(dg) ).$$

Given a nondegenerate 2-form $\Omega$ on $P$, \eqref{Eq:bivector-section} defines an associated bivector field $\pi$ for $F=TP$. In fact, there is a one-to-one correspondence between nondegenerate 2-forms and nondegenerate bivectors (i.e., those for which the map $\pi^\sharp: T^*P\to TP$ is an isomorphism). The Jacobiator in this case satisfies
$$
-d\Omega(\pi^\sharp(df), \pi^\sharp(dg), \pi^\sharp(dh)) =  \{f, \{g,h\}\}+ \{g,\{h,f\}\} + \{h,\{f,g\}\},
$$
for $f,g,h \in C^\infty(P)$, which is equivalent to
\begin{equation}\label{Eq:Jacdomega}
\frac{1}{2}[\pi,\pi]=\pi^\sharp(d\Omega).
\end{equation}

\subsubsection*{Gauge transformations of bivector fields}

It will be convenient to consider a special equivalence relation between bivector fields defined by the so-called {\it gauge transformations} \cite{SeveraWeinstein}.

Let $\pi$ be a bivector field on $P$ and $B$ be a 2-form on $P$ such that the endomorphism $(\textup{Id} + B^\flat\circ \pi^\sharp)$ of $T^*P$ is invertible\footnote{A gauge transformation of a bivector field can be defined for any 2-form but, if this invertibility condition is not satisfied, the result is an almost Dirac structure \cite{SeveraWeinstein}.}, where $B^\flat:TP \to T^*P$ is given by $B^\flat (X) = {\bf i}_XB$. The {\it gauge transformation of $\pi$ by the 2-form $B$} is the bivector field $\pi_\B$ on $P$ such that
\begin{equation} \label{Eq:IntroToGauge} \pi_\B^\sharp = \pi^\sharp \circ (\textup{Id} + B^\flat\circ \pi^\sharp)^{-1}. \end{equation}

Two bivectors $\pi$ and $\tilde \pi$ are called {\it gauge related} if there exists a 2-form $B$ defining a gauge transformation from $\pi$ to $\tilde \pi$.

\begin{remark}\label{R:GaugeT}
\
\begin{enumerate}
\item[$(i)$] Gauge related bivectors have the same characteristic distribution. In particular, if $\pi$ is a bivector field with an integrable characteristic distribution, then any gauge related bivector has the same associated foliation, even though the leafwise almost symplectic structures may change. Also note that gauge related brackets have the same Casimirs.

\item[$(ii)$] If $\pi$ is a regular bivector field defined by a distribution $F$ and a nondegenerate 2-section $\Omega_F$ then a {\it gauge transformation} of $\pi$ by the 2-form $B$ is the bivector field $\pi_\B$ defined by the distribution $F$ and the 2-section $\Omega_F - B|_F$.
Note that the invertibility of the endomorphism $(\textup{Id} + B^\flat\circ \pi^\sharp)$ is equivalent to  $\Omega_F - B|_F $ being nondegenerate.
\item[$(iii)$] If $\pi$ and $\tilde \pi$ are bivector fields defined by nondegenerate 2-forms $\Omega$ and $\tilde \Omega$ (via \eqref{Eq:bivector-section} with $F=TP$)  then they are automatically gauge related by the 2-form $B=\Omega-\tilde \Omega$.

\item[$(iv)$] Let $\pi$ and $\pi_\B$ be gauge related bivector fields by a 2-form $B$. If $\pi$ is Poisson and $B$ is  closed then $\pi_\B$ is Poisson, \cite{SeveraWeinstein}.
\end{enumerate}
\end{remark}

\begin{example}
On $\R^5$ with coordinates $(x,y,z,p_1,p_2)$, consider the following bivector field:
\begin{equation} \label{Ex:gauge}
\pi=a \frac{\partial}{\partial x} \wedge \frac{\partial}{\partial p_1} + \frac{\partial}{\partial y} \wedge \frac{\partial}{\partial p_2} -ab \frac{\partial}{\partial p_1} \wedge \frac{\partial}{\partial p_2},
\end{equation}
 where $a,b$ are non-zero functions, and the 2-form $B=b\, dx \wedge dy$. The endomorphism $\textup{Id} + B^\flat\circ \pi^\sharp$ is invertible since, written in the basis $\{dx,dy,dz,dp_1, dp_2\}$, the matrix $B^\flat\circ \pi^\sharp$ is upper triangular.  In order to compute the gauge transformation of $\pi$ by $B$, denoted by $\pi_\B$, we use equation \eqref{Eq:IntroToGauge} to see that
$$\pi^\sharp(dx)= \pi_\B^\sharp  \circ (dx + B^\flat\circ \pi^\sharp(dx)) = \pi_\B^\sharp(dx)$$ Doing the same for the rest of the variables we obtain that
$\pi_\B^\sharp (dp_1)= \pi^\sharp(ab\, dy +dp_1)$ and $\pi_\B^\sharp
(dp_2)= \pi^\sharp(-bdx+dp_2)$, while on the other elements of the
basis $\pi$ and $\pi_\B$ coincide. Therefore we obtain that
$\displaystyle{\pi_\B=a \frac{\partial}{\partial x} \wedge
\frac{\partial}{\partial p_1} + \frac{\partial}{\partial y} \wedge
\frac{\partial}{\partial p_2}.}$ Similarly, the gauge transformation
of $\pi$ by $-B$ is
\begin{equation} \label{Ex:PiB} \pi_{\mbox{\tiny $-B$}} = a \frac{\partial}{\partial x} \wedge \frac{\partial}{\partial p_1} + \frac{\partial}{\partial y} \wedge \frac{\partial}{\partial p_2} -2ab \frac{\partial}{\partial p_1} \wedge \frac{\partial}{\partial p_2}.
\end{equation}
 The three bivectors $\pi$, $\pi_\B$ and $\pi_{\mbox{\tiny $-B$}}$ are gauge related and $z$ is a Casimir for all of them. Finally, observe that $\tilde \pi = \frac{\partial}{\partial y} \wedge \frac{\partial}{\partial p_2} -ab \frac{\partial}{\partial p_1} \wedge \frac{\partial}{\partial p_2}
$ is not gauge related with $\pi$ since $x$ is a Casimir only for $\tilde \pi$.
\end{example}

\subsection{Nonholonomic systems and the nonholonomic bracket} \label{Ss:nh-systems}

Let us consider a nonholonomic system determined by a lagrangian
$\Lag:TQ \to \R$ of the form
$$
\Lag = \frac{1}{2}\kappa - U,
$$
 where $\kappa$ is a Riemannian metric (called the {\it kinetic energy metric}) on $Q$ and $U \in C^\infty(Q)$ is the potential energy, and a nonintegrable distribution $D$ on $Q$ (i.e., a non-involutive subbundle $D\subset TQ$), which describes the permitted velocities of the system. We denote the annihilator of $D$ by $D^\circ$.

Note that the Legendre transform  $\L : TQ \to T^*Q$ is
 a global diffeomorphism since  $\L= \kappa^\sharp$, where $\kappa^\sharp:TQ \to T^*Q$ is given by $\langle \kappa^\sharp(v), w \rangle = \kappa(v,w)$ for $v,w \in TQ$.

Let $\Ham \in C^\infty(T^*Q)$ be the hamiltonian function $\Ham:T^*Q \to \R$ associated with the lagrangian $\mathcal{L}$. In local coordinates $(q^i,p_i)$ of $T^*Q$ the hamiltonian is given by $\Ham(q^i,p_i) = (\frac{\partial \mathcal{L}}{\partial \dot q^i}. \dot q^i- \mathcal{L} )\circ \L ^{-1}$.   If locally, $D^\circ = \textup{span} \{ \epsilon^a \ : \ a= 1,...,k < n.\}$ where $\epsilon^a$ are 1-forms on $Q$, then the equations of motion can be written as a first order system on the cotangent bundle $T^*Q$ given by
\begin{equation}
\label{E:Eqns_of_motion_Ham_form} \dot q^i = \frac{\partial \Ham}{\partial p_i}, \qquad \dot p_i= - \frac{\partial
\Ham}{\partial q^i} + \lambda_a\epsilon^a_i, \qquad i=1,\dots,n,
\end{equation}
where $n = \textup{dim}\, Q$ and  $\lambda_a, a=1,\dots,k,$ are functions which are uniquely determined by the fact that the constraints are satisfied.
Then the constraint equations become
\begin{equation}
\label{E:Constraints_Ham_form} \epsilon_i^a(q) \frac{\partial \Ham}{\partial p_i}=0, \qquad a=1,\dots,k.
\end{equation}

The geometry underlying a nonholonomic system allows one to write the equations of motion in an intrinsic way.  Let $\M := \L(D) \subset T^*Q$ be the {\it constraint submanifold}. Since the Legendre transform is linear on the fibers, $\M$ is a vector subbundle of $T^*Q$. We denote by $\tau:\M \to Q$ the restriction to $\M$ of the canonical projection $\tau_Q : T^*Q \to Q$.

If $\Omega_Q$ is the canonical 2-form on $T^*Q$, we denote by $\Omega_\subM$ the 2-form on $\M$ defined by $\Omega_\subM := \iota^* \Omega_Q$ where $\iota : \M\to T^*Q$ is the natural inclusion.

The constraints are intrinsically written as a regular, non-integrable distribution $\C$ on $\M$ given, at each $m \in \M$, by
\begin{equation} \label{Eq:C}
\C_m= \{v \in T_m \M \ : \ T\tau (v) \in D_{\tau(m)}\}.
\end{equation}
A fundamental result given in \cite[Sec.~5]{BS93} is that the point-wise restriction of $\Omega_\subM$ to $\C$, denoted by $\Omega_\subM |_\C$, is nondegenerate.

The {\it nonholonomic vector field}  $X_\nh \in \mathfrak{X}(\M)$ is the vector field uniquely defined by
\begin{equation}
\label{Eq:NHDyncamics} {\bf i}_{X_\nh} \Omega_\subM |_\C = d\Ham_\subM |_\C,
\end{equation}
where $\Ham_\subM$ is the restriction to $\M$ of the hamiltonian function $\Ham$, i.e.,  $\Ham_\subM = \iota^*\Ham  : \M \to \R$ .
The integral curves of $X_\nh$ are solutions of the system \eqref{E:Eqns_of_motion_Ham_form}.

We define the {\it nonholonomic} bivector field  $\pi_{\mbox{\tiny nh}}$ to be the bivector associated to the distribution $\C$ and the 2-form $\Omega_\subM$, as in \eqref{Eq:bivector-section}
(see \cite{SchaftMaschke1994,Marle1998, IbLeMaMa1999}). Then $(\M, \pi_\nh)$ is an almost Poisson manifold.  The distribution $\C$ is the characteristic distribution of the bivector $\pi_\nh$ and since $\C$ is always not integrable,  $\pi_\nh$ is never a Poisson structure. The  nonholonomic bivector describes the dynamics in the sense that
$$
\pi_\nh^\sharp (d\Ham_\subM) = - X_\nh.
$$

So, in hamiltonian form, a nonholonomic system is described by the triple $(\M, \pi_\nh, \Ham_\subM)$, defined from the lagrangian $\Lag$ and the distribution $D$.

We call  $(\M, \pi_\nh)$ {\it the almost Poisson manifold associated to the nonholonomic system}.

In \cite{PL2011} we saw that it is worth to explore different bivectors $\tilde \pi$ on $\M$ such that $X_\nh   = - \tilde \pi^\sharp(d\Ham_\subM)$. In this case we say that $\tilde \pi$ also {\it describes the dynamics}.

In order to generate new bivector fields on $\M$ we will use {\it gauge transformations} of the classical nonholonomic bivector $\pi_\nh$ by 2-forms (as in Section \ref{Sss:GaugeTransf}). If we want to generate bivector fields describing the dynamics, we need the following definition given in \cite[Sec.4.2]{PL2011},
\begin{definition} \label{D:dynamicalGauge}
Let $(P, \pi)$ be an almost Poisson manifold with a distinguished Hamiltonian function $H \in C^\infty(P)$. Given a 2-form $B$ on $P$ such that  $(\textup{Id} + B^\flat\circ \pi^\sharp)$ is invertible,  the gauge transformation of $\pi$ associated to the 2-form $B$ is said to be a {\it dynamical gauge transformation} if
$${\bf i}_{X_H} B =0,
$$
where $X_H$ is the hamiltonian vector field associated to $H$.
\end{definition}
Thus, any bivector $\pi_\B$ that is dynamically gauge related to $\pi$ satisfies
$
\pi_\B^\sharp (dH) = \pi^\sharp(dH).
$
We are interested in using Definition \ref{D:dynamicalGauge} to perform a dynamical gauge transformation of the almost Poisson structure $\pi_\nh$, where the distinguished function is $\Ham_\subM$.

\begin{remark} \label{R:DynGaugeSemi-basic} In \cite[Proposition 11]{PL2011}, it was shown that if we consider a semi-basic 2-form $B$ with respect to the fiber bundle $\tau: \M \to Q$ ( i.e.,  ${\bf i}_X B =0$  if  $T\tau(X)=0$) then  $(\textup{Id} + B^\flat\circ \pi_\nh^\sharp)$ is invertible.

\end{remark}

\section{The Jacobiator of the nonholonomic bracket}
\label{S:JacobiatorGeneral}

Let $(\M, \pi_\nh)$ be the almost Poisson manifold associated to a nonholonomic system. In this section we present a coordinate-free formula for the Jacobiator of the classical nonholonomic bivector $\pi_\nh$, or any bivector gauge equivalent to it. This section will not use the hamiltonian function $\Ham_\subM$.

\subsubsection*{Splittings of $T\M$ adapted to the constraints}

Recall that the characteristic distribution of $\pi_\nh$ is the nonintegrable distribution $\C$ defined in \eqref{Eq:C}.
Since  $\C$ is a regular distribution on $\M$, it is possible to choose a complement $\W$ of $\C$  on $\M$ such that, for each $m \in \M$,
\begin{equation} \label{Eq:SplittingOfTM}
T_m\M = \C_m \oplus \mathcal{W}_m.
\end{equation}

Consider the projections $P_\subC :T\M \to \C$ and $P_\subW :T\M \to \W$ associated to the decomposition
\eqref{Eq:SplittingOfTM}.  Since $P_\subW :T\M \to \W$ can be seen as a $\W$-valued 1-form, following \cite{BKMM,MarsdenKoon}, we define the $\W$-valued 2-form ${\bf K}_\subW$ on $\M$ given by
\begin{equation}
\label{Def:K} {\bf K}_\subW(X,Y) = - P_\subW( [P_\subC (X), P_\subC(Y)] ) \qquad \mbox{for } X,Y \in \mathfrak{X}(\M).
\end{equation}

It is straightforward to check that ${\bf K}_\subW$ is $C^\infty (\M)$-bilinear. Note also that ${\bf K}_\subW \equiv 0$ if and only if $\C$ is involutive (see also \cite[Sec.~2.5]{MarsdenKoon}).

\begin{proposition} \label{Prop:Ksemi-basic}
The $\W$-valued 2-form ${\bf K}_\subW$  is semi-basic with respect to the bundle projection $\tau:\M \to Q$, i.e., \ ${\bf i}_X {\bf K}_\subW =0$ \ if \ $T\tau(X)=0$.
\end{proposition}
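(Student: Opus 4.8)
The plan is to use that the $\tau$-vertical subbundle $\ker T\tau \subseteq T\M$ is automatically contained in the constraint distribution $\C$: if $v\in T_m\M$ and $T\tau(v)=0$, then $T\tau(v)=0\in D_{\tau(m)}$, so $v\in\C_m$ by the definition \eqref{Eq:C}. Hence a vector field $X\in\mathfrak{X}(\M)$ with $T\tau(X)=0$ is a section of $\C$, so that $P_\subC(X)=X$ and $P_\subW(X)=0$; for such $X$ and any $Y\in\mathfrak{X}(\M)$ one then has
\[
{\bf K}_\subW(X,Y) = -P_\subW\big([X,P_\subC(Y)]\big).
\]
So the whole statement reduces to showing that $[X,Z]\in\Gamma(\C)$ whenever $X$ is $\tau$-vertical and $Z\in\Gamma(\C)$; granting this, the $\W$-component on the right vanishes and we are done. (Since ${\bf K}_\subW$ is $C^\infty(\M)$-bilinear, this is just the semi-basic condition restated on vector fields, and the argument below applies verbatim.)

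For the key step I would argue as follows. The local flow $\phi_t$ of a $\tau$-vertical $X$ preserves the fibres of $\tau$, i.e. $\tau\circ\phi_t=\tau$, because $\tfrac{d}{dt}(\tau\circ\phi_t) = T\tau(X)\circ\phi_t = 0$. Fix $m\in\M$, put $q=\tau(m)$, and use $\tau\circ\phi_{-t}=\tau$ together with the linearity of $T_m\tau$ to obtain
\[
T_m\tau\big([X,Z]_m\big) \;=\; \tfrac{d}{dt}\Big|_{t=0} T\tau\big(Z_{\phi_t(m)}\big).
\]
The right-hand side is the velocity at $t=0$ of the curve $\gamma(t) = T\tau(Z_{\phi_t(m)})$, which lies entirely in the single fibre $T_qQ$ (since $\tau(\phi_t(m))=q$ for all $t$) and, because $Z\in\Gamma(\C)$, takes values in the linear subspace $D_q\subseteq T_qQ$. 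Hence $\gamma'(0)\in D_q$, i.e. $T_m\tau([X,Z]_m)\in D_{\tau(m)}$, which by \eqref{Eq:C} means $[X,Z]_m\in\C_m$. As $m$ was arbitrary, $[X,Z]\in\Gamma(\C)$, and therefore ${\bf K}_\subW(X,Y)=0$.

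The only point requiring a little care will be this last manipulation — that the velocity of a differentiable curve contained in a fixed finite-dimensional linear subspace again lies in that subspace. It is elementary once one notices that all the vectors $Z_{\phi_t(m)}$ project into the \emph{one} fibre $T_qQ$, so that no connection or comparison of different fibres is involved; everything else is formal. If one prefers to avoid flows, the same conclusion follows by writing $\C$ locally via a $\tau$-projectable frame — consisting of lifts of a local frame of $D$ together with a frame of $\ker T\tau$ — and using that the bracket of two $\tau$-projectable vector fields is $\tau$-projectable, with projection equal to the bracket of the projections; here one of the projections is $0$ and the others lie in $D$.
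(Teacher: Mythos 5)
Your proof is correct. It rests on the same two observations as the paper's argument --- that $\textup{Ker}\, T\tau\subset\C$, so a $\tau$-vertical field $X$ satisfies $P_\subC(X)=X$, and that bracketing with such a field does not leave $\C$ --- but the execution differs. The paper verifies ${\bf K}_\subW(X,Y)=0$ only for $\tau$-projectable $Y\in\Gamma(\C)$: there $[X,Y]$ is $\tau$-related to $0$, hence vertical, hence in $\C$ precisely because $\textup{Ker}\, T\tau\subset\C$, and one concludes by tensoriality since $\tau$-projectable fields span $T\M$ pointwise. You instead prove directly that $[X,Z]\in\Gamma(\C)$ for an \emph{arbitrary} section $Z$ of $\C$, using that the flow $\phi_t$ of $X$ preserves the fibres of $\tau$ so that the curve $t\mapsto T\tau(Z_{\phi_t(m)})$ stays in the fixed linear subspace $D_{\tau(m)}$; this removes the reduction to projectable fields and the spanning argument, at the price of the (correct) flow computation for $T_m\tau([X,Z]_m)$. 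Your closing alternative via a $\tau$-projectable frame is essentially the paper's proof verbatim. Note that the paper's route gives the sharper conclusion that $[X,Y]$ is actually vertical when $Y$ is projectable, whereas yours only yields membership in $\C$ --- which is, however, all that is needed to kill the $\W$-component in \eqref{Def:K}.
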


\begin{proof}
Let us consider $X, Y \in \mathfrak{X}(\M)$ such that $X$ is $\tau$-related with 0 and $Y\in \Gamma(\C)$ is $\tau$-projectable. Since $\textup{Ker}\, T\tau \subset \C$ then $[P_\subC (X), P_\subC(Y)] = [X,Y]$ is $\tau$-related with 0.
Therefore, $[P_\subC (X), P_\subC(Y)] \in \Gamma(\C)$ and thus, from the expression \eqref{Def:K} we obtain that  ${\bf K}_\subW(X,Y) =0$.  This implies that ${\bf i}_X {\bf K}_\subW \equiv 0$ since $\tau$-projectable vector fields generate $T\M$ at each point.
\end{proof}

\subsubsection*{Gauge transformations adapted to the splitting}

From \eqref{Eq:IntroToGauge} we observe that a gauge transformation of $\pi_\nh$ by a 2-form $B$ is completely determined by the point-wise restriction of $B$ to $\C$ (i.e., by the section $B|_\C$ of $\bigwedge^2\C^*$). That is, if two 2-forms $B$ and $\bar B$ are such that $(B-\bar B)|_\C \equiv 0$, then $\pi_\B$ and $\pi_{\mbox{\tiny{$\bar{B}$}}}$ agree.

Note that for any 2-form $\bar{B}$, one can find another 2-form $B$ which agrees with $\bar B$ when restricted to $\C$, so that $\pi_\B = \pi_{\mbox{\tiny{$\bar{B}$}}}$, and that satisfies  the additional condition
\begin{equation} \label{Eq:Bsection}
{\bf i}_Z B \equiv 0 \qquad \mbox{for all } Z \in \Gamma(\mathcal{W}).
\end{equation}
So from the point of view of gauge transformations, there is no loss in generality in assuming that \eqref{Eq:Bsection} holds. This condition will be assumed in the sequel to simplify some formulas.

\subsubsection*{The Jacobiator of $\pi_\nh$ and gauge related bivectors}

Now we state the theorem which characterizes the Jacobiator of any bivector $\pi_\B$ gauge related to $\pi_\nh$. Even though the most interesting statement for this Theorem occurs when a Lie group is acting, we will first state it in the more general setting without the consideration of a symmetry.

Let us consider the almost Poisson manifold $(\M, \pi_\nh)$ associated to a nonholonomic system, as in Section \ref{Ss:nh-systems}.

\begin{theorem} \label{T:GeneralJacobiatorNH}
Let $\pi_\B$ be a bivector on $\M$ gauge related to $\pi_{\emph \nh}$ by the 2-form $B$. Let $\W$ be a complement of $\C$ and assume that $B$ verifies \eqref{Eq:Bsection}. Then for $\a, \beta, \gamma \in T^*\M$ we have
\begin{equation*}
\begin{split}
\frac{1}{2}[\pi_\B, \pi_\B] (\alpha, \beta, \gamma) =  & \ cycl.  \left[ \Omega_{\mbox{\tiny{$\M$}}} ( {\bf K}_\subW(\pi_\B^\sharp(\alpha), \pi_\B^\sharp(\beta)), \pi_\B^\sharp (\gamma) ) - \gamma ( {\bf K}_\subW(\pi_\B^\sharp(\alpha), \pi_\B^\sharp(\beta))  \, )
\right] \\
& + \ d B (\pi_\B^\sharp(\alpha), \pi_\B^\sharp(\beta), \pi_\B^\sharp (\gamma) ).
\end{split}
\end{equation*}
\end{theorem}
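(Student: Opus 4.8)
The plan is to use the cyclic formula for the Schouten bracket from Lemma~\ref{L:CyclicFormula} and systematically translate each term into the geometry adapted to the splitting $T\M = \C \oplus \W$. Since every 1-form $\alpha$ appears in the formula only through $\pi_\B^\sharp(\alpha) \in \C$, it is enough to prove the identity evaluated on three vector fields $X, Y, Z \in \Gamma(\C)$, writing $\alpha, \beta, \gamma$ for 1-forms with $\pi_\B^\sharp(\alpha) = X$, etc. First I would reduce the problem to the case $B \equiv 0$, i.e. $\pi_\B = \pi_\nh$: by Remark~\ref{R:GaugeT}(ii), $\pi_\B$ is the bivector defined by the same distribution $\C$ and the 2-section $(\Omega_\subM - B)|_\C$. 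Tracking how the new defining 2-section enters the two ingredients of Lemma~\ref{L:CyclicFormula} — the term $dh(\pi^\sharp([df,dg]_\pi))$ and the term $dh([\pi^\sharp(df),\pi^\sharp(dg)])$ — should produce, besides the "$\Omega_\subM$-term", an extra contribution governed by $dB$ restricted to $\C$, which is exactly the last term $dB(\pi_\B^\sharp(\alpha),\pi_\B^\sharp(\beta),\pi_\B^\sharp(\gamma))$; the condition \eqref{Eq:Bsection} is what makes this bookkeeping clean.

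For the core computation (take $B=0$), the key identity is the defining relation \eqref{Eq:bivector-section}: for $X \in \Gamma(\C)$, one has $\pi_\nh^\sharp(\alpha) = -X$ where $\alpha|_\C = {\bf i}_X \Omega_\subM|_\C$. The strategy is then: (1) rewrite $[\alpha,\beta]_{\pi_\nh}$ using \eqref{eq:LAPoisson} and the Cartan formula, so that $\pi_\nh^\sharp([\alpha,\beta]_{\pi_\nh})$ becomes an expression involving $\Omega_\subM(\cdot,\cdot)$ and Lie derivatives of $\Omega_\subM$; (2) expand $[\pi_\nh^\sharp(\alpha),\pi_\nh^\sharp(\beta)] = [X,Y]$ and split it via $P_\subC$ and $P_\subW$, where the $P_\subW$-part is exactly $-{\bf K}_\subW(X,Y)$ by definition \eqref{Def:K}; (3) contract the $P_\subC([X,Y])$-part against $\Omega_\subM$ — here the fact that $\C$ itself is non-integrable but $\Omega_\subM|_\C$ is nondegenerate is used — and observe that the genuinely "integrable/symplectic" contributions cancel cyclically because $\Omega_Q$ (hence, modulo $\C^\circ$-ambiguities, $\Omega_\subM$) is closed. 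What survives is precisely the curvature term ${\bf K}_\subW$, appearing once paired through $\Omega_\subM$ (from the bracket-of-hamiltonian-vector-fields term, via \eqref{Eq:bivector-section}) and once paired directly with $\gamma$ (from the anchor-non-morphism term, since $\gamma$ need not vanish on $\W$). Summing cyclically gives the stated formula.

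The main obstacle I anticipate is step~(3): controlling the contribution of $P_\subC([X,Y])$ and showing that, after cyclic summation, the only leftover is the ${\bf K}_\subW$-terms. Concretely, $\Omega_\subM$ is the pullback $\iota^*\Omega_Q$ of a closed form, but $\iota^*(d\Omega_Q) = 0$ does not immediately kill all terms because the 1-forms $\alpha$ are only determined up to $\C^\circ$ and the vector fields $X,Y,Z$ are only sections of $\C$, not of all of $T\M$; one must check that the ambiguous pieces enter antisymmetrically and drop out under $cycl.$. A clean way is to work with $\tau$-projectable sections of $\C$ and exploit Proposition~\ref{Prop:Ksemi-basic} (semi-basicness of ${\bf K}_\subW$) together with $\mathrm{Ker}\,T\tau \subset \C$ to reduce to a finite spanning set, then verify the identity pointwise. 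A secondary technical point is making sure the reduction from general $B$ to $B=0$ in the first paragraph genuinely produces $dB$ and not $d(B|_\C)$ extended arbitrarily — this is where \eqref{Eq:Bsection} and the remark that $\pi_\B$ depends only on $B|_\C$ must be invoked carefully, so that replacing $B$ by any representative with ${\bf i}_\W B = 0$ changes $dB(\pi_\B^\sharp\alpha,\pi_\B^\sharp\beta,\pi_\B^\sharp\gamma)$ only by terms that vanish (since all three arguments lie in $\C$ and the correction involves $\W$-directions).
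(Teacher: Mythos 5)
Your plan is essentially the route the paper takes: apply Lemma~\ref{L:CyclicFormula}, split $[\pi_\B^\sharp(\alpha),\pi_\B^\sharp(\beta)]$ through $\textup{Id}=P_\subC+P_\subW$ (the $\W$-part being $-{\bf K}_\subW$ by \eqref{Def:K}), use \eqref{Eq:bivector-section} for the defining 2-section $\tilde\Omega_\subM=\Omega_\subM-B$ of $\pi_\B$, and let closedness of $\Omega_\subM=\iota^*\Omega_Q$ turn the remaining cyclic sum into the $dB$-term, with \eqref{Eq:Bsection} converting the $\tilde\Omega_\subM$-pairing of the $\W$-component back into an $\Omega_\subM$-pairing. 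So the ingredients and their roles match the paper's proof.

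The one step that is wrong as written is your opening reduction. It is not true that the identity sees the covectors only through $\pi_\B^\sharp$: both sides depend on $\alpha,\beta,\gamma$ in the $\W$-directions --- on the right through $\gamma\bigl({\bf K}_\subW(\pi_\B^\sharp(\alpha),\pi_\B^\sharp(\beta))\bigr)$, since ${\bf K}_\subW$ is $\W$-valued, and on the left through terms like $\gamma\bigl([\pi_\B^\sharp(\alpha),\pi_\B^\sharp(\beta)]\bigr)$ coming from Lemma~\ref{L:CyclicFormula}, since such brackets need not lie in $\C$ (you acknowledge this later, which contradicts the reduction). So you cannot ``prove the identity on triples $X,Y,Z\in\Gamma(\C)$'' with chosen preimages; the correct reduction, used in the paper, is $C^\infty(\M)$-linearity in $\alpha,\beta,\gamma$, which lets you work with exact 1-forms and use $[df,dg]_{\pi_\B}=d\{f,g\}_\B$. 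Once you do this, the obstacle you anticipate in step (3) evaporates: $\Omega_\subM$ and $B$ are honest forms on $\M$, there is no $\C^\circ$-ambiguity, and the cyclic sum of the derivative and bracket terms is literally $-d\tilde\Omega_\subM(\pi_\B^\sharp(df),\pi_\B^\sharp(dg),\pi_\B^\sharp(dh))=dB(\pi_\B^\sharp(df),\pi_\B^\sharp(dg),\pi_\B^\sharp(dh))$; no projectable frames or Proposition~\ref{Prop:Ksemi-basic} are needed. Likewise your worry about $dB$ versus $d(B|_\C)$ is settled by fixing once and for all the representative satisfying \eqref{Eq:Bsection}: $dB$ then enters as a genuine 3-form evaluated on the hamiltonian vector fields, and \eqref{Eq:Bsection} is invoked only to replace $\tilde\Omega_\subM$ by $\Omega_\subM$ when pairing with $P_\subW\bigl([\pi_\B^\sharp(df),\pi_\B^\sharp(dg)]\bigr)$.
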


\begin{proof} Since this formula is $C^\infty(\M)$-linear on $\a, \beta, \gamma$, it is sufficient
to check it on exact forms. Applying formula \eqref{E:Cyclic_not_Morph} we obtain, for $f,g,h \in C^\infty(\M)$,
\begin{eqnarray}
\frac{1}{2}[\pi_\B, \pi_\B](df,dg,dh) \hspace{-0.6cm}  && = \,
 cycl. \left[ \pi_\B^\sharp ( [df, dg]_{\pi_\B})(h) +
[\pi_\B^\sharp(df), \pi_\B^\sharp(dg) ](h) \, \right] \nonumber \\
&& \begin{split} =  \ cycl.  & \left[-\pi_\B^\sharp (dh)(\{ f,g\}_\B) + dh
(P_{\mbox{\tiny $\C$}}([\pi_\B^\sharp(df), \pi_\B^\sharp(dg) ]) ) \right. \\
& \left. \ + \ dh (
P_{\mbox{\tiny ${\mathcal W}$}}([\pi_\B^\sharp(df), \pi_\B^\sharp(dg) ]) )
\, \right], \end{split} \label{Proof:eq1}
\end{eqnarray}
where we are using the fact that $\textup{Id} = P_{\mbox{\tiny $\C$}} + P_{\mbox{\tiny $\mathcal{W}$}}$.
If we call $\tilde \Omega_\subM := \Omega_\subM - B$, then by Remark~\ref{R:GaugeT}$(ii)$ and \eqref{Eq:bivector-section} we have ${\bf i}_{\pi_\B^\sharp(dh)} \tilde \Omega_\subM |_\C = -dh |_\C$ and thus
\begin{eqnarray*}
dh (P_{\mbox{\tiny $\C$}}([\pi_\B^\sharp(df), \pi_\B^\sharp(dg) ]) ) & = & - \tilde \Omega_\subM(\pi_\B^\sharp(dh),P_{\mbox{\tiny $\C$}}([\pi_\B^\sharp(df), \pi_\B^\sharp(dg) ]) ) \\
& = & - \tilde \Omega_\subM (\pi_\B^\sharp(dh),[\pi_\B^\sharp(df), \pi_\B^\sharp(dg) ] ) + \tilde \Omega_\subM(\pi_\B^\sharp(dh),P_{\mbox{\tiny
$\mathcal{W}$}}([\pi_\B^\sharp(df), \pi_\B^\sharp(dg) ]) ).
\end{eqnarray*}
Thus, \eqref{Proof:eq1} is equivalent to
\begin{equation}  \label{Proof:eq3}
\begin{split} \frac{1}{2}[\pi_\B, \pi_\B](df,dg,dh) = cycl.\,  & \left[ - \pi_\B^\sharp(dh) \left( \tilde \Omega_\subM(\pi_\B^\sharp(df),
\pi_\B^\sharp(dg)) \right)  + \tilde \Omega_\subM ([\pi_\B^\sharp(df), \pi_\B^\sharp(dg) ] , \pi_\B^\sharp(dh)) \right.  \\
& \left. - \tilde \Omega_\subM (P_{\mbox{\tiny $\mathcal{W}$}}([\pi_\B^\sharp(df), \pi_\B^\sharp(dg) ]), \pi_\B^\sharp(dh)) - {\bf K}_\subW (\pi_\B^\sharp(df), \pi_\B^\sharp(dg)) (h) \right] ,\end{split}
\end{equation}
where the last term is the definition of ${\bf K}_\subW$ given in \eqref{Def:K}.

Note that the cyclic sum of the first two terms of \eqref{Proof:eq3} is just $-d \tilde \Omega_\subM (\pi_\B^\sharp(df), \pi_\B^\sharp(dg), \pi_\B^\sharp(dh))$ which is equal to \ $d B(\pi_\B^\sharp(df), \pi_\B^\sharp(dg), \pi_\B^\sharp(dh))$. \ By  \ \eqref{Eq:Bsection} we \ also \ have \ that \ $\displaystyle{ {\bf i}_{P_{\mbox{\tiny $\mathcal{W}$}}([\pi_\B^\sharp(df), \pi_\B^\sharp(dg) ])} \tilde \Omega_\subM = }$ $\displaystyle{ = {\bf i}_{P_{\mbox{\tiny $\mathcal{W}$}}([\pi_\B^\sharp(df), \pi_\B^\sharp(dg) ])}  \Omega_\subM}$. Finally, again using \eqref{Def:K}, we obtain from \eqref{Proof:eq3}
\begin{equation*}
\begin{split} \frac{1}{2}[\pi_\B, \pi_\B](df,dg,dh) =  & \ \ d B (\pi_\B^\sharp(df), \pi_\B^\sharp(dg), \pi_\B^\sharp(dh)) \\
& + cycl \, \left[ \Omega_\subM ({\bf K}_\subW (\pi_\B^\sharp(df), \pi_\B^\sharp(dg) ), \pi_\B^\sharp(dh)) -{\bf K}_\subW(\pi_\B^\sharp(df),
\pi_\B^\sharp(dg)) (h) \right].
\end{split}
\end{equation*}

\end{proof}

\bigskip

As a particular case of the previous Theorem, we obtain a formula for the Jacobiator of the nonholonomic bivector $\pi_\nh$.

\begin{corollary} \label{C:MarsdenKoon}
The Jacobiator of the classical nonholonomic bivector $\pi_{\emph \nh}$ is written, for a choice of $\W$, as
$$
\frac{1}{2}[\pi_{\emph \nh}, \pi_{\emph \nh}] (\alpha, \beta, \gamma)=   cycl \, \left[  \Omega_\subM ({\bf K}_\subW (\pi^\sharp_{\emph \nh}(\alpha), \pi^\sharp_{\emph \nh}(\beta) ), \pi^\sharp_{\emph \nh}(\gamma)) - \gamma \left( {\bf K}_\subW(\pi^\sharp_{\emph \nh}(\alpha), \pi^\sharp_{\emph \nh}(\beta)) \right) \right].
$$
\end{corollary}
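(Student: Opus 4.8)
The plan is to derive Corollary \ref{C:MarsdenKoon} as the special case $B=0$ of Theorem \ref{T:GeneralJacobiatorNH}. This is immediate once one checks that $\pi_\nh$ itself fits the hypotheses of the theorem with the zero 2-form: indeed, $\pi_\nh$ is gauge related to $\pi_\nh$ by $B=0$ (the endomorphism $\textup{Id}+B^\flat\circ\pi_\nh^\sharp=\textup{Id}$ is trivially invertible), and the condition \eqref{Eq:Bsection}, namely ${\bf i}_Z B\equiv 0$ for all $Z\in\Gamma(\W)$, holds vacuously. The complement $\W$ of $\C$ in $T\M$ is exactly the data appearing in the corollary's phrasing ``for a choice of $\W$'', and ${\bf K}_\subW$ is the same $\W$-valued 2-form defined in \eqref{Def:K}.

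With $B=0$ we have $\pi_\B=\pi_\nh$, so the term $dB(\pi_\B^\sharp(\alpha),\pi_\B^\sharp(\beta),\pi_\B^\sharp(\gamma))$ in the conclusion of Theorem \ref{T:GeneralJacobiatorNH} vanishes identically, and the remaining cyclic sum becomes precisely
$$
\tfrac{1}{2}[\pi_\nh,\pi_\nh](\alpha,\beta,\gamma)
= cycl.\left[\Omega_\subM({\bf K}_\subW(\pi_\nh^\sharp(\alpha),\pi_\nh^\sharp(\beta)),\pi_\nh^\sharp(\gamma))
- \gamma\big({\bf K}_\subW(\pi_\nh^\sharp(\alpha),\pi_\nh^\sharp(\beta))\big)\right],
$$
which is the claimed formula. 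No genuine computation is required beyond substituting $B=0$ into the already-proved identity; the only thing to note is that the choice of $\W$ is free, exactly as in the theorem.

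There is really no main obstacle here: the corollary is a specialization, and all the work lies in Theorem \ref{T:GeneralJacobiatorNH}, which we are entitled to assume. If one wanted a self-contained argument one could instead repeat the proof of the theorem directly with $\tilde\Omega_\subM=\Omega_\subM$ (i.e. replace $\pi_\B$ by $\pi_\nh$ and drop the $-B$ term throughout), invoking Lemma \ref{L:CyclicFormula}, the identity ${\bf i}_{\pi_\nh^\sharp(dh)}\Omega_\subM|_\C=-dh|_\C$ from \eqref{Eq:bivector-section}, the decomposition $\textup{Id}=P_\subC+P_\subW$, and the definition \eqref{Def:K} of ${\bf K}_\subW$; the cyclic sum of the ``symplectic'' terms then telescopes into $-d\Omega_\subM(\pi_\nh^\sharp(df),\pi_\nh^\sharp(dg),\pi_\nh^\sharp(dh))$, which is zero because $\Omega_\subM=\iota^*\Omega_Q$ is closed. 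But the cleanest route is simply to quote the theorem with $B=0$.
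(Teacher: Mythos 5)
Your proposal is correct and matches the paper exactly: the corollary is obtained by specializing Theorem \ref{T:GeneralJacobiatorNH} to $B=0$, for which condition \eqref{Eq:Bsection} holds trivially and the $dB$ term drops out. The paper treats this as an immediate particular case of the theorem, just as you do.
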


In \cite[Sec.~2.5]{MarsdenKoon} , Marsden and Koon presented a local version of the above formula for an explicit choice of $\W$ that is induced by the local coordinates. Also, in agreement with  \cite{MarsdenKoon}, we see that  $\pi_\nh$ is Poisson if and only if  ${\bf K}_\subW \equiv 0$.


\section{The Jacobiator of reduced nonholonomic brackets} \label{S:Jacobiator-Symmetries}

In this section we use Theorem \ref{T:GeneralJacobiatorNH} to obtain a formula for the Jacobiator of any bivector field $\pi_\B$ gauge related to $\pi_\nh$ when the nonholonomic system admits symmetries. This formula will be in terms of a curvature-like object and the momentum associated to the cotangent lift of the group action. This will lead to a formula for the Jacobiator of reduced brackets, and it will provide information about the integrability of their characteristic distributions.

\subsection{Nonholonomic systems with symmetries} \label{Ss:DefwithSymmetries}

Let $G$ be a Lie group acting on $Q$ freely and properly. Suppose that $G$ is a symmetry of a given nonholonomic system, that is, the lifted action on $TQ$ leaves the lagrangian $\Lag$ and the constraints $D$ invariant. Then, $\M$ is an invariant submanifold of $T^*Q$ by the cotangent lift of the action.
Therefore we restrict the $G$-action on $T^*Q$ to $\M$:
$$
\varphi: G \times \M \to \M,
$$
so $\xi_{\M}(m) = \xi_{T^*Q}(m)$. As a consequence of the $G$-invariance of the constraints and lagrangian, the nonholonomic bivector $\pi_\nh$ and the hamiltonian function $\Ham_\subM$ are invariant by the $G$-action $\varphi$.

Throughout this section, we will not use explicitly the hamiltonian function $\Ham_\subM$; instead we will focus on the $G$-invariant distribution $D$ on $Q$ and the $G$-invariant almost Poisson manifold $(\M, \pi_\nh)$.

\subsubsection*{The dimension assumption}

At each $m \in \M$, let us denote by $\V_m := T_m (\textup{Orb}(m))$ the tangent to the orbit at $m$ of the $G$-action on $\M$.
 We will always assume that the following condition, known as the {\it dimension assumption}, holds:
\begin{equation} \label{Eq:DimensionAssumption}
T_m\M = \C_m + \V_m, \qquad \mbox{for all } m \in \M.
\end{equation}

Let us denote by $\S$ the distribution on $\M$ given, at each $m \in \M$, by
\begin{equation} \label{Eq:S}
\S_m=\mathcal{V}_m \cap \C_m.
\end{equation}
It is clear from \eqref{Eq:DimensionAssumption} that $\S$ has constant rank.
If, at each $q\in Q$, we denote by $V_q :=T_q(\textup{Orb}(q))$ the tangent to the orbit associated to the $G$-action on $Q$, \eqref{Eq:DimensionAssumption} is equivalent to the dimension assumption considered in \cite{BKMM}:
$$
T_qQ= D_q + V_q , \qquad \mbox{for all } q \in Q,
$$ by the definition of $\C$ and since, at each $m \in \M$, $\V_m$ and $V_{\tau(m)}$ are isomorphic. Analogously, we can define the distribution $S$ on $Q$ given by $S_q:=D_q \cap V_q$ for each $q \in Q$.

 If $\S = \{0\}$, the system is called {\it Chaplygin} and the dimension assumption is just
\begin{equation} \label{Eq:Chaplygin}
T\M = \C \oplus \V.
\end{equation}

From now on, we will always assume that the symmetry Lie group acts
freely and properly on $Q$  so that the dimension assumption is
satisfied.


\subsubsection*{The momentum map}

Recall that the submanifold $\iota: \M \hookrightarrow T^*Q$ is invariant under the cotangent lift of a $G$-action by symmetries.
Let us denote $\Omega_\subM = \iota^* \Omega_Q$ and $\Theta_\subM =
\iota^* \Theta_Q$, where $\Omega_Q$ and $\Theta_Q$ are the canonical 2-form and the Liouville 1-form on $T^*Q$,  respectively. We let ${\mathcal J}: \M \to \mathfrak{g}^*$ be the restriction to $\M$ of the canonical momentum map of the lifted action on $T^*Q$:
\begin{equation} \label{Eq:Def-MomentumMapOnM}
\langle {\mathcal J}(m) ,\eta \rangle = {\bf i}_{\eta_\subM(m)} \Theta_\subM (m), \qquad \mbox{for } \eta \in \mathfrak{g}.
\end{equation}
Note that ${\bf i}_{\eta_\subM} \Omega_\subM = d \langle {\mathcal J} , \eta \rangle$. More generally, we will need to consider vector fields $\eta_\M$ associated with elements $\eta \in \Gamma(\M\times \mathfrak{g})=C^\infty(\M,\mathfrak{g})$ which are not constant (i.e., $\eta_\M(m):= (\eta_m)_\M(m)$); in this case, we have
\begin{equation}\label{Eq:hamsection}
{\bf i}_{\eta_\subM} \Omega_\subM =  \langle d{\mathcal J} , \eta \rangle.
\end{equation}

\subsubsection*{Splittings of $T\M$ adapted to the symmetries and constraints} \label{SubSec:NhConnection}

Let $(\M, \pi_\nh)$ be the $G$-invariant almost Poisson manifold associated to a nonholonomic system with symmetries satisfying the dimension assumption \eqref{Eq:DimensionAssumption}.
Let $\W$ be a $G$-invariant distribution on $\M$ such that, at each $m \in \M$,
\begin{equation}
T_m\M = \C_m \oplus \mathcal{W}_m   \qquad \mbox{and} \qquad \W_m \subseteq \V_m.  \label{Eq:SplitbyConnection}
\end{equation}
Note that $\V_m= \S_m \oplus \W_m$.  We refer to such $\W$ as a {\it $G$-invariant vertical complement} of $\C$ in $T\M$.

Since the projection $P_\subW: T\M \to \W$ associated to the decomposition \eqref{Eq:SplitbyConnection}
is an equivariant map (i.e., $P_\subW (T\varphi_g(v_m))= T\varphi_g (P_\subW (v_m))$), it induces a $\mathfrak{g}$-valued 1-form $\mathcal{A}_\subW$ by
\begin{equation} \label{Eq:AkinPw}
 {\mathcal A}_\subW(v_m) = \xi \qquad \Longleftrightarrow \qquad P_\subW(v_m) = \xi_\subM(m), \quad \mbox{for} \ v_m \in T_m\M.
\end{equation}
Observe that $\mathcal{A}_\subW:T\M \to \mathfrak{g}$ is, in general, not surjective; note also that it satisfies $\mathcal{A}_\subW(T\varphi_g(v_m))=Ad_g\mathcal{A}_\subW(v_m)$ for $g \in G$.

\begin{remark} \label{R:Split-on-MandQ}
A choice of a $G$-invariant vertical complement $W$ of $D$ in $TQ$,
\begin{equation}
\label{Eq:DecompTQ}
TQ=D \oplus W \qquad \mbox{and} \qquad W \subseteq V,
\end{equation}
induces a $G$-invariant $\W$ satisfying \eqref{Eq:SplitbyConnection}. Indeed, since $T\tau|_\V : \V \to V$ is an isomorphism, we can take $\mathcal{W} := (T\tau|_\V)^{-1}(W)$.

Note that such $G$-invariant vertical complements $W$ always exist: for example, one can take $W$ to be $V \cap S^\perp$, where $S^\perp$ is the orthogonal complement of $S$ with respect to (the $G$-invariant) kinetic energy metric $\kappa$, as proposed in \cite{BKMM}.
In this case, $\mathcal{A}_\subW
= \tau^* \mathcal{A}^\kin$, where $\mathcal{A}^\kin$ is the $\mathfrak{g}$-valued 1-form on $Q$ used in \cite{BKMM} to define the nonholonomic connection.
However, in this paper, we will allow different choices of $W$; in fact, we will consider many examples in which $W$ is not defined in this way.
\end{remark}


\subsubsection*{The ${\W}$-curvature}
A $G$-invariant $\W$ as in \eqref{Eq:SplitbyConnection} and the resulting  map ${\mathcal A}_\subW: T\M \to \mathfrak{g}$ \eqref{Eq:AkinPw} induce a $\mathfrak{g}$-valued 2-form on $\M$, that might be
interpreted as the associated ``curvature''. More precisely:

\begin{definition} \label{Def:KinCurvature}
The $\W${\it-curvature} is the $\mathfrak{g}$-valued 2-form on $\M$ given by
\begin{equation} \label{Def:KinCurv-2form}
\mathcal{K}_\subW(X,Y):=  d {\mathcal A}_\subW(P_\subC (X),P_\subC (Y)) \qquad \mbox{ for } X,Y \in T\M,
\end{equation}
where $P_\subC: T\M \to \C$ is the projection associated to the decomposition \eqref{Eq:SplitbyConnection}.
\end{definition}

\begin{remark} \label{R:KinChaplygin} In the case of a Chaplygin system, where
$\C$ is a complement of the vertical space $\V$, the only choice of $\W$ is $\W = \V$, hence $\mathcal{A}_\V= {\mathcal A}$ is the principal connection on $\M$ with horizontal space given by $\C$. Therefore $\mathcal{K}_\V= \mathcal{K}$ is the classical curvature.
\end{remark}

\begin{proposition} \label{Prop:Prop1} For any $G$-invariant vertical complement $\W$ of $\C$ in $T\M$, the $\W$-curvature $\mathcal{K}_\subW$ verifies the following:
\begin{enumerate}
\item[$(i)$]  $\mathcal{K}_\subW(X,Y) = - \mathcal{A}_\subW ([X,Y])$
    for $X,Y \in \Gamma(\C)$.
\item[$(ii)$] $\mathcal{K}_\subW$ is ad-equivariant, i.e., $\pounds_{\xi_\subM}\mathcal{K}_\subW = ad_\xi \mathcal{K}_\subW$, for $\xi \in \mathfrak{g}$.

\end{enumerate}
\end{proposition}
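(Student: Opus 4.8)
The plan is to prove the two claims in Proposition~\ref{Prop:Prop1} by direct computation from Definition~\ref{Def:KinCurvature}, exploiting the characterization of $\mathcal{A}_\subW$ in \eqref{Eq:AkinPw} and the equivariance properties already recorded for $P_\subW$ and $\mathcal{A}_\subW$.

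For part $(i)$: when $X, Y \in \Gamma(\C)$, the projections satisfy $P_\subC(X) = X$ and $P_\subC(Y) = Y$, so $\mathcal{K}_\subW(X,Y) = d\mathcal{A}_\subW(X,Y)$. I would then expand the exterior derivative using the standard formula $d\mathcal{A}_\subW(X,Y) = X(\mathcal{A}_\subW(Y)) - Y(\mathcal{A}_\subW(X)) - \mathcal{A}_\subW([X,Y])$. The key observation is that $\mathcal{A}_\subW$ vanishes on $\C$: indeed, from \eqref{Eq:AkinPw}, $\mathcal{A}_\subW(v) = \xi$ iff $P_\subW(v) = \xi_\subM$, and for $v \in \C$ we have $P_\subW(v) = 0$, forcing $\xi_\subM = 0$, hence $\xi = 0$ by freeness of the action. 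Therefore $\mathcal{A}_\subW(X) = \mathcal{A}_\subW(Y) = 0$ identically on a neighborhood (since $X, Y$ are sections of $\C$), so the first two terms of the exterior derivative formula vanish, leaving $\mathcal{K}_\subW(X,Y) = -\mathcal{A}_\subW([X,Y])$, as claimed. (One should note that the formula is $C^\infty(\M)$-bilinear and tensorial in $X,Y$, which is implicit in calling $\mathcal{K}_\subW$ a 2-form, so it suffices to check on a convenient local frame of $\C$.)

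For part $(ii)$: I would use that $\mathcal{A}_\subW$ satisfies $\mathcal{A}_\subW(T\varphi_g(v_m)) = Ad_g \mathcal{A}_\subW(v_m)$, which at the infinitesimal level gives $\pounds_{\xi_\subM}\mathcal{A}_\subW = -ad_\xi \circ \mathcal{A}_\subW$ (or $+$, depending on sign conventions; I will match whichever makes the statement consistent), together with the fact that $P_\subC$ is equivariant, so $\xi_\subM$-invariance holds: $\pounds_{\xi_\subM}(P_\subC(X)) = P_\subC(\pounds_{\xi_\subM}X) = P_\subC([\xi_\subM, X])$. Since the Lie derivative commutes with $d$, I compute $\pounds_{\xi_\subM}\mathcal{K}_\subW(X,Y) = \pounds_{\xi_\subM}\big(d\mathcal{A}_\subW(P_\subC X, P_\subC Y)\big)$ and distribute the Lie derivative using the Leibniz rule over the three factors $d\mathcal{A}_\subW$, $P_\subC X$, $P_\subC Y$; the terms involving $\pounds_{\xi_\subM}(P_\subC X)$ and $\pounds_{\xi_\subM}(P_\subC Y)$ recombine into $\mathcal{K}_\subW$ evaluated on $[\xi_\subM, X]$ and $[\xi_\subM, Y]$ — which is exactly what tensoriality forces to appear when one writes $\pounds_{\xi_\subM}\mathcal{K}_\subW$ as a derivation of the 2-form — while the term $\pounds_{\xi_\subM}(d\mathcal{A}_\subW) = d(\pounds_{\xi_\subM}\mathcal{A}_\subW) = -d(ad_\xi \circ \mathcal{A}_\subW) = ad_\xi \circ d\mathcal{A}_\subW$ (using that $ad_\xi$ is a fixed linear map on $\g$, hence commutes with $d$) contributes the $ad_\xi \mathcal{K}_\subW$ term. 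Matching the bookkeeping gives $\pounds_{\xi_\subM}\mathcal{K}_\subW = ad_\xi \mathcal{K}_\subW$.

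The main obstacle I anticipate is keeping the sign conventions coherent, particularly the relation between $\varphi_g$-equivariance of $\mathcal{A}_\subW$ and its infinitesimal form $\pounds_{\xi_\subM}\mathcal{A}_\subW$, since the standard identity $\pounds_{\xi_\subM}(\eta_\subM) = -[\xi,\eta]_\subM$ for fundamental vector fields introduces a sign that must be tracked through the $Ad$/$ad$ relation; getting $+ad_\xi$ rather than $-ad_\xi$ in $(ii)$ hinges on this. A secondary, purely technical point is that $d\mathcal{A}_\subW(P_\subC X, P_\subC Y)$ is a priori only defined for $X,Y$ vector fields, so to speak of $\pounds_{\xi_\subM}$ of it I should first remark that $\mathcal{K}_\subW$ is genuinely a (tensorial) 2-form — this follows because $P_\subC$ is a bundle map and, on $\Gamma(\C)$, $\mathcal{K}_\subW = -\mathcal{A}_\subW \circ [\cdot,\cdot]$ restricted appropriately, whose tensoriality is the content of part $(i)$ combined with the analogous vanishing when either argument lies in $\W$ (which also follows from $d\mathcal{A}_\subW(P_\subC X, P_\subC Y)$ only seeing the $\C$-components). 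I would dispatch this tensoriality remark first, then carry out the two computations above.
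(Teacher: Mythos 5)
Your proposal is correct. Part $(i)$ is essentially the paper's own argument: expand $d\mathcal{A}_\subW(X,Y)$ by the standard formula and use that $\mathcal{A}_\subW$ vanishes on $\C$ (the paper leaves the freeness point implicit; your spelling it out is fine). For part $(ii)$ you take a genuinely different route: you differentiate the $Ad$-equivariance of $\mathcal{A}_\subW$ and compute $\pounds_{\xi_\subM}\mathcal{K}_\subW$ directly via the Leibniz rule, using $\pounds_{\xi_\subM}(P_\subC X)=P_\subC([\xi_\subM,X])$ and $\pounds_{\xi_\subM}d\mathcal{A}_\subW = d(\pounds_{\xi_\subM}\mathcal{A}_\subW)=ad_\xi\circ d\mathcal{A}_\subW$. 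The paper instead works at the group level: it uses $(i)$ to rewrite $\mathcal{K}_\subW(X,Y)=-\mathcal{A}_\subW([P_\subC X,P_\subC Y])$, then applies the equivariance of $P_\subC$, the naturality of the Lie bracket under $T\varphi_g$, and $\mathcal{A}_\subW\circ T\varphi_g = Ad_g\,\mathcal{A}_\subW$ to get $\varphi_g^*\mathcal{K}_\subW = Ad_g\,\mathcal{K}_\subW$, from which the infinitesimal statement follows. The finite version buys freedom from the sign bookkeeping you flag as your main obstacle; your infinitesimal version is more direct but you should not leave the sign "to be matched": with the convention $\xi_\subM=\frac{d}{dt}\big|_{t=0}\varphi_{\exp(t\xi)}$, differentiating $\varphi_g^*\mathcal{A}_\subW = Ad_g\circ\mathcal{A}_\subW$ gives unambiguously $\pounds_{\xi_\subM}\mathcal{A}_\subW = +\,ad_\xi\circ\mathcal{A}_\subW$, which is exactly what the stated $+\,ad_\xi\mathcal{K}_\subW$ requires, so the computation closes. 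One small simplification: tensoriality of $\mathcal{K}_\subW$ needs no appeal to $(i)$ or to vanishing on $\W$ — it is immediate from the definition, since $d\mathcal{A}_\subW$ is a genuine 2-form and $P_\subC$ is a bundle map.
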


\begin{proof}
$(i)$:  If $X, Y \in \Gamma(\C)$ then $$\mathcal{K}_\subW(X,Y) = d \mathcal{A}_\subW (X,Y) = X(
\mathcal{A}_\subW(Y)) - Y (\mathcal{A}_\subW(X)) - \mathcal{A}_\subW([X,Y])= -
\mathcal{A}_\subW([X,Y]).
$$
$(ii)$:  For $g \in G$, we have
\begin{eqnarray*}
\varphi_g ^* \mathcal{K}_\subW (X, Y) &=& d \mathcal{A}_\subW (P_\subC T \varphi_g (X), P_\subC T \varphi_g(Y)) = - \mathcal{A}_\subW ( [T \varphi_g P_\subC (X), T \varphi_g P_\subC (Y)])\\
&=&  - \mathcal{A}_\subW (T \varphi_g [ P_\subC (X),P_\subC (Y)]) = -Ad_g \mathcal{A}_\subW ( [ P_\subC (X),P_\subC (Y)]) =  Ad_g \mathcal{K}_\subW (X,Y),
\end{eqnarray*}
so the claim in $(ii)$ follows.
\end{proof}

As a consequence of item $(i)$ and \eqref{Eq:AkinPw}, we see that for $X,Y \in
\mathfrak{X}(\M)$ we have that
\begin{equation} \label{Eq:K-Kkin}
{\bf K}_\subW(X,Y)= \left(  \mathcal{K}_\subW(X, Y) \right)_\subM,
\end{equation}
for ${\bf K}_\subW$ in \eqref{Def:K}. Hence,  $\mathcal{K}_\subW \equiv 0$ if and only if $\C$ is involutive.

Finally, we consider a natural 3-form associated with the momentum map and the $\W$-curvature. Let $X,Y,Z$ be vector fields on $\M$. Then
$\mathcal{K}_\subW(X,Y) \in \Gamma(\M \times
\mathfrak{g})$ and $d{\mathcal J}(Z) \in \Gamma (\M \times \mathfrak{g}^*)$.
We define $ d {\mathcal J} \wedge \mathcal{K}_\subW $ to be the 3-form on $\M$ given by
\begin{equation} \label{Eq:dJ^K}
d {\mathcal J} \wedge \mathcal{K}_\subW (X,Y,Z) = cyclic \left[ \langle d{\mathcal J} (Z) ,{\mathcal K}_\subW (X,Y) \rangle \right]
\end{equation} where  $\langle \cdot , \cdot \rangle$ is the pairing between $\mathfrak{g}^*$ and $\mathfrak{g}$.


\subsection{The Jacobiator for the nonholonomic brackets in the presence of symmetries} \label{Ss:Jacob-sym}

As a consequence of Theorem \ref{T:GeneralJacobiatorNH}, we obtain the following Jacobiator formula:

\begin{theorem} \label{T:JacobiatorNH} Let $(\M,\pi_{\emph\nh})$ be the $G$-invariant almost Poisson manifold associated to a nonholonomic system with symmetries satisfying the dimension assumption \eqref{Eq:DimensionAssumption}. Suppose that $\pi_\B$ is a bivector field on $\M$ gauge related to $\pi_{\emph \nh}$ by a 2-form $B$
satisfying \eqref{Eq:Bsection}, and let $\W$ be a $G$-invariant vertical complement of $\C$. Then
\begin{equation}\label{Eq:JacpiB}
\frac{1}{2}[\pi_\B, \pi_\B] = -\pi_\B^\sharp (d B + d {\mathcal J} \wedge \mathcal{K}_\subW ) - \psi_{\pi_\B},
\end{equation}
where ${\mathcal J}: \M \to \mathfrak{g}^*$ is the canonical momentum map, $\mathcal{K}_\subW$ is the $\W$-curvature, and $\psi_{\pi_\B}$ is the trivector $\psi_{\pi_\B}(\alpha,\beta, \gamma) = cycl \left[ \gamma  \left( \mathcal{K}_\subW (\pi_\B^\sharp(\alpha),
\pi_\B^\sharp(\beta)) \right)_\subM \, \right]$, for $\a, \beta, \gamma \in T^*\M$.
\end{theorem}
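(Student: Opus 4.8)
The plan is to derive \eqref{Eq:JacpiB} directly from the coordinate-free Jacobiator formula in Theorem~\ref{T:GeneralJacobiatorNH} by rewriting the terms involving the $\W$-valued 2-form ${\bf K}_\subW$ in terms of the $\mathfrak{g}$-valued $\W$-curvature $\mathcal{K}_\subW$ and the momentum map $\mathcal{J}$. Recall that Theorem~\ref{T:GeneralJacobiatorNH} gives, for $\alpha,\beta,\gamma \in T^*\M$,
\begin{equation*}
\frac{1}{2}[\pi_\B,\pi_\B](\alpha,\beta,\gamma) = cycl.\left[ \Omega_\subM({\bf K}_\subW(\pi_\B^\sharp\alpha,\pi_\B^\sharp\beta),\pi_\B^\sharp\gamma) - \gamma({\bf K}_\subW(\pi_\B^\sharp\alpha,\pi_\B^\sharp\beta)) \right] + dB(\pi_\B^\sharp\alpha,\pi_\B^\sharp\beta,\pi_\B^\sharp\gamma).
\end{equation*}
The key algebraic input is \eqref{Eq:K-Kkin}, which says ${\bf K}_\subW(X,Y) = (\mathcal{K}_\subW(X,Y))_\subM$. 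Substituting this into the second term of the cyclic sum gives immediately $-cycl.\left[\gamma\left((\mathcal{K}_\subW(\pi_\B^\sharp\alpha,\pi_\B^\sharp\beta))_\subM\right)\right]$, which is precisely $-\psi_{\pi_\B}(\alpha,\beta,\gamma)$.

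The remaining work is to identify the first term $cycl.\left[\Omega_\subM((\mathcal{K}_\subW(\pi_\B^\sharp\alpha,\pi_\B^\sharp\beta))_\subM,\pi_\B^\sharp\gamma)\right]$ with $-\pi_\B^\sharp(d\mathcal{J}\wedge\mathcal{K}_\subW)(\alpha,\beta,\gamma)$. Here I would use the fundamental relation \eqref{Eq:hamsection}, namely ${\bf i}_{\eta_\subM}\Omega_\subM = \langle d\mathcal{J},\eta\rangle$, valid for $\mathfrak{g}$-valued functions $\eta$, applied with $\eta = \mathcal{K}_\subW(\pi_\B^\sharp\alpha,\pi_\B^\sharp\beta) \in C^\infty(\M,\mathfrak{g})$. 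This yields
\begin{equation*}
\Omega_\subM\left((\mathcal{K}_\subW(\pi_\B^\sharp\alpha,\pi_\B^\sharp\beta))_\subM,\pi_\B^\sharp\gamma\right) = -\left\langle d\mathcal{J}(\pi_\B^\sharp\gamma),\mathcal{K}_\subW(\pi_\B^\sharp\alpha,\pi_\B^\sharp\beta)\right\rangle.
\end{equation*}
Taking the cyclic sum and comparing with the definition \eqref{Eq:dJ^K} of $d\mathcal{J}\wedge\mathcal{K}_\subW$, the cyclic sum equals $-(d\mathcal{J}\wedge\mathcal{K}_\subW)(\pi_\B^\sharp\alpha,\pi_\B^\sharp\beta,\pi_\B^\sharp\gamma)$, which by the definition of the $\sharp$-action on 3-forms (recalled in Section~\ref{Sss:GaugeTransf}: $\pi^\sharp(\phi)(\alpha_1,\ldots,\alpha_k) = (-1)^k\phi(\pi^\sharp\alpha_1,\ldots,\pi^\sharp\alpha_k)$, so with a sign $(-1)^3 = -1$) gives $+\pi_\B^\sharp(d\mathcal{J}\wedge\mathcal{K}_\subW)(\alpha,\beta,\gamma)$. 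Hmm—I need to track the sign carefully: I expect the combination of the minus sign from \eqref{Eq:hamsection} and the $(-1)^3$ from the $\sharp$-convention to produce exactly the term $-\pi_\B^\sharp(d\mathcal{J}\wedge\mathcal{K}_\subW)$ in \eqref{Eq:JacpiB}. Similarly $dB(\pi_\B^\sharp\alpha,\pi_\B^\sharp\beta,\pi_\B^\sharp\gamma) = -\pi_\B^\sharp(dB)(\alpha,\beta,\gamma)$, combining with the above into $-\pi_\B^\sharp(dB + d\mathcal{J}\wedge\mathcal{K}_\subW)$.

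The main obstacle, such as it is, is bookkeeping: getting all the signs consistent between the $\sharp$-convention on $k$-forms, the sign in \eqref{Eq:hamsection}, and the cyclic-sum conventions in the definitions of $\psi_{\pi_\B}$, $d\mathcal{J}\wedge\mathcal{K}_\subW$, and the Jacobiator formula of Theorem~\ref{T:GeneralJacobiatorNH}. One subtle point worth checking is that \eqref{Eq:hamsection} is applied correctly when $\mathcal{K}_\subW(\pi_\B^\sharp\alpha,\pi_\B^\sharp\beta)$ is a genuinely non-constant $\mathfrak{g}$-valued function on $\M$—this is exactly why the more general version \eqref{Eq:hamsection} (rather than ${\bf i}_{\eta_\subM}\Omega_\subM = d\langle\mathcal{J},\eta\rangle$ for constant $\eta$) was recorded earlier. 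No genuinely new geometric idea is needed beyond Theorem~\ref{T:GeneralJacobiatorNH} and the translation \eqref{Eq:K-Kkin} between ${\bf K}_\subW$ and $\mathcal{K}_\subW$; the theorem is essentially a repackaging of the general formula in the presence of symmetry.
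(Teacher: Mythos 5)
Your proposal follows essentially the same route as the paper's own proof: apply Theorem~\ref{T:GeneralJacobiatorNH}, rewrite ${\bf K}_\subW$ via \eqref{Eq:K-Kkin}, use \eqref{Eq:hamsection} for the (non-constant) $\mathfrak{g}$-valued section $\mathcal{K}_\subW(\pi_\B^\sharp\alpha,\pi_\B^\sharp\beta)$, and repackage the cyclic sum using \eqref{Eq:dJ^K} and the $(-1)^k$ convention for $\pi^\sharp$ acting on forms. The only blemish is the sign you yourself flagged: \eqref{Eq:hamsection} reads ${\bf i}_{\eta_\subM}\Omega_\subM = +\langle d\mathcal{J},\eta\rangle$, so $\Omega_\subM\bigl((\mathcal{K}_\subW(\pi_\B^\sharp\alpha,\pi_\B^\sharp\beta))_\subM,\pi_\B^\sharp\gamma\bigr) = +\langle d\mathcal{J}(\pi_\B^\sharp\gamma),\mathcal{K}_\subW(\pi_\B^\sharp\alpha,\pi_\B^\sharp\beta)\rangle$ with no minus sign; the cyclic sum is then $+\,(d\mathcal{J}\wedge\mathcal{K}_\subW)(\pi_\B^\sharp\alpha,\pi_\B^\sharp\beta,\pi_\B^\sharp\gamma)$, and the single factor $(-1)^3$ from the $\sharp$-convention converts this, together with the $dB$ term, into $-\pi_\B^\sharp(dB + d\mathcal{J}\wedge\mathcal{K}_\subW)(\alpha,\beta,\gamma)$, exactly as in \eqref{Eq:JacpiB}. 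With that one sign corrected the bookkeeping closes without any further compensation, and your argument coincides with the paper's proof.
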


\begin{proof} Let  $X,Y,Z \in \Gamma(\C)$. Note that $\mathcal{K}_\subW(X ,Y) \in \Gamma(\M \times
\mathfrak{g})$ is not necessarily a constant section, so using \eqref{Eq:hamsection} and \eqref{Eq:K-Kkin}, we get
$$ \Omega_\subM({\bf K}_\subW(X ,Y) , Z) =  \Omega_\subM( \left( \mathcal{K}_\subW(X ,Y) \right)_\subM , Z)= \langle d {\mathcal J}(Z) \, , \, \mathcal{K}_\subW(X ,Y) \rangle.$$
From Theorem \ref{T:GeneralJacobiatorNH} and using \eqref{Eq:dJ^K}, the Jacobiator of $\pi_\B$ can be written as
\begin{eqnarray*}
\frac{1}{2}[\pi_\B,\pi_\B](\a,\beta,\gamma) & = & \left(dB + d {\mathcal J} \wedge \mathcal{K}_\subW   \right) (\pi_\B^\sharp(\a), \pi_\B^\sharp(\beta), \pi_\B^\sharp(\gamma)) 
- cycl. \, \left[ \gamma \left(  \mathcal{K}_\subW(\pi_\B^\sharp(\a),
\pi_\B^\sharp(\beta))_{\!\subM} \right) \right],
\end{eqnarray*}
which completes the proof.
\end{proof}


Note that the lack of integrability of $\C$ can be seen directly
from Theorem~\ref{T:JacobiatorNH}. Since $\C$ is regular, it is
integrable if and only if it is involutive. By \eqref{E:not_Morph}
and \eqref{Eq:JacpiB}, one sees that $\C$ fails to be involutive due
to the presence of the trivector $\psi_{\pi_\B}$ (which vanishes if
and only if $\mathcal{K}_\subW \equiv 0$).

\begin{corollary}
For a nonholonomic system with a symmetry group satisfying the dimension assumption \eqref{Eq:DimensionAssumption}, and for a $G$-invariant vertical complement $\W$ of $\C$,
the Jacobiator of the nonholonomic bivector  $\pi_{\emph \nh}$ is given by
$$
\frac{1}{2}[\pi_{\emph \nh}, \pi_{\emph \nh}] =  -\pi^\sharp_{\emph \nh} (   d{\mathcal J} \wedge \mathcal{K}_\subW ) - \psi_{\pi_{\emph{\nh}}}.
$$
\end{corollary}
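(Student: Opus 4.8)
The plan is to deduce this corollary as the immediate specialization of Theorem~\ref{T:JacobiatorNH} to the case $\pi_\B = \pi_{\nh}$. Observe that $\pi_{\nh}$ is trivially gauge related to itself by the 2-form $B=0$, which vacuously satisfies condition \eqref{Eq:Bsection}. With this choice, $dB = 0$, so the term $-\pi_\B^\sharp(dB)$ in \eqref{Eq:JacpiB} drops out, and $\pi_\B = \pi_{\nh}$ throughout. What remains is
\[
\frac{1}{2}[\pi_{\nh},\pi_{\nh}] = -\pi_{\nh}^\sharp\bigl(d{\mathcal J}\wedge \mathcal{K}_\subW\bigr) - \psi_{\pi_{\nh}},
\]
where, unwinding the definition of $\psi_{\pi_\B}$ from the theorem, $\psi_{\pi_{\nh}}(\alpha,\beta,\gamma) = cycl\bigl[\gamma\bigl(\mathcal{K}_\subW(\pi_{\nh}^\sharp(\alpha),\pi_{\nh}^\sharp(\beta))\bigr)_\subM\bigr]$ for $\alpha,\beta,\gamma \in T^*\M$. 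This is exactly the claimed formula.

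Strictly, one should note that the hypotheses of Theorem~\ref{T:JacobiatorNH} are met: the corollary assumes a symmetry group satisfying the dimension assumption \eqref{Eq:DimensionAssumption} and a fixed $G$-invariant vertical complement $\W$ of $\C$, which is precisely the setup of the theorem; and the requirement that $\pi_\B$ be gauge related to $\pi_{\nh}$ by a 2-form satisfying \eqref{Eq:Bsection} is satisfied by $B=0$ as noted above. Hence there is no real obstacle here — the statement is a corollary in the literal sense, obtained by setting $B=0$.

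If one wanted a self-contained derivation not routing through Theorem~\ref{T:JacobiatorNH}, one could instead start from Corollary~\ref{C:MarsdenKoon}, which gives
\[
\frac{1}{2}[\pi_{\nh},\pi_{\nh}](\alpha,\beta,\gamma) = cycl\,\bigl[\Omega_\subM({\bf K}_\subW(\pi_{\nh}^\sharp(\alpha),\pi_{\nh}^\sharp(\beta)),\pi_{\nh}^\sharp(\gamma)) - \gamma\bigl({\bf K}_\subW(\pi_{\nh}^\sharp(\alpha),\pi_{\nh}^\sharp(\beta))\bigr)\bigr],
\]
and then use the identity \eqref{Eq:K-Kkin}, namely ${\bf K}_\subW(X,Y) = (\mathcal{K}_\subW(X,Y))_\subM$, together with \eqref{Eq:hamsection}, which for a (possibly non-constant) section $\eta$ of $\M\times\mathfrak{g}$ reads ${\bf i}_{\eta_\subM}\Omega_\subM = \langle d{\mathcal J},\eta\rangle$. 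Applying this with $\eta = \mathcal{K}_\subW(\pi_{\nh}^\sharp(\alpha),\pi_{\nh}^\sharp(\beta))$ converts the first term of the cyclic sum into $\langle d{\mathcal J}(\pi_{\nh}^\sharp(\gamma)),\mathcal{K}_\subW(\pi_{\nh}^\sharp(\alpha),\pi_{\nh}^\sharp(\beta))\rangle$, whose cyclic sum is $(d{\mathcal J}\wedge\mathcal{K}_\subW)(\pi_{\nh}^\sharp(\alpha),\pi_{\nh}^\sharp(\beta),\pi_{\nh}^\sharp(\gamma)) = \pi_{\nh}^\sharp(d{\mathcal J}\wedge\mathcal{K}_\subW)(\alpha,\beta,\gamma)$ up to the sign conventions for $\pi^\sharp$ acting on forms recorded in Section~\ref{Sss:GaugeTransf}; the second term of the cyclic sum is precisely $-\psi_{\pi_{\nh}}$. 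Either route finishes the proof; the first is the shorter and the one I would write.
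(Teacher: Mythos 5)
Your proposal is correct and follows exactly the paper's intent: the corollary is the specialization of Theorem~\ref{T:JacobiatorNH} to $B=0$ (which trivially satisfies \eqref{Eq:Bsection}), so the $dB$ term drops and the stated formula is immediate. The alternative route you sketch via Corollary~\ref{C:MarsdenKoon}, \eqref{Eq:K-Kkin} and \eqref{Eq:hamsection} is essentially the argument already contained in the proof of Theorem~\ref{T:JacobiatorNH}, so there is nothing genuinely different there either.
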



\subsection{The Jacobiator of reduced brackets} \label{Ss:Jacob-Reduced}

For a nonholonomic system with symmetries satisfying the dimension assumption, we will now study the Jacobiator of reduced brackets on $\M/G$.

\subsubsection*{Reduced brackets}

Consider the $G$-invariant almost Poisson manifold $(\M, \pi_\nh)$. Note that
if a bivector $\pi_\B$ is gauge related to $\pi_\nh$ by a $G$-invariant 2-form $B$, then it is also $G$-invariant (see  \cite[Proposition~13]{PL2011}).
In this case, the orbit projection $\rho : \M \to \M/G$ induces almost Poisson bivector fields $\pi_\red^\nh$ and  $\pi_\red^\B$ on $\M/G$, where
\begin{equation} \label{Eq:ReducedBivector}
T\rho \circ \pi^\sharp_\nh(\rho^*\alpha) = ({\pi}_\red^\nh)^\sharp(\alpha) \qquad \mbox{and} \qquad
T\rho \circ \pi_\B^\sharp(\rho^*\alpha) = ({\pi}_\red^\B)^\sharp(\alpha), \qquad \mbox{for } \alpha \in
\Omega^1(\M/G).
\end{equation}

Equivalently, let $\{ \cdot, \cdot \}_\B$ be the almost Poisson bracket on $C^\infty(\M)$ associated to the invariant bivector $\pi_\B$. The induced almost Poisson bracket  $\{\cdot, \cdot \}^\B_{\mbox{\tiny red}}$ on functions on $\M/G$ is given
by
\begin{equation} \label{Eq:RedBracketDef} \{f \circ \rho, g \circ \rho \}_\B (m)= \{f,g
\}^\B_{\mbox{\tiny{red}}} (\rho(m)), \qquad \mbox{for } f,g \in C^\infty(\M/G), m \in \M.
\end{equation}

Let $\W$ be a $G$-invariant vertical complement of $\C$. From Theorem \ref{T:JacobiatorNH},
we have

\begin{corollary} \label{C:RedJacobiator} Let $\pi_\B$ be a bivector field on $\M$ gauge related to $\pi_{\emph \nh}$ by a $G$-invariant 2-form $B$ satisfying condition \eqref{Eq:Bsection}.
Then
$$
\frac{1}{2}[\pi^\B_{\emph \red}, \pi^\B_{\emph \red}] = -\pi_\B^\sharp (d B + d {\mathcal J} \wedge \mathcal{K}_\subW  ) \circ \rho^*.
$$
\end{corollary}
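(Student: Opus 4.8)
The plan is to deduce Corollary~\ref{C:RedJacobiator} from Theorem~\ref{T:JacobiatorNH} by pushing the Jacobiator formula down to the quotient $\M/G$ via the orbit projection $\rho:\M\to\M/G$. The key observation is that the Schouten bracket is natural with respect to $\rho$-related bivectors: since $\pi_\B$ is $G$-invariant and $\rho$-projects to $\pi^\B_\red$ by \eqref{Eq:ReducedBivector}, the trivector $\tfrac12[\pi_\B,\pi_\B]$ is $\rho$-related to $\tfrac12[\pi^\B_\red,\pi^\B_\red]$; equivalently, for $\bar\a,\bar\b,\bar\gamma\in\Omega^1(\M/G)$,
\begin{equation*}
\tfrac12[\pi^\B_\red,\pi^\B_\red](\bar\a,\bar\b,\bar\gamma)\circ\rho \;=\; \tfrac12[\pi_\B,\pi_\B](\rho^*\bar\a,\rho^*\bar\b,\rho^*\bar\gamma).
\end{equation*}
This can be checked on exact forms using that $\rho^*$ intertwines the brackets \eqref{Eq:RedBracketDef} and the Leibniz/cyclic identity \eqref{E:Jacobi}, or quoted as the standard fact that Schouten brackets commute with push-forward along surjective submersions for projectable multivectors.

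Next I would evaluate the right-hand side of \eqref{Eq:JacpiB} on pulled-back forms $\a=\rho^*\bar\a$, etc. The term $-\pi_\B^\sharp(dB + d{\mathcal J}\wedge\mathcal{K}_\subW)$ immediately reproduces the claimed expression once composed with $\rho^*$, using $\pi_\B^\sharp(\rho^*\bar\a) = $ a $\rho$-projectable vector field (projecting to $(\pi^\B_\red)^\sharp(\bar\a)$). The real point is to show that the second term $\psi_{\pi_\B}$ \emph{vanishes} when all three arguments are pulled back from $\M/G$. Recall $\psi_{\pi_\B}(\a,\b,\gamma) = cycl\,[\gamma((\mathcal{K}_\subW(\pi_\B^\sharp(\a),\pi_\B^\sharp(\b)))_\subM)]$. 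Here $\gamma = \rho^*\bar\gamma$ annihilates all vertical vectors; and $(\mathcal{K}_\subW(\pi_\B^\sharp(\rho^*\bar\a),\pi_\B^\sharp(\rho^*\bar\b)))_\subM$ is by construction a vertical vector field (it is the fundamental vector field of the $\mathfrak g$-valued function $\mathcal{K}_\subW(\cdots)$). Hence each cyclic summand $(\rho^*\bar\gamma)(\text{vertical}) = 0$, so $\psi_{\pi_\B}(\rho^*\bar\a,\rho^*\bar\b,\rho^*\bar\gamma)=0$ identically. That is the crux of the argument — the obstruction trivector measuring non-involutivity of $\C$ is precisely the part that drops out under reduction, which is why $\pi^\B_\red$ can have an integrable characteristic distribution even though $\pi_\B$ does not.

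Assembling: for $\bar\a,\bar\b,\bar\gamma\in\Omega^1(\M/G)$,
\begin{equation*}
\tfrac12[\pi^\B_\red,\pi^\B_\red](\bar\a,\bar\b,\bar\gamma)\circ\rho = \tfrac12[\pi_\B,\pi_\B](\rho^*\bar\a,\rho^*\bar\b,\rho^*\bar\gamma) = -\pi_\B^\sharp(dB + d{\mathcal J}\wedge\mathcal{K}_\subW)(\rho^*\bar\a,\rho^*\bar\b,\rho^*\bar\gamma),
\end{equation*}
and since $\pi_\B^\sharp(dB+d{\mathcal J}\wedge\mathcal{K}_\subW)$ is a $G$-invariant trivector whose contraction with pulled-back forms is $\rho$-basic, the right-hand side equals $\big(-\pi_\B^\sharp(dB+d{\mathcal J}\wedge\mathcal{K}_\subW)\circ\rho^*\big)(\bar\a,\bar\b,\bar\gamma)$ evaluated on $\M/G$ and then pulled back, giving the stated formula. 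One minor technical point to address carefully is that $dB+d{\mathcal J}\wedge\mathcal{K}_\subW$ is not itself $\rho$-basic, but its image under $\pi_\B^\sharp$ contracted against three $\rho$-pulled-back forms is — this follows because $\pi_\B^\sharp\circ\rho^*$ takes values in $\rho$-projectable vector fields, so only the "horizontal-type" behaviour of the forms matters; I would spell this out to make the composition $\circ\,\rho^*$ in the statement precise. I expect the bookkeeping of $\rho$-relatedness of multivector fields, rather than any genuinely new idea, to be the only mild obstacle.
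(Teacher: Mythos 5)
Your proposal is correct and follows essentially the same route as the paper: the paper's proof simply invokes that $\rho$-related bivectors have $\rho$-related Schouten brackets and then applies Theorem~\ref{T:JacobiatorNH}, leaving implicit the point you spell out explicitly, namely that $\psi_{\pi_\B}$ vanishes on forms pulled back by $\rho$ because its entries are contractions of basic 1-forms with the vertical fields $(\mathcal{K}_\subW(\cdot,\cdot))_\subM$. Your added care about why the composition with $\rho^*$ yields a well-defined trivector on $\M/G$ is a reasonable elaboration of the same argument, not a different approach.
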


\begin{proof}
If $\pi^\B_\red$ is $\rho$-related to $\pi_\B$ then, from \eqref{Eq:ReducedBivector}, their Schouten brackets are $\rho$-related as well. That is, $$[\pi^\B_\red, \pi^\B_\red] = T\!\rho [\pi_\B, \pi_\B] \circ \rho^*.$$
\end{proof}

As observed in \cite{PL2011}, $\pi_\red^\nh$ and $\pi^\B_\red$ can have fundamentally different properties even though $\pi_\nh$ and $\pi_\B$ are gauge related. For instance, in general $\pi_\red^\nh$ and $\pi^\B_\red$  have different characteristic distributions (and thus are not gauge related). In the special case that the 2-form $B$ is basic (with respect to the orbit projection $\rho:\M \to \M/G$), then the reduced brackets $\pi_\red^\nh$ and $\pi^\B_\red$ are gauge related:

\begin{proposition} \label{P:diagramBbasic}
Let $P$ be a manifold equipped with an action of a Lie group $G$ that is free and proper. Let $\pi$ and $\pi_\B$ be $G$-invariant bivector fields on $P$, and denote by $\pi_{\emph\red}$ and $\pi_{\emph\red}^\B$ the induced bivectors on $P/G$.  If $\pi$ and $\pi_\B$ are gauge related by a basic 2-form $B$ (with respect to $\rho: P \to P/G$),  then $\pi_{\emph\red}$ and $\pi_{\emph\red}^\B$ are gauge related by the 2-form $B_{\emph\red}$ where $B_{\emph\red}$ is the 2-form on $P/G$ such that $\rho^*B_{\emph\red} = B$.
 \end{proposition}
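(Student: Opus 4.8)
The plan is to unwind the definition of the reduced bivector fields \eqref{Eq:ReducedBivector} together with the definition of a gauge transformation \eqref{Eq:IntroToGauge}, and check directly that the 2-form $B_{\red}$ (which exists precisely because $B$ is basic) realizes a gauge transformation from $\pi_{\red}$ to $\pi_{\red}^{\B}$. First I would recall that, since $B$ is basic, there is a well-defined 2-form $B_{\red}$ on $P/G$ with $\rho^*B_{\red} = B$; it is then a routine check that $\rho^*((B_{\red})^\flat(T\rho(X))) = B^\flat(X)$ for $X \in TP$, i.e.\ that the flat maps of $B$ and $B_{\red}$ are intertwined by $\rho$ and $T\rho$ on the appropriate (co)tangent vectors. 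The same is true for $\pi$ and $\pi_{\red}$ by \eqref{Eq:ReducedBivector}: $T\rho \circ \pi^\sharp \circ \rho^* = \pi_{\red}^\sharp$.

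Next I would show that the invertible endomorphism $(\mathrm{Id} + B^\flat \circ \pi^\sharp)$ of $T^*P$ descends, via $\rho^*$, to the endomorphism $(\mathrm{Id} + B_{\red}^\flat \circ \pi_{\red}^\sharp)$ of $T^*(P/G)$; concretely, for $\alpha \in \Omega^1(P/G)$ one has
\begin{equation*}
(\mathrm{Id} + B^\flat \circ \pi^\sharp)(\rho^*\alpha) = \rho^*\big( (\mathrm{Id} + B_{\red}^\flat \circ \pi_{\red}^\sharp)(\alpha) \big),
\end{equation*}
using the two intertwining relations above. Since $\rho^*$ is injective on 1-forms and the left-hand operator is invertible and $G$-equivariant, the right-hand operator $(\mathrm{Id} + B_{\red}^\flat \circ \pi_{\red}^\sharp)$ is invertible as well, so the gauge transformation of $\pi_{\red}$ by $B_{\red}$ is well-defined. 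Then I apply $\rho^*$ to an arbitrary $\alpha$, invert, and use \eqref{Eq:ReducedBivector} once more to compute
\begin{equation*}
(\pi_{\red}^{\B})^\sharp(\alpha) = T\rho \circ \pi_\B^\sharp(\rho^*\alpha) = T\rho \circ \pi^\sharp \circ (\mathrm{Id} + B^\flat \circ \pi^\sharp)^{-1}(\rho^*\alpha) = \pi_{\red}^\sharp \circ (\mathrm{Id} + B_{\red}^\flat \circ \pi_{\red}^\sharp)^{-1}(\alpha),
\end{equation*}
which is exactly \eqref{Eq:IntroToGauge} for the pair $(\pi_{\red}, B_{\red})$, establishing that $\pi_{\red}$ and $\pi_{\red}^{\B}$ are gauge related by $B_{\red}$.

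The only genuinely delicate point, which I would state carefully as a preliminary lemma, is the descent of the invertibility: one must know that $(\mathrm{Id} + B_{\red}^\flat \circ \pi_{\red}^\sharp)$ is invertible, not merely that it is the ``$\rho$-projection'' of an invertible operator. This follows because $\rho^*: \Omega^1(P/G) \to \Omega^1(P)$ is injective with image the $G$-basic 1-forms, the operator $\mathrm{Id} + B^\flat \circ \pi^\sharp$ preserves that subspace (as $B$, $\pi$ are $G$-invariant and $B$ is moreover basic), and an invertible $G$-equivariant operator restricts to an invertible operator on the basic subspace — so its descent along $\rho^*$ is invertible. I expect this equivariance/basic-subspace bookkeeping to be the main obstacle; everything else is a direct diagram chase through \eqref{Eq:IntroToGauge} and \eqref{Eq:ReducedBivector}.
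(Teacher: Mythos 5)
Your proposal is correct and takes essentially the same route as the paper's proof: both hinge on the intertwining identity $(\mathrm{Id} + B^\flat\circ\pi^\sharp)\circ\rho^* = \rho^*\circ(\mathrm{Id} + B_\red^\flat\circ\pi_\red^\sharp)$ (valid because $B$ is basic and $\pi$ reduces), on the descent of invertibility, and on pushing the gauge relation between $\pi$ and $\pi_\B$ through $\rho^*$ and $T\rho$, the only difference being that the paper composes with $(\mathrm{Id}+B^\flat\circ\pi^\sharp)$ before reducing while you invert upstairs first. One small remark: the invertibility of $(\mathrm{Id} + B_\red^\flat\circ\pi_\red^\sharp)$ is cleanest seen pointwise — the operator is $C^\infty$-linear, so at each point it is an injective endomorphism of the finite-dimensional basic subspace $\rho^*\big(T^*_{\rho(p)}(P/G)\big)\subset T^*_pP$ and hence bijective; it is this finite-dimensionality, rather than $G$-equivariance, that closes that step.
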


\begin{proof}
Since $(\textup{Id} + B^\flat \circ \pi^\sharp)$ is invertible on $T^*P$ then, using the relations \eqref{Eq:ReducedBivector} and the fact that $B$ is basic, we obtain that $(\textup{Id} + B_\red^\flat \circ \pi_\red^\sharp)$ is invertible on $T^*(P/G)$.  By \eqref{Eq:IntroToGauge} we have that $\pi^\sharp = \pi_\B^\sharp \circ (\textup{Id} + B^\flat \circ \pi^\sharp)$ which is equivalent to
$\pi^\sharp \circ \rho^* = \pi_\B^\sharp \circ \rho^*(\textup{Id} + B_\red^\flat \circ \pi_\red^\sharp)$. Therefore, $\pi_\red^\sharp = (\pi_\red^\B)^\sharp \circ (\textup{Id} + B_\red^\flat \circ \pi_\red^\sharp)$.
\end{proof}

The goal of the remainder of this section is to find conditions guaranteeing that reduced almost Poisson brackets on $\M/G$ have integrable characteristic distributions. A class of almost Poisson brackets that has this integrability property (not necessarily satisfying the Jacobi identity) are the so-called {\it twisted Poisson} brackets. The appearance of these brackets in nonholonomic systems was initially  observed in \cite{PL2011}.

\subsubsection*{Reduced brackets and twisted Poisson structures}

A bivector field $\pi$ on a manifold $P$ is called a $\phi$-{\it twisted Poisson} \cite{SeveraWeinstein} if $\phi$ is a closed 3-form on $P$ such that
\begin{equation}
\label{E:twisted_Schout}
\frac{1}{2} [\pi, \pi] = \pi^\sharp(\phi).
\end{equation}

If $\{ \cdot, \cdot \}$ is the bracket defined by the $\phi$-twisted Poisson structure $\pi$, then (\ref{E:twisted_Schout}) becomes
\begin{equation*}
\{f, \{g,h\}\}+ \{g,\{h,f\}\} + \{h,\{f,g\}\} - \phi(X_f, X_g, X_h)=0, \label{twistedJAC}
\end{equation*}
for $f,g, h \in C^\infty(P)$ and $X_f = \{ \cdot , f\}$.

Let $\pi$ be a $\phi$-twisted Poisson bivector on the manifold $P$ and $[\cdot, \cdot]_\pi$ be the bracket on $T^*P$ given by (\ref{eq:LAPoisson}). Note that $\pi^\sharp$ does not preserve this bracket. However, using (\ref{E:not_Morph}) and (\ref{E:twisted_Schout}) we obtain
$$
[\pi^\sharp(\alpha),\pi^\sharp(\beta)] = \pi^\sharp \left( [\alpha,\beta]_\pi + {\bf i}_{\pi^\sharp(\alpha)\wedge \pi^\sharp(\beta)} \phi \right),
$$
for 1-forms $\alpha, \beta$ on $P$.  This induces the following modification of the bracket (\ref{eq:LAPoisson}):
\begin{equation*}
[\alpha, \beta]_\phi = \pounds_{\pi^\sharp(\alpha)} \beta - {\bf i}_{\pi^\sharp(\beta)} d\alpha + {\bf i}_{\pi^\sharp(\alpha)\wedge \pi^\sharp(\beta)} \phi, \label{eq:LAtwisted}
\end{equation*}
which makes $(T^*P, [\cdot, \cdot]_\phi, \pi^\sharp)$ into a Lie algebroid
\cite{SeveraWeinstein} (see also \cite[Sec.~2.2]{BCrainic}).
As a consequence, if $\pi$ is twisted Poisson, its characteristic distribution $\pi^\sharp(T^*P)$ is an integrable distribution (generally singular) since it is the image of the anchor map of a Lie algebroid.
Moreover, each leaf $\iota_{\mathcal O}: \mathcal{O} \hookrightarrow P$ of the corresponding foliation of $P$
is endowed with a non-degenerate 2-form $\Omega_{\mathcal O}$ that is not necessarily closed: if $\pi$ is $\phi$-twisted,
then  $\iota^*_{\mathcal O} \phi = d\Omega_{\mathcal O}$. Note also that if $\pi$ is $\phi$-twisted and $\phi'$ is a closed 3-form whose pullback to each leaf coincides with that of $\phi$, then $\pi$ is also $\phi'$-twisted.

\begin{example}
\
\begin{enumerate}
\item[$(i)$] If $\pi$ is the bivector field on $P$ defined by a nondegenerate 2-form  $\Omega$, then $\pi$ is $\phi$-twisted and $\phi=d\Omega$, see \eqref{Eq:Jacdomega}.
\item[$(ii)$]  Any bivector field $\pi$ on a manifold $P$ with an integrable regular characteristic distribution is twisted Poisson, even though there is no canonical choice of closed 3-form $\phi$, \cite{PL2011}.
\item[$(iii)$]  Let $\pi$ be a $\phi$-twisted Poisson bivector field on $P$. If
$\pi_\B$ is a bivector field obtained from $\pi$ by a gauge transformation by
a 2-form $B$,  then $\pi_\B$ is $(\phi-dB)$-twisted.
In particular, if $\pi$ is Poisson, then $\pi_\B$ is $(-dB)$-twisted.
\end{enumerate}
\end{example}

Coming back to our setting, observe that the almost Poisson manifold
$(\M, \pi_\nh)$ associated with a nonholonomic system is never
twisted Poisson since its characteristic distribution $\C$ is not
integrable. Moreover, since all bivector fields gauge related to
$\pi_\nh$ have $\C$ as characteristic distribution, none of them is
twisted Poisson.

Suppose that $G$ is a symmetry group for the nonholonomic system satisfying the dimension assumption \eqref{Eq:DimensionAssumption}. Consider the corresponding $G$-invariant almost Poisson manifold $(\M, \pi_\nh)$.
The next result gives sufficient conditions for reduced brackets to be twisted Poisson based on  Corollary \ref{C:RedJacobiator}.

\begin{corollary} \label{C:Twisted} Let $\W$ be a $G$-invariant complement of $\C$ in $T\M$ such that the 3-form $ d {\mathcal J} \wedge \mathcal{K}_\subW$ is closed.
Let $B$ be a $G$-invariant 2-form on $\M$ satisfying condition \eqref{Eq:Bsection} and defining a bivector field $\pi_\B$ gauge related to $\pi_{\emph\nh}$.
If  $d B + d {\mathcal J} \wedge \mathcal{K}_\subW $ is basic with respect to $\rho: \M \to \M/G$
then the induced reduced bivector field $\pi_{\emph \red}^\B$ is $\phi$-twisted Poisson for the 3-form $\phi$ on $\M/G$ defined by the condition
$$
\rho^* \phi = -( d B + d {\mathcal J} \wedge \mathcal{K}_\subW).
$$
\end{corollary}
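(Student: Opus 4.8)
The plan is to derive Corollary~\ref{C:Twisted} directly from Corollary~\ref{C:RedJacobiator} together with the characterization of twisted Poisson bivectors in \eqref{E:twisted_Schout}. Recall that Corollary~\ref{C:RedJacobiator} gives
\[
\tfrac{1}{2}[\pi^\B_\red, \pi^\B_\red] = -\pi_\B^\sharp\big(dB + d{\mathcal J}\wedge\mathcal{K}_\subW\big)\circ\rho^*,
\]
so the first step is simply to rewrite the right-hand side as $(\pi^\B_\red)^\sharp$ applied to a $3$-form on $\M/G$. This is exactly where the hypothesis that $dB + d{\mathcal J}\wedge\mathcal{K}_\subW$ is \emph{basic} enters: by definition of basic, there is a (unique) $3$-form $\phi$ on $\M/G$ with $\rho^*\phi = -(dB + d{\mathcal J}\wedge\mathcal{K}_\subW)$. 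I would then show that $-\pi_\B^\sharp(dB + d{\mathcal J}\wedge\mathcal{K}_\subW)\circ\rho^* = (\pi^\B_\red)^\sharp(\phi)$; this follows from the defining relation \eqref{Eq:ReducedBivector} for $\pi^\B_\red$, namely $T\rho\circ\pi_\B^\sharp\circ\rho^* = (\pi^\B_\red)^\sharp$, extended to $k$-forms via the convention $\pi^\sharp(\phi)(\a_1,\dots,\a_k) = (-1)^k\phi(\pi^\sharp(\a_1),\dots,\pi^\sharp(\a_k))$ introduced in Section~\ref{Sss:GaugeTransf}. Concretely, for $\a_1,\a_2,\a_3\in\Omega^1(\M/G)$ one evaluates both sides on $\rho^*\a_i$ and uses $\pi_\B^\sharp(\rho^*\a_i) = $ a vector field projecting under $T\rho$ to $(\pi^\B_\red)^\sharp(\a_i)$, together with $\rho^*(dB+d{\mathcal J}\wedge\mathcal{K}_\subW) = \ldots$ — wait, more precisely one uses $(dB+d{\mathcal J}\wedge\mathcal{K}_\subW)$ evaluated on $\pi_\B^\sharp(\rho^*\a_i)$, which by basicness equals $-\rho^*\phi$ evaluated there, i.e. $-\phi\big(T\rho(\pi_\B^\sharp(\rho^*\a_1)),\dots\big) = -\phi\big((\pi^\B_\red)^\sharp(\a_1),\dots\big)$, and the sign bookkeeping from the $(-1)^k$ convention matches on both sides.

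Next I must check the closedness requirement in the definition of $\phi$-twisted Poisson: we need $d\phi = 0$. Since $\rho^*\phi = -(dB + d{\mathcal J}\wedge\mathcal{K}_\subW)$ and $d$ commutes with pullback, $\rho^*(d\phi) = -d(dB) - d(d{\mathcal J}\wedge\mathcal{K}_\subW) = -d(d{\mathcal J}\wedge\mathcal{K}_\subW)$, which vanishes by the standing hypothesis that $d{\mathcal J}\wedge\mathcal{K}_\subW$ is closed. Because $\rho$ is a surjective submersion, $\rho^*$ is injective on forms, so $\rho^*(d\phi)=0$ forces $d\phi=0$. (Strictly, one should note $d\phi$ is automatically basic when $\phi$ is, so this argument is clean.) Combining the two steps, $\tfrac{1}{2}[\pi^\B_\red,\pi^\B_\red] = (\pi^\B_\red)^\sharp(\phi)$ with $\phi$ closed, which is precisely \eqref{E:twisted_Schout}, so $\pi^\B_\red$ is $\phi$-twisted Poisson.

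The main obstacle I anticipate is not conceptual but notational: getting the sign and the $(-1)^k$ convention in $\pi^\sharp(\phi)$ to cohere with the sign already present in Corollary~\ref{C:RedJacobiator} and with the $\rho$-relatedness of Schouten brackets (the identity $[\pi^\B_\red,\pi^\B_\red] = T\rho[\pi_\B,\pi_\B]\circ\rho^*$ used in the proof of Corollary~\ref{C:RedJacobiator}). I would handle this by working at the level of the trivector field rather than the bracket of functions, writing everything as $(\pi^\B_\red)^\sharp$ applied to the basic $3$-form and tracking the lone overall sign carefully once. A secondary point worth a sentence is invariance: one should confirm that $dB + d{\mathcal J}\wedge\mathcal{K}_\subW$ is $G$-invariant (so that ``basic'' makes sense) — $B$ is $G$-invariant by hypothesis, ${\mathcal J}$ is equivariant and $\mathcal{K}_\subW$ is ad-equivariant by Proposition~\ref{Prop:Prop1}$(ii)$, from which $d{\mathcal J}\wedge\mathcal{K}_\subW$ is readily seen to be $G$-invariant, so the only substantive hypothesis is horizontality, i.e. that it is annihilated by vertical vectors, which is the ``basic'' assumption in the statement.
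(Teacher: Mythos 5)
Your proposal is correct and follows exactly the route the paper intends: Corollary~\ref{C:Twisted} is presented as an immediate consequence of Corollary~\ref{C:RedJacobiator}, using basicness to push the $3$-form $-(dB + d\mathcal{J}\wedge\mathcal{K}_\subW)$ down to $\phi$ on $\M/G$, matching the result with $(\pi^\B_\red)^\sharp(\phi)$ via the relation \eqref{Eq:ReducedBivector}, and obtaining $d\phi=0$ from the closedness hypothesis on $d\mathcal{J}\wedge\mathcal{K}_\subW$ and injectivity of $\rho^*$. Your sign check with the $(-1)^k$ convention is consistent with \eqref{E:twisted_Schout}, so nothing is missing.
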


In particular, if we now consider the $G$-invariant hamiltonian $\Ham_\M: \M\to \mathbb{R}$ and if the 2-form $B$ in Corollary \ref{C:Twisted} defines a {\it dynamical} gauge transformation (as in Def.~\ref{D:dynamicalGauge}), then the reduced nonholonomic vector field $X^\nh_\red$ can be written as
$$
({\pi}_\red^\B)^\sharp(d \Ham_\red) = - X^\nh_\red,
$$
where $\Ham_\red:\M/G \to \R$ is the reduced hamiltonian function ( i.e, $\Ham_\subM = \rho^*\Ham_\red$). So, in this case, the reduced dynamics is described by a twisted Poisson bracket. Note that we are not assuming any regularity condition on $\pi^\B_\red$.

\subsection{A special class of symmetries}\label{sebsec:verticalcond}

We now show how many of the results discussed so far simplify once
we impose an additional condition on the $G$-invariant complement
$\W$ of $\C$.

We say that $\W$ satisfies the {\it vertical-symmetry condition} if
there is a Lie subalgebra $\mathfrak{g}_\subW \subseteq
\mathfrak{g}$ such that, for all $m\in \M$,
\begin{equation}\label{Eq:VerticalSymmetries}
\W_m =  \{\xi_\M(m)\,|\, \xi\in \mathfrak{g}_\subW \}.
\end{equation}

Observe that we only need to require $\mathfrak{g}_\W$ to be a
vector subspace since in this case it automatically satisfies that
$[\mathfrak{g},\mathfrak{g}_\W]\subset \mathfrak{g}_\W$ by the
$G$-invariance of $\W$.

The  vertical-symmetry condition is a restrictive condition on the
symmetries, as it implies the existence of a subgroup $G_\subW$ of
$G$ for which the system is Chaplygin ($G_\subW$ does not need to be embeded in $G$). But, as we will see, many
non-Chaplygin systems satisfy this condition, including  the
vertical rolling disk (Section \ref{Ex:VerticalDisk}), the
nonholonomic particle (Sections \ref{Ex:NHParticle},
\ref{Ex:NHParticle2}), the snakeboard (Section \ref{Ex:skateboard})
and the rigid bodies with generalized constraints (Section
\ref{Ex:RigidBody})).

\begin{remark} Note that systems satisfying the vertical-symmetry condition
can be seen as Chaplygin systems with additional symmetries, so one
can in principle treat them in two steps: first taking care of the
Chaplygin symmetries, then treating the extra ones (along the lines
of e.g. \cite{Hoch,Oscar}). Comparing this viewpoint with the
direct, one-step approach is the subject of a separate work in
preparation \cite{PO}.
\end{remark}

The next result shows some consequences of \eqref{Eq:VerticalSymmetries}.

\begin{proposition} \label{Prop:Prop1_VertSymm} For any $G$-invariant vertical  complement $\W$ of $\C$ in $T\M$ satisfing the vertical-symmetry condition \eqref{Eq:VerticalSymmetries}, the $\W$-curvature $\mathcal{K}_\subW$ verifies
\begin{enumerate}
\item[$(i)$] $\mathcal{K}_\subW (X,Y)= d \mathcal{A}_\subW(X,Y),  \;\; \mbox{ for all } X \in \Gamma(\C), Y \in \mathfrak{X}(\M).$
\item[$(ii)$] $d \mathcal{K}_\subW (X,Y,Z)=0$,\; for all $X,Y,Z \in \Gamma(\C).$
\item[$(iii)$] $d\mathcal{J} \wedge {\mathcal K}_\subW = d \langle {\mathcal J} , \mathcal{K}_\subW \rangle$.
\end{enumerate}
\end{proposition}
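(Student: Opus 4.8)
The plan is to exploit the defining formula $\mathcal{K}_\subW(X,Y) = d\mathcal{A}_\subW(P_\subC X, P_\subC Y)$ together with the key structural consequence of the vertical-symmetry condition \eqref{Eq:VerticalSymmetries}, namely that $\mathcal{A}_\subW$ now takes values in the \emph{subalgebra} $\mathfrak{g}_\subW$ and that $\W = \mathfrak{g}_\subW \cdot \M$ is itself a ``vertical'' distribution for the subgroup $G_\subW$. For part $(i)$, I would argue that if $X\in\Gamma(\C)$ and $Y\in\mathfrak{X}(\M)$, we can write $Y = P_\subC Y + P_\subW Y$, and $P_\subW Y = \xi_\M$ for some $\xi\in C^\infty(\M,\mathfrak{g}_\subW)$. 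Then $\mathcal{K}_\subW(X,Y) - d\mathcal{A}_\subW(X,Y) = -d\mathcal{A}_\subW(X, P_\subW Y) = -d\mathcal{A}_\subW(X,\xi_\M)$; expanding $d\mathcal{A}_\subW(X,\xi_\M) = X(\mathcal{A}_\subW(\xi_\M)) - \xi_\M(\mathcal{A}_\subW(X)) - \mathcal{A}_\subW([X,\xi_\M])$, one uses $\mathcal{A}_\subW(\xi_\M) = \xi$, the ad-equivariance property $\pounds_{\xi_\M}\mathcal{A}_\subW = ad_\xi\mathcal{A}_\subW$ (a consequence of $G$-invariance of $\W$), and $[X,\xi_\M] = -\pounds_{\xi_\M}X$ with $X\in\Gamma(\C)$ being $\xi_\M$-related through the flow, so that these terms cancel. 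This is essentially the standard ``horizontal part of the curvature equals $d$ of the connection'' computation, adapted to the fact that $\mathcal{A}_\subW$ is only partially defined. The subtle point is handling the non-constancy of $\xi$ as a section.

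For part $(ii)$, the cleanest route is to use $(i)$: for $X,Y,Z\in\Gamma(\C)$ we may, by $C^\infty$-bilinearity established via $(i)$, compute $d\mathcal{K}_\subW(X,Y,Z)$ through the Cartan formula for $d$ of the 2-form $\mathcal{K}_\subW$. Writing out $d\mathcal{K}_\subW(X,Y,Z) = \mathrm{cycl}\,[X(\mathcal{K}_\subW(Y,Z)) - \mathcal{K}_\subW([X,Y],Z)]$ and substituting $\mathcal{K}_\subW = d\mathcal{A}_\subW$ on $\C$ from $(i)$ — note $[X,Y]$ need not lie in $\Gamma(\C)$, but by $(i)$ only its $\C$-component matters inside $\mathcal{K}_\subW$, while the $\W$-component contributes $-\mathcal{A}_\subW(P_\subW[X,Y]) = \mathcal{K}_\subW(X,Y)$ back by Proposition~\ref{Prop:Prop1}$(i)$ — one recognizes the expression as $d(d\mathcal{A}_\subW)(X,Y,Z) = 0$ up to terms that reorganize into $d^2 = 0$. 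Concretely, I expect the identity $d\mathcal{K}_\subW(X,Y,Z) = dd\mathcal{A}_\subW(X,Y,Z) = 0$ to come out once the $\W$-components of the various brackets are carefully tracked using Proposition~\ref{Prop:Prop1}$(i)$; this bookkeeping of which bracket components land in $\C$ versus $\W$ is the main obstacle, since unlike an ordinary principal connection $\mathcal{K}_\subW$ does not automatically satisfy a Bianchi identity — it is precisely the vertical-symmetry hypothesis that makes the ``extra'' vertical directions inert.

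For part $(iii)$, recall the definitions $d\mathcal{J}\wedge\mathcal{K}_\subW(X,Y,Z) = \mathrm{cycl}\,[\langle d\mathcal{J}(Z), \mathcal{K}_\subW(X,Y)\rangle]$ and compute $d\langle\mathcal{J},\mathcal{K}_\subW\rangle$ via the Leibniz rule for $d$, where $\langle\mathcal{J},\mathcal{K}_\subW\rangle$ is the $\R$-valued 2-form $X,Y\mapsto\langle\mathcal{J},\mathcal{K}_\subW(X,Y)\rangle$. Expanding, $d\langle\mathcal{J},\mathcal{K}_\subW\rangle = \langle d\mathcal{J}\wedge\mathcal{K}_\subW\rangle + \langle\mathcal{J}, d\mathcal{K}_\subW\rangle$ in the appropriate sense, so the claim reduces to showing the term involving $d\mathcal{K}_\subW$ vanishes when paired with $\mathcal{J}$. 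Since the $3$-form $\langle\mathcal{J}, d\mathcal{K}_\subW\rangle$ is $C^\infty$-linear, it suffices to evaluate it on $\C$-valued vector fields (here using that $\C$ generates $T\M$ pointwise under the dimension assumption — one checks separately that inserting a vertical $\W$-direction also gives zero, using semi-basicity of $\mathcal{K}_\subW$ via $(i)$ and $\mathcal{J}$'s behavior on fibers), and there it vanishes by part $(ii)$. I would present $(iii)$ last precisely because it is a formal consequence of $(ii)$ plus the Leibniz rule, modulo verifying the vanishing when one argument is vertical; that verification is routine but should be stated.
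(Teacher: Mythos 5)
Your treatment of parts $(i)$ and $(ii)$ is essentially the paper's own argument. For $(i)$, the paper makes your ``cancellation'' precise by writing the vertical part of $Y$ as $f_i\,\eta^i_\subM$ with $f_i\in C^\infty(\M)$ and \emph{constant} $\eta^i\in\mathfrak{g}_\subW$ (this is exactly where \eqref{Eq:VerticalSymmetries} enters), and then using only that $\mathcal{A}_\subW(\eta^i_\subM)=\eta^i$, that $\mathcal{A}_\subW(X)=0$, and that $[X,\eta^i_\subM]\in\Gamma(\C)$ by $G$-invariance of $\C$; the bracket term $X(f_i)\eta^i$ then cancels against $X(\mathcal{A}_\subW(f_i\eta^i_\subM))$. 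Your appeal to $\pounds_{\xi_\subM}\mathcal{A}_\subW=ad_\xi\mathcal{A}_\subW$ for a \emph{non-constant} section $\xi$ is not quite the right tool (equivariance is a statement about constant $\xi$), but the mechanism you describe is the correct one and is easily repaired as above. For $(ii)$ your route works, and more simply than you fear: since $(i)$ holds with only one argument in $\Gamma(\C)$, for $X,Y,Z\in\Gamma(\C)$ both $\mathcal{K}_\subW(Y,Z)$ and $\mathcal{K}_\subW([X,Y],Z)$ may be replaced by $d\mathcal{A}_\subW$, so the Cartan expansion gives $d\mathcal{K}_\subW(X,Y,Z)=d(d\mathcal{A}_\subW)(X,Y,Z)=0$ with no tracking of $\W$-components of brackets at all; the paper's computation (substituting $\mathcal{K}_\subW(X,Y)=-\mathcal{A}_\subW([X,Y])$ from Proposition~\ref{Prop:Prop1}$(i)$ and finishing with the Jacobi identity) is the same argument.

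In part $(iii)$ there is a genuine gap. The Leibniz decomposition $d\langle\mathcal{J},\mathcal{K}_\subW\rangle = d\mathcal{J}\wedge\mathcal{K}_\subW + \langle\mathcal{J},d\mathcal{K}_\subW\rangle$ is the right starting point, but your reduction of $\langle\mathcal{J},d\mathcal{K}_\subW\rangle\equiv 0$ to part $(ii)$ leans on the assertion that ``$\C$ generates $T\M$ pointwise under the dimension assumption,'' which is false: the dimension assumption says $T\M=\C+\V$, and $T\M=\C\oplus\W$ with $\W\neq 0$ outside the Chaplygin case, so a $3$-form is not determined by its values on $\C$. The vertical verification you defer as ``routine'' is in fact the crux and does not follow from semi-basicity: for constant $\xi\in\mathfrak{g}_\subW$ one has ${\bf i}_{\xi_\subM}\mathcal{K}_\subW=0$, hence ${\bf i}_{\xi_\subM}d\mathcal{K}_\subW=\pounds_{\xi_\subM}\mathcal{K}_\subW=ad_\xi\mathcal{K}_\subW$ by Proposition~\ref{Prop:Prop1}$(ii)$, so the vertical contraction of $\langle\mathcal{J},d\mathcal{K}_\subW\rangle$ is $\langle\mathcal{J},ad_\xi\mathcal{K}_\subW\rangle=-\langle ad_\xi^*\mathcal{J},\mathcal{K}_\subW\rangle$, which has no general reason to vanish (it does when $\mathfrak{g}_\subW$ is abelian, as in all the paper's examples, since $\mathcal{K}_\subW$ is $\mathfrak{g}_\subW$-valued). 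What your parts $(i)$--$(ii)$ do yield --- and what the paper's own one-line proof of $(iii)$ (``it follows from \eqref{Eq:dJ^K} and $d\mathcal{K}_\subW|_\C\equiv 0$'') actually establishes and later uses, since these $3$-forms are only ever evaluated on hamiltonian vector fields, which lie in $\C$ --- is the equality of the two $3$-forms on arguments in $\C$. So either prove and state the identity restricted to $\C$, or supply an additional argument or hypothesis killing $\langle\mathcal{J},ad_\xi\mathcal{K}_\subW\rangle$ in the vertical directions; as written, that step does not go through.
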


\begin{proof}

$(i)$: If $X,Y \in \Gamma(\C)$, then  $\mathcal{K}_\subW (X,Y)= d \mathcal{A}_\subW(X,Y)$ by the definition of $\mathcal{K}_\subW$.
If $X\in \Gamma(\C)$ and $Y \in \Gamma(\mathcal{W})$, then $\mathcal{K}_\subW(X,Y)=0$. In order to check that $d
\mathcal{A}_\subW(X,Y)=0$,  we write $Y = f_i \eta^i_\M$ where $f_i \in C^\infty(\M)$ and $\eta^i \in \mathfrak{g}_\subW$. Then,
using that $[X,\eta^i_\subM] \in \Gamma(\C)$ (because $\C$ is
$G$-invariant) and that $\mathcal{A}_\subW(\eta^i_\subM)=\eta^i$, we get
\begin{equation*}
d\mathcal{A}_\subW(X,Y)(m) = d\mathcal{A}_\subW(X, f_i \eta^i_\subM) =
f_i \left( X(\eta^i)- \mathcal{A}_\subW([X,\eta^i_\subM]) \right) =0.
\end{equation*}

$(ii)$: Let $X,Y,Z \in \Gamma(\C)$. Then using item $(i)$ we obtain
\begin{eqnarray*}
d \mathcal{K}_\subW (X,Y,Z)\! & = &\! cycl [ \, Z(\mathcal{K}_\subW (X,Y)) - \mathcal{K}_\subW([X,Y],Z) \, ] 
= cycl [\, - Z(\mathcal{A}_\subW ([X,Y])) - d \mathcal{A}_\subW ([X,Y],Z) \, ] \\
&=& \! cycl [\, - Z(\mathcal{A}_\subW ([X,Y])) + Z(\mathcal{A}_\subW ([X,Y])) + \mathcal{A}_\subW ([[X,Y],Z]) \, ]=0.
\end{eqnarray*}

$(iii)$: It follows from \eqref{Eq:dJ^K} and $d{\mathcal K}_\subW |_\C \equiv 0$.
\end{proof}

We can simplify
Theorem \ref{T:JacobiatorNH} and Corollary \ref{C:RedJacobiator} when $\W$ is chosen such that condition \eqref{Eq:VerticalSymmetries} holds.

\begin{theorem} \label{T:JacobVerticalSymm} Let $(\M,\pi_{\emph\nh})$ be the $G$-invariant almost Poisson manifold associated to a nonholonomic system with symmetries satisfying the dimension assumption \eqref{Eq:DimensionAssumption}.
Let $\W$ be a $G$-invariant vertical complement of $\C$ satisfying the vertical-symmetry condition \eqref{Eq:VerticalSymmetries}. Then:
\begin{enumerate}
\item[$(i)$]  the Jacobiator of a bivector field $\pi_\B$ on $\M$ gauge related to $\pi_{\emph \nh}$ by a 2-form $B$ satisfying condition \eqref{Eq:Bsection} is given by
$$
\frac{1}{2}[\pi_\B, \pi_\B] = -\pi_\B^\sharp (d B +  d\langle{\mathcal J} , \mathcal{K}_\subW \rangle ) - \psi_{\pi_\B}.
$$
\item[$(ii)$] If, in addition, $B$ is $G$-invariant, then the Jacobiator of the reduced bivector field $\pi_{\emph\red}^\B$ is
$$
\frac{1}{2}[\pi^\B_{\emph \red}, \pi^\B_{\emph \red}] = -\pi_\B^\sharp (d B + d \langle {\mathcal J} ,\mathcal{K}_\subW \rangle ) \circ \rho^*.
$$
\end{enumerate}
\end{theorem}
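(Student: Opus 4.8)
The plan is to derive Theorem~\ref{T:JacobVerticalSymm} as a direct corollary of Theorem~\ref{T:JacobiatorNH} (for part~$(i)$) and Corollary~\ref{C:RedJacobiator} (for part~$(ii)$), using only the extra structure provided by Proposition~\ref{Prop:Prop1_VertSymm}. The key observation is that the only difference between the generic formulas in Section~\ref{Ss:Jacob-sym} and the present ones is that the 3-form $d\mathcal{J}\wedge\mathcal{K}_\subW$ gets replaced by the exact 3-form $d\langle\mathcal{J},\mathcal{K}_\subW\rangle$; once that identification is in place, everything else carries over verbatim.

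First I would start from Theorem~\ref{T:JacobiatorNH}, which asserts, under the dimension assumption \eqref{Eq:DimensionAssumption} and condition \eqref{Eq:Bsection} on $B$,
$$
\frac{1}{2}[\pi_\B,\pi_\B] = -\pi_\B^\sharp\left(dB + d\mathcal{J}\wedge\mathcal{K}_\subW\right) - \psi_{\pi_\B}.
$$
Then I would invoke Proposition~\ref{Prop:Prop1_VertSymm}$(iii)$, which under the vertical-symmetry condition \eqref{Eq:VerticalSymmetries} gives the identity $d\mathcal{J}\wedge\mathcal{K}_\subW = d\langle\mathcal{J},\mathcal{K}_\subW\rangle$ as 3-forms on $\M$. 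Substituting this into the displayed formula yields part~$(i)$ immediately, since $\psi_{\pi_\B}$ is unchanged. For part~$(ii)$, assuming in addition that $B$ is $G$-invariant, I would apply Corollary~\ref{C:RedJacobiator}, which states $\frac{1}{2}[\pi^\B_\red,\pi^\B_\red] = -\pi_\B^\sharp(dB + d\mathcal{J}\wedge\mathcal{K}_\subW)\circ\rho^*$; again replacing $d\mathcal{J}\wedge\mathcal{K}_\subW$ by $d\langle\mathcal{J},\mathcal{K}_\subW\rangle$ via Proposition~\ref{Prop:Prop1_VertSymm}$(iii)$ gives the claimed expression. One small point I would make explicit: part~$(ii)$ does not feature a $\psi$-type term because, as in the proof of Corollary~\ref{C:RedJacobiator}, the Schouten bracket of the reduced bivector is $\rho$-related to that of $\pi_\B$, and $\psi_{\pi_\B}$ vanishes on forms pulled back from $\M/G$ — this is essentially the statement that $\pi_\red^\B$ on $\M/G$ need not carry the obstruction to involutivity of $\C$.

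There is really no serious obstacle here: the theorem is a ``simplification'' statement, so the entire content has been pushed into Proposition~\ref{Prop:Prop1_VertSymm}$(iii)$, whose proof in turn rests on $(i)$ and $(ii)$ of that proposition (namely that $\mathcal{K}_\subW$ agrees with $d\mathcal{A}_\subW$ on $\C\times T\M$ and that $d\mathcal{K}_\subW|_\C\equiv 0$). If I were asked to make the argument self-contained I would spend the effort reproving that $d\langle\mathcal{J},\mathcal{K}_\subW\rangle$ and $d\mathcal{J}\wedge\mathcal{K}_\subW$ have the same values when all three arguments lie in $\C$ — the point being that the ``extra'' terms in expanding $d\langle\mathcal{J},\mathcal{K}_\subW\rangle$ involve $\langle\mathcal{J}, d\mathcal{K}_\subW(\cdots)\rangle$-type contributions that vanish by Proposition~\ref{Prop:Prop1_VertSymm}$(ii)$ — and then use $C^\infty(\M)$-linearity together with the fact that $\pi_\B^\sharp$ takes values in $\C$ to conclude that $\pi_\B^\sharp$ applied to either 3-form gives the same trivector. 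But given the excerpt, the honest and shortest route is simply to cite Theorem~\ref{T:JacobiatorNH}, Corollary~\ref{C:RedJacobiator}, and Proposition~\ref{Prop:Prop1_VertSymm}$(iii)$, and observe that the substitution is legitimate because the two 3-forms are literally equal on $\M$.
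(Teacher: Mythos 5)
Your proposal is correct and follows exactly the route the paper intends: Theorem~\ref{T:JacobVerticalSymm} is presented there as an immediate simplification of Theorem~\ref{T:JacobiatorNH} and Corollary~\ref{C:RedJacobiator}, with the substitution $d\mathcal{J}\wedge\mathcal{K}_\subW = d\langle\mathcal{J},\mathcal{K}_\subW\rangle$ supplied by Proposition~\ref{Prop:Prop1_VertSymm}$(iii)$, and your side remarks (the identity only being needed on arguments in $\C$, and $\psi_{\pi_\B}$ dying on pullback forms since $(\mathcal{K}_\subW(\cdot,\cdot))_\subM$ is vertical) are accurate.
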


\begin{corollary} \label{C:TwistedVertSymm}
Under the same hypotheses of Theorem \ref{T:JacobVerticalSymm}~$(ii)$, if $B$ is such that $dB + d\langle {\mathcal J}, {\mathcal K}_\subW \rangle$ is basic, then $\pi_{\emph\red}^\B$ is a $\phi$-twisted Poisson bivector on $\M/G$ where $\phi$ is the 3-form on $\M/G$ such that $\rho^*\phi = -(dB + d\langle {\mathcal J}, {\mathcal K}_\subW \rangle)$.
\end{corollary}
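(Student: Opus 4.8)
The plan is to deduce Corollary \ref{C:TwistedVertSymm} as an essentially immediate consequence of Theorem \ref{T:JacobVerticalSymm}$(ii)$, together with the general facts about twisted Poisson structures recalled earlier in the excerpt. The starting point is the Jacobiator formula
$$
\tfrac{1}{2}[\pi_\red^\B,\pi_\red^\B] = -\pi_\B^\sharp(dB + d\langle \mathcal{J},\mathcal{K}_\subW\rangle)\circ\rho^*,
$$
valid under the hypotheses of Theorem \ref{T:JacobVerticalSymm}$(ii)$ (namely the dimension assumption \eqref{Eq:DimensionAssumption}, the vertical-symmetry condition \eqref{Eq:VerticalSymmetries} on $\W$, and a $G$-invariant $B$ satisfying \eqref{Eq:Bsection}). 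The extra hypothesis now added is that the 3-form $dB + d\langle\mathcal{J},\mathcal{K}_\subW\rangle$ on $\M$ is basic with respect to $\rho:\M\to\M/G$, i.e. there is a 3-form $\phi$ on $\M/G$ with $\rho^*\phi = -(dB+d\langle\mathcal{J},\mathcal{K}_\subW\rangle)$.

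First I would observe that the right-hand side of the Jacobiator formula can be rewritten in terms of $\phi$ and the reduced bivector. From the basicness hypothesis, $-(dB+d\langle\mathcal{J},\mathcal{K}_\subW\rangle) = \rho^*\phi$, so for 1-forms $\alpha,\beta,\gamma$ on $\M/G$, using the defining relation \eqref{Eq:ReducedBivector} for $\pi_\red^\B$ (that $T\rho\circ\pi_\B^\sharp\circ\rho^* = (\pi_\red^\B)^\sharp$), one gets that $\pi_\B^\sharp(dB+d\langle\mathcal{J},\mathcal{K}_\subW\rangle)\circ\rho^*$ is $\rho$-related to $-(\pi_\red^\B)^\sharp(\phi)$; concretely, applying the contraction of the trivector against $\rho^*\alpha\wedge\rho^*\beta\wedge\rho^*\gamma$ and pushing forward by $T\rho$ yields
$$
\tfrac{1}{2}[\pi_\red^\B,\pi_\red^\B] = \pi_\red^{\B\,\sharp}(\phi).
$$
This is exactly the defining relation \eqref{E:twisted_Schout} for a $\phi$-twisted Poisson structure, provided $\phi$ is closed.

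Second, I would check that $\phi$ is indeed closed. Since $\rho^*\phi = -(dB+d\langle\mathcal{J},\mathcal{K}_\subW\rangle)$ and $d$ commutes with pullback, $\rho^*(d\phi) = -d(dB+d\langle\mathcal{J},\mathcal{K}_\subW\rangle) = 0$. Because $\rho$ is a surjective submersion, $\rho^*$ is injective on forms on $\M/G$, hence $d\phi = 0$. Therefore $\phi$ is a closed 3-form and $\pi_\red^\B$ is $\phi$-twisted Poisson, which is the assertion.

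I do not expect any genuine obstacle here, since the corollary is purely formal given Theorem \ref{T:JacobVerticalSymm}$(ii)$; the only point requiring minor care is the bookkeeping of the $\rho$-relatedness — making sure that the identity $-\pi_\B^\sharp(dB+d\langle\mathcal{J},\mathcal{K}_\subW\rangle)\circ\rho^* = \rho^*$-image of $\pi_\red^{\B\,\sharp}(\phi)$ is correctly interpreted as an equality of trivectors on $\M/G$ after pushing forward by $T\rho$, and invoking injectivity of $\rho^*$ (equivalently, surjectivity of $T\rho$ on the appropriate multivector bundles) to transfer the equation to $\M/G$. One should also note, as the paper itself remarks, that no regularity of $\pi_\red^\B$ is needed for this conclusion, so the statement is genuinely about the possibly-singular characteristic distribution being integrable via the Lie algebroid structure on $T^*(\M/G)$ associated to a twisted Poisson bivector.
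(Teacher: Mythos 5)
Your proposal is correct and follows exactly the route the paper intends: the corollary is an immediate consequence of Theorem \ref{T:JacobVerticalSymm}$(ii)$, rewriting $-\pi_\B^\sharp(dB+d\langle \mathcal{J},\mathcal{K}_\subW\rangle)\circ\rho^*$ as $(\pi_\red^\B)^\sharp(\phi)$ via \eqref{Eq:ReducedBivector} and the basicness hypothesis, with closedness of $\phi$ following from $\rho^*(d\phi)=-d(dB+d\langle\mathcal{J},\mathcal{K}_\subW\rangle)=0$ and injectivity of $\rho^*$ for the surjective submersion $\rho$. The sign bookkeeping with the convention $\pi^\sharp(\phi)(\alpha,\beta,\gamma)=(-1)^3\phi(\pi^\sharp\alpha,\pi^\sharp\beta,\pi^\sharp\gamma)$ also checks out, so nothing is missing.
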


\bigskip

\begin{example}[The nonholonomic particle]
\label{Ex:NHParticle}

Consider a particle in $Q=\R^3$ subjected only to the constraint
$\epsilon=0$, where
$$
\epsilon = dz-ydx,
$$
with lagrangian $\Lag=\frac{1}{2}(\dot x^2+\dot y^2+\dot z^2)$.
The system admits symmetries given by the action of the Lie
group $\R^2$ on $Q$ such that the vertical space is
$V= \textup{span} \left\{ \frac{\partial}{\partial x}, \frac{\partial}{\partial z}
\right\}$.
The non-integrable distribution $D$ on $Q$ is given by $D = \textup{span} \left\{ \frac{\partial}{\partial x}+ y
\frac{\partial}{\partial z} , \frac{\partial}{\partial y} \right\}$ and we choose the complement $W$ of $D$ to be $W=\textup{span} \left\{
\frac{\partial}{\partial z} \right\}$. Observe that $W$ is not equal to $S^\perp \cap V$.
In canonical coordinates, the constraint manifold is
$$
\M = \{(x,y,z;p_x,p_y,p_z) \ : \ p_z=yp_x\}.
$$
From here we see that
\begin{equation} \label{Ex:NHP_C+W}
\C = \textup{span}  \left\{ \frac{\partial}{\partial x}+ y
\frac{\partial}{\partial z} , \frac{\partial}{\partial y}, \frac{\partial}{\partial p_x}, \frac{\partial}{\partial p_y}  \right\} \qquad \mbox{and} \qquad \W=\textup{span} \left\{
\frac{\partial}{\partial z} \right\}.
\end{equation}

Note that the vertical-symmetry condition \eqref{Eq:VerticalSymmetries} is satisfied, for
\begin{equation} \label{Ex:NHPart-LieSubspaces}
\mathfrak{g}_\subW = \textup{span} \{ ( 0,1) \} \subset \mathfrak{g}=\mathbb{R}^2.
\end{equation}

\begin{remark}\label{Rem:horiz}
Note that the subbundle $\mathfrak{g}_\subS := \{(m,\xi)\,:\, \xi_\subM(m)\in \S_m\}\subset \M\times \mathfrak{g}$ in this case is given by  $\mathfrak{g}_\subS |_m = \textup{span} \{(1 , y) \}$, for $m \in \M$;
so the system does not admit horizontal symmetries in the sense of \cite{BKMM}. So, the vertical-symmetry condition may hold without horizontal symmetries.
\end{remark}

The projection to $\W$ adapted to the decomposition \eqref{Ex:NHP_C+W}  is given by $P_\subW (v_m) = \epsilon(v_m)\frac{\partial}{\partial z}$, for $v_m \in T_m\M$, where we consider $\{dx, dy, dp_x,dp_y, \epsilon\}$  a local basis of $T^*\M$.
Therefore $\mathcal{A}_\subW: T\M \to \mathfrak{g}$ is given by $\mathcal{A}_\subW = (0, \epsilon)$ (see \eqref{Eq:AkinPw}).
Since $\mathfrak{g}_\subW$ is a vector space, the $\W$-curvature is $\mathcal{K}_\subW = (0, d \epsilon) = (0, dx\wedge dy)$.
To compute the momentum map ${\mathcal J}:\M \to \mathfrak{g}^*$ we use \eqref{Eq:Def-MomentumMapOnM} for the canonical 1-form $\Theta_\subM =p_xdx+p_ydy + yp_xdz$ and $(1,y)_\subM= \frac{\partial}{\partial x}+ y
\frac{\partial}{\partial z}$  and $(0,1)_\subM= \frac{\partial}{\partial z}$. Therefore, in canonical coordinates,
${\mathcal J}= ((1+y^2) p_x, yp_x).$
The 2-form $\langle {\mathcal J}, \mathcal{K}_\subW \rangle$ is given by
\begin{equation} \label{Eq:JK-NHParticle}
\langle {\mathcal J},
\mathcal{K}_\subW\rangle = y p_x dx\wedge dy.
\end{equation}

In this case, we can explicitly check that $d{\mathcal J} \wedge {\mathcal K}_\subW = d\langle {\mathcal J},
\mathcal{K}_\subW\rangle$.  By Theorem \ref{T:JacobiatorNH} or \ref{T:JacobVerticalSymm} we have the ingredients to write the Jacobiator for $\pi_\nh$ (and for gauge related bivector fields $\pi_\B$).  More interesting is analyzing the Jacobiator of the reduced bivector field $\pi_\red^\nh$ on the reduced manifold $\M/G \simeq \R^3$ (with local coordinates $(y,p_x,p_y)$). Observe that $\langle {\mathcal J}, \mathcal{K}_\subW\rangle$ is
not basic (not even $d \langle {\mathcal J}, \mathcal{K}_\subW\rangle$ is basic)
but
\begin{equation} \label{Eq:dJK-NHParticle}
d \langle {\mathcal J}, \mathcal{K}_\subW\rangle (
\pi_\nh^\sharp(\rho^*dy), \pi_\nh^\sharp(\rho^*dp_x),
\pi_\nh^\sharp(\rho^*dp_y) ) = 0.
\end{equation}
By Theorem~\ref{T:JacobVerticalSymm}$(ii)$, we conclude that $[\pi^\nh_\red, \pi^\nh_\red]=0$, i.e., the reduced nonholonomic bracket is Poisson. Equation \eqref{Eq:dJK-NHParticle} can be easily checked from the fact that
$$
\pi^\sharp_\nh(dy)=\frac{\partial}{\partial p_y}, \quad \pi^\sharp_\nh(dp_x)= -\frac{1}{y^2+1} \left(\frac{\partial}{\partial x} + y \frac{\partial}{\partial z} \right) - \frac{yp_x}{y^2+1} \frac{\partial}{\partial p_y}, \quad  \pi^\sharp_\nh(dp_y)=  - \frac{\partial}{\partial y} + \frac{yp_x}{y^2+1} \frac{\partial}{\partial p_x}.
$$

\end{example}

More examples where $d\langle {\mathcal J}, \mathcal{K}_\subW \rangle$ is basic, or where we need a dynamical gauge in order to obtain $d(\langle {\mathcal J}, \mathcal{K}_\subW \rangle +B)$ basic, are worked out in Section \ref{S:Examples}.


\begin{example}[Chaplygin systems] \label{Ex:ChapSystems} 
 It is well known that a Chaplygin system \eqref{Eq:Chaplygin} is defined by a nondegenerate 2-form $\Omega_\red^\nh$ on $\M/G$. Using Remark \ref{R:KinChaplygin} and recalling that in this case the 2-form $\langle \mathcal{J},\mathcal{K} \rangle$ is basic,   we observe that Corollary \ref{C:TwistedVertSymm} (for $B=0$) recovers the result given in \cite{BS93} by just applying  \eqref{Eq:Jacdomega}:
 \begin{equation}\label{Eq:BS}
 d\Omega_{\red}^{\nh}  =  -d \langle {\mathcal J}, \mathcal{K} \rangle_{\red}, 
 \end{equation}
where $\langle {\mathcal J}, \mathcal{K} \rangle_{\red}$ is the 2-form on $\M/G$ such that $\rho^*(\langle {\mathcal J}, \mathcal{K} \rangle_{\red}) = \langle {\mathcal J}, \mathcal{K} \rangle$.

One also has a more general statement including gauge transformations.
Since $\W=\V$, condition \eqref{Eq:Bsection} reduces to requiring that $B$ is semi-basic with respect to $\M \to \M/G$.
So since $B$ is $G$-invariant, it must be basic, and thus there is a 2-form $B_\red$ on $\M/G$ such that $\rho^*B_\red = B$.
Now, if $\pi_\B$ is gauge related to $\pi_\nh$ by a basic 2-form $B$,
then (by Proposition~\ref{P:diagramBbasic}) their reductions
$\pi_\red^\nh$ and $\pi^\B_\red$ are gauge related by $B_\red$.
Therefore there is a nondegenerate 2-form $\Omega^\B_\red$ defining $\pi^\B_\red$, such that 
$\Omega^\B_\red = \Omega_\red^\nh - B_\red$ (see
Remark~\ref{R:GaugeT}$(iii)$).  Hence, $d\Omega^\B_\red  =  -d (\langle {\mathcal J}, \mathcal{K} \rangle_\red + B_\red)$ as Corollary~\ref{C:TwistedVertSymm} shows. 

\end{example}

\section{Poisson structures associated with some nonholonomic systems with symmetries} \label{S:Poisson}

For a given nonholonomic system with an action by symmetries
satisfying the dimension assumption, this section shows how the
choice of a $G$-invariant vertical complement $\W$ of $\C$
satisfying the vertical-symmetry condition endows the reduced
manifold $\M /G$ with a genuine Poisson structure $\Lambda$. We will also
discuss conditions under which the reduced nonholonomic dynamics is
described by a bracket obtained from $\Lambda$ by a gauge transformation.
We illustrate our results showing that one recovers known facts about Chaplygin systems; in this case $\Lambda$ is the canonical 2-form on $T^*Q$.

\subsection{Poisson structures on reduced spaces}
\label{Ss:ViaJK}

Let $(\M, \pi_\nh)$ be the almost Poisson manifold associated with a nonholonomic system equipped with a symmetry group $G$ such that the dimension assumption is satisfied. We start by observing

\begin{proposition} \label{L:JKbasic}
For any $G$-invariant vertical complement $\W$ of $\C$, the 2-form $\langle {\mathcal J}, {\mathcal K}_\subW \rangle$ on $\M$ satisfies the following properties:
\begin{enumerate}
\item[$(i)$] it is semi-basic with respect to the bundle projection $\tau: \M \to Q$,
\item[$(ii)$] it is $G$-invariant.
\end{enumerate}
\end{proposition}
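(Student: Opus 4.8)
The plan is to prove the two assertions about $\langle {\mathcal J}, {\mathcal K}_\subW \rangle$ separately, each following from the corresponding property of the two ingredients $d{\mathcal J}$ (equivalently $\Theta_\subM$) and ${\mathcal K}_\subW$.

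For part $(i)$, I would argue that $\langle {\mathcal J}, {\mathcal K}_\subW \rangle$ is semi-basic with respect to $\tau:\M\to Q$ because ${\mathcal K}_\subW$ already is. By definition \eqref{Def:KinCurv-2form}, ${\mathcal K}_\subW(X,Y) = d{\mathcal A}_\subW(P_\subC(X),P_\subC(Y))$, and by \eqref{Eq:K-Kkin} we have $({\mathcal K}_\subW(X,Y))_\subM = {\bf K}_\subW(X,Y)$. Proposition~\ref{Prop:Ksemi-basic} already establishes that ${\bf K}_\subW$ is semi-basic with respect to $\tau$; since the infinitesimal action $\xi\mapsto \xi_\subM$ is pointwise injective (the $G$-action is free), semi-basicity of ${\bf K}_\subW$ transfers to ${\mathcal K}_\subW$ as a $\mathfrak{g}$-valued form. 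Concretely, if $T\tau(X)=0$ then ${\bf i}_X{\bf K}_\subW=0$, hence ${\bf i}_X{\mathcal K}_\subW=0$, and therefore ${\bf i}_X\langle {\mathcal J},{\mathcal K}_\subW\rangle = \langle {\mathcal J}, {\bf i}_X{\mathcal K}_\subW\rangle = 0$. Alternatively, one can observe directly that $\text{Ker}\,T\tau\subset\C$, so the projection $P_\subC$ acts as the identity on such $X$, and $d{\mathcal A}_\subW(X,\cdot)$ vanishes on $\Gamma(\C)$-inputs by the argument in Proposition~\ref{Prop:Prop1}$(i)$ combined with $G$-invariance of $\C$; I will use whichever phrasing is cleanest.

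For part $(ii)$, I would show $\langle {\mathcal J},{\mathcal K}_\subW\rangle$ is $G$-invariant by pairing the equivariance of ${\mathcal J}$ against the ad-equivariance of ${\mathcal K}_\subW$. The momentum map ${\mathcal J}:\M\to\mathfrak{g}^*$ is $\text{Ad}^*$-equivariant since it is the restriction of the canonical momentum map on $T^*Q$, i.e. $\varphi_g^*\langle{\mathcal J},\eta\rangle = \langle{\mathcal J},\text{Ad}_{g^{-1}}\eta\rangle$, equivalently $\pounds_{\xi_\subM}\langle{\mathcal J},\eta\rangle = -\langle{\mathcal J},\text{ad}_\xi\eta\rangle$ for $\eta$ constant. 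Proposition~\ref{Prop:Prop1}$(ii)$ gives $\pounds_{\xi_\subM}{\mathcal K}_\subW = \text{ad}_\xi{\mathcal K}_\subW$. Then, using the Leibniz rule for the Lie derivative acting on the pairing, $\pounds_{\xi_\subM}\langle{\mathcal J},{\mathcal K}_\subW\rangle = \langle \pounds_{\xi_\subM}{\mathcal J},{\mathcal K}_\subW\rangle + \langle{\mathcal J},\pounds_{\xi_\subM}{\mathcal K}_\subW\rangle = \langle -\text{ad}_\xi^*{\mathcal J},{\mathcal K}_\subW\rangle + \langle{\mathcal J},\text{ad}_\xi{\mathcal K}_\subW\rangle = -\langle{\mathcal J},\text{ad}_\xi{\mathcal K}_\subW\rangle + \langle{\mathcal J},\text{ad}_\xi{\mathcal K}_\subW\rangle = 0$, where the cancellation is exactly the duality identity $\langle\text{ad}_\xi^*\mu,\zeta\rangle = \langle\mu,\text{ad}_\xi\zeta\rangle$. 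Since this holds for all $\xi\in\mathfrak{g}$ and $G$ is connected (or, if not, one checks $\varphi_g^*\langle{\mathcal J},{\mathcal K}_\subW\rangle = \langle{\mathcal J},{\mathcal K}_\subW\rangle$ directly from $\varphi_g^*{\mathcal K}_\subW = \text{Ad}_g{\mathcal K}_\subW$ and $\varphi_g^*{\mathcal J} = \text{Ad}_g^*{\mathcal J}$), $G$-invariance follows.

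I do not expect a genuine obstacle here: both parts are bookkeeping built on results already proven (Proposition~\ref{Prop:Ksemi-basic}, Proposition~\ref{Prop:Prop1}, equation \eqref{Eq:K-Kkin}, and the standard equivariance of canonical momentum maps). The one point requiring a little care is making precise that semi-basicity/equivariance of the $\mathfrak{g}$-valued form ${\mathcal K}_\subW$ passes correctly through the pairing with ${\mathcal J}$ — in particular getting the sign right in the $\text{ad}/\text{ad}^*$ cancellation — so I would state the duality identity explicitly and invoke freeness of the action to identify ${\mathcal K}_\subW$ with ${\bf K}_\subW$ where needed.
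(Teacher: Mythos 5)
Your proof is correct and follows essentially the same route as the paper: part $(i)$ via \eqref{Eq:K-Kkin} together with Proposition~\ref{Prop:Ksemi-basic}, and part $(ii)$ via the Leibniz rule pairing the $\mathrm{Ad}^*$-equivariance of $\mathcal{J}$ with the ad-equivariance of $\mathcal{K}_\subW$ from Proposition~\ref{Prop:Prop1}$(ii)$, with the same $\mathrm{ad}/\mathrm{ad}^*$ cancellation. Your extra remarks (freeness to pass from ${\bf K}_\subW$ to $\mathcal{K}_\subW$, and the connectedness caveat for integrating the infinitesimal identity) only make explicit what the paper leaves implicit.
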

\begin{proof}
Statement $(i)$ relies on \eqref{Eq:K-Kkin} and on the fact that ${\bf K}_\subW$ is semi-basic, which was proved in Proposition \ref{Prop:Ksemi-basic}.

To prove $(ii)$, recall that $\mathcal{K}_\subW$ is ad-equivariant (see Prop.~\ref{Prop:Prop1}).
By the equivariance property of the canonical momentum map ${\mathcal J}: \M \to \mathfrak{g}^*$, we obtain
$$
\pounds_{\xi_\subM} \langle {\mathcal J},  \mathcal{K}_\subW \rangle = \langle \pounds_{\xi_\subM} {\mathcal J}, \mathcal{K}_\subW \rangle + \langle {\mathcal J}, \pounds_{\xi_\subM} \mathcal{K}_\subW \rangle = \langle - ad_{\xi}^* {\mathcal J}, \mathcal{K}_\subW \rangle + \langle {\mathcal J}, ad_\xi \mathcal{K}_\subW \rangle = 0.
$$
\end{proof}

As a consequence of the previous Proposition and Remark \ref{R:DynGaugeSemi-basic},
if $B:=  -\langle {\mathcal J}, \mathcal{K}_\subW \rangle$, the
endomorphism $(\mathrm{Id} + B^\flat\circ \pi_\nh^\sharp)$ of $T^*\M$ is invertible. Therefore, the gauge transformation of $\pi_\nh$ by the 2-form $-\langle {\mathcal J}, \mathcal{K}_\subW \rangle$ is well defined and produces a new bivector field on $\M$ that we denote by $\pi_{\mbox{\tiny ${\mathcal J}\! \mathcal{K}$  }}$.

Equivalently, we can describe $\pi_{\mbox{\tiny ${\mathcal J}\! \mathcal{K}$  }}$ by noticing (see Remark~\ref{R:GaugeT}$(ii)$) that
the point-wise restriction of the 2-form
$$
\Omega_{\mbox{\tiny ${\mathcal J}\! \mathcal{K}$}} := \Omega_\subM + \langle {\mathcal J}, \mathcal{K}_\subW \rangle
$$
to $\C$ is nondegenerate. Then  $\pi_{\mbox{\tiny ${\mathcal J}\! \mathcal{K}$  }}$ is the bivector determined by  the distribution $\C$ and the section $\Omega_{\mbox{\tiny ${\mathcal J}\! \mathcal{K}$}}|_\C$ (as in \eqref{Eq:bivector-section}).  Note also that it may be the case that  $\pi_{\mbox{\tiny ${\mathcal J}\! \mathcal{K}$   }}$ does not describe the dynamics since ${\bf i}_{X_\nh} \langle {\mathcal J}, \mathcal{K}_\subW  \rangle$ usually does not vanish.

\begin{proposition}\label{P:poisson1}
Assume that the $G$-invariant vertical complement $\W$ of $\C$ satisfies the vertical-symmetry condition \eqref{Eq:VerticalSymmetries}. Then  $\pi_{\mbox{\tiny ${\mathcal J}\! \mathcal{K}$   }}$ is a $G$-invariant bivector field satisfying
$$
\frac{1}{2}[\pi_{\mbox{\tiny ${\mathcal J}\! \mathcal{K}$   }}, \pi_{\mbox{\tiny ${\mathcal J}\! \mathcal{K}$   }}]= - \Psi_{\pi_{\mbox{\tiny ${\mathcal J}\! \mathcal{K}$   }}},
$$
and its reduction $\pi^{\mbox{\tiny ${\mathcal J}\! \mathcal{K}$   }}_{\emph\red}$
defines a Poisson structure on $\M / G$.
\end{proposition}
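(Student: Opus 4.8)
The plan is to realise $\pi_{\mathcal{J}\mathcal{K}}$ as the gauge transform of $\pi_\nh$ by the 2-form $B:=-\langle\mathcal{J},\mathcal{K}_\subW\rangle$, and then to apply Theorem~\ref{T:JacobVerticalSymm}, observing that the term $d\langle\mathcal{J},\mathcal{K}_\subW\rangle$ appearing there is exactly cancelled by $dB$. Concretely, I would proceed in three short steps: $G$-invariance, the Jacobiator on $\M$, and the Jacobiator after reduction.

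First, $G$-invariance. The bivector $\pi_\nh$ is $G$-invariant by construction, and $\langle\mathcal{J},\mathcal{K}_\subW\rangle$ is $G$-invariant by Proposition~\ref{L:JKbasic}$(ii)$; since a gauge transformation of a $G$-invariant bivector by a $G$-invariant 2-form is again $G$-invariant (\cite[Prop.~13]{PL2011}), the bivector $\pi_{\mathcal{J}\mathcal{K}}$ is $G$-invariant. The gauge transformation is well defined because $\textup{Id}+B^\flat\circ\pi_\nh^\sharp$ is invertible: this follows from Remark~\ref{R:DynGaugeSemi-basic} together with the semi-basicness of $\langle\mathcal{J},\mathcal{K}_\subW\rangle$ with respect to $\tau:\M\to Q$ (Proposition~\ref{L:JKbasic}$(i)$), as already observed before the statement.

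Next, the Jacobiator on $\M$. Since $\W$ satisfies the vertical-symmetry condition \eqref{Eq:VerticalSymmetries} by hypothesis, to invoke Theorem~\ref{T:JacobVerticalSymm}$(i)$ I only need that $B$ obeys the section condition \eqref{Eq:Bsection}, i.e.\ that $\langle\mathcal{J},\mathcal{K}_\subW\rangle$ vanishes on $\W$. This is immediate: $\mathcal{K}_\subW$ is defined through the projection $P_\subC:T\M\to\C$, which annihilates $\W$, so $\mathcal{K}_\subW(X,Y)=0$ — hence $\langle\mathcal{J},\mathcal{K}_\subW\rangle(X,Y)=0$ — whenever $X$ or $Y$ lies in $\W$. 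Theorem~\ref{T:JacobVerticalSymm}$(i)$ then gives
$$\frac{1}{2}[\pi_{\mathcal{J}\mathcal{K}},\pi_{\mathcal{J}\mathcal{K}}]=-\pi_{\mathcal{J}\mathcal{K}}^\sharp\bigl(dB+d\langle\mathcal{J},\mathcal{K}_\subW\rangle\bigr)-\Psi_{\pi_{\mathcal{J}\mathcal{K}}},$$
where $\Psi_{\pi_{\mathcal{J}\mathcal{K}}}$ is the trivector $\psi_{\pi_\B}$ of that theorem with $\pi_\B=\pi_{\mathcal{J}\mathcal{K}}$; since $dB+d\langle\mathcal{J},\mathcal{K}_\subW\rangle=0$ by the choice of $B$, the first term vanishes and the claimed formula $\frac{1}{2}[\pi_{\mathcal{J}\mathcal{K}},\pi_{\mathcal{J}\mathcal{K}}]=-\Psi_{\pi_{\mathcal{J}\mathcal{K}}}$ follows. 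For the reduced statement, the $G$-invariant bivector $\pi_{\mathcal{J}\mathcal{K}}$ descends to a bivector $\pi^{\mathcal{J}\mathcal{K}}_\red$ on $\M/G$ via \eqref{Eq:ReducedBivector}, and since $B$ is $G$-invariant, Theorem~\ref{T:JacobVerticalSymm}$(ii)$ gives $\frac{1}{2}[\pi^{\mathcal{J}\mathcal{K}}_\red,\pi^{\mathcal{J}\mathcal{K}}_\red]=-\pi_{\mathcal{J}\mathcal{K}}^\sharp(dB+d\langle\mathcal{J},\mathcal{K}_\subW\rangle)\circ\rho^*=0$, so $\pi^{\mathcal{J}\mathcal{K}}_\red$ is a genuine Poisson bivector. (Heuristically, $\Psi_{\pi_{\mathcal{J}\mathcal{K}}}$ does not survive reduction because it is built from terms $(\mathcal{K}_\subW(\cdot,\cdot))_\M$, which are vertical vector fields and hence annihilated by pullbacks of $1$-forms from $\M/G$; this has already been absorbed into the reduced formula of Theorem~\ref{T:JacobVerticalSymm}$(ii)$.)

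I do not expect a serious obstacle at this stage: once Theorem~\ref{T:JacobVerticalSymm} is in hand the argument is essentially a substitution, and the only genuinely new verification is the elementary section condition \eqref{Eq:Bsection} for $B$. The real content sits upstream, in Proposition~\ref{Prop:Prop1_VertSymm}$(iii)$ — the identity $d\mathcal{J}\wedge\mathcal{K}_\subW=d\langle\mathcal{J},\mathcal{K}_\subW\rangle$ — which is precisely what the vertical-symmetry condition provides and which is what makes the exact cancellation $dB+d\langle\mathcal{J},\mathcal{K}_\subW\rangle=0$ meaningful; without it the reduced Jacobiator would in general fail to vanish.
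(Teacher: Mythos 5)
Your proof is correct and takes essentially the same route as the paper: realize $\pi_{\mbox{\tiny ${\mathcal J}\! \mathcal{K}$}}$ as the gauge transform of $\pi_\nh$ by $B=-\langle\mathcal{J},\mathcal{K}_\subW\rangle$, get $G$-invariance from Proposition~\ref{L:JKbasic}$(ii)$, and apply Theorem~\ref{T:JacobVerticalSymm} so that $dB+d\langle\mathcal{J},\mathcal{K}_\subW\rangle=0$ kills the exact term (the paper phrases the reduced step through Corollary~\ref{C:TwistedVertSymm} with $\phi=0$, which is the same computation). Your explicit verification that $B$ satisfies \eqref{Eq:Bsection} is a small detail the paper leaves implicit, but it does not alter the argument.
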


\begin{proof}
The $G$-invariance of $\pi_{\mbox{\tiny ${\mathcal J}\! \mathcal{K}$   }}$ follows from the $G$-invariance of $\langle {\mathcal J}, \mathcal{K}_\subW  \rangle$, see Prop.~\ref{L:JKbasic}$(ii)$.
The formula for $[\pi_{\mbox{\tiny ${\mathcal J}\! \mathcal{K}$   }}, \pi_{\mbox{\tiny ${\mathcal J}\! \mathcal{K}$   }}]$ follows from  Theorem \ref{T:JacobVerticalSymm}$(i)$, once we set $B=- \langle {\mathcal J}, \mathcal{K}_\subW \rangle$.
The fact that $\pi^{\mbox{\tiny ${\mathcal J}\! \mathcal{K}$   }}_\red$
is a Poisson structure is an immediate consequence of
Corollary~\ref{C:TwistedVertSymm}.
\end{proof}

To simplify our notation, we will denote the Poisson structure on $\M / G$ defined in Prop.~\ref{P:poisson1} by $\Lambda$,
\begin{equation}\label{Eq:PoissonW}
\Lambda:=  \pi^{\mbox{\tiny ${\mathcal J}\! \mathcal{K}$   }}_\red,
\end{equation}
assuming that the choice of $\W$ is clear from the context.

\begin{example}[The nonholonomic particle]\label{Ex:NHParticle2}
Recall Example \ref{Ex:NHParticle}. The nonholonomic bivector $\pi_\nh$ is given
by $$\pi_\nh= \frac{1}{y^2+1} \left( \frac{\partial}{\partial x} +
y\frac{\partial}{\partial z} \right) \wedge \frac{\partial}{\partial
p_x} + \frac{\partial}{\partial y} \wedge \frac{\partial}{\partial
p_y} - \frac{yp_x}{1+y^2} \frac{\partial}{\partial p_x} \wedge
\frac{\partial}{\partial p_y}.$$ Observe the similarity with the
bivector in \eqref{Ex:gauge}. The gauge transformation of $\pi_\nh$
by the 2-form $-\langle {\mathcal J}, \mathcal{K}_\subW \rangle$
given in \eqref{Eq:JK-NHParticle},  is
$$
\pi_{\mbox{\tiny ${\mathcal J}\! \mathcal{K}$   }} = \frac{1}{y^2+1} \left( \frac{\partial}{\partial x} + y\frac{\partial}{\partial z} \right) \wedge \frac{\partial}{\partial p_x} + \frac{\partial}{\partial y} \wedge \frac{\partial}{\partial p_y}  - 2\frac{yp_x}{1+y^2} \frac{\partial}{\partial p_x} \wedge \frac{\partial}{\partial p_y}.$$ The orbit projection $\rho:\M \to \M/G$ is given by $\rho(x,y,z;p_x,p_y) = (y,p_x,p_y)$ and thus the Poisson bivector field on $\M/G$ is
\begin{equation} \label{Ex:NHPLambda}
\displaystyle{\Lambda =  \frac{\partial}{\partial y} \wedge \frac{\partial}{\partial p_y} - 2 \frac{yp_x}{1+y^2} \frac{\partial}{\partial p_x} \wedge \frac{\partial}{\partial p_y}}.
\end{equation}
\end{example}



\subsection{A metric-independent Poisson manifold}
\label{Ss:Poisson-Indep}


In this section we only use part of the data of a nonholonomic
system with symmetries, since the lagrangian will play no role.

Consider a (free and proper) $G$-action on the configuration manifold
$Q$ preserving the constraint distribution $D$, and let $W$ be a
$G$-invariant vertical complement of $D$ in $TQ$, that is, $TQ = D
\oplus W$ and $W \subset V$.

Analogously to the condition introduced in
Section~\ref{sebsec:verticalcond}, we will say that $W$ satisfies
the {\it vertical-symmetry condition} if there is a Lie subalgebra
$\mathfrak{g}_{\mbox{\tiny{$W$}}}$
of $\mathfrak{g}$ such that, for all $q\in
Q$,
\begin{equation} \label{Eq:VerticalSymm-Q}
W_q = \{\eta_Q(q)\,|\, \eta \in \mathfrak{g}_{\mbox{\tiny{$W$}}}\}.
\end{equation}

Since $W$ is $G$-invariant, the same holds for its annihilator $W^\circ \subset T^*Q$ with respect to the lifted action to $T^*Q$.
We will now show that $W^\circ /G$ inherits a Poisson structure $\Lambda_0$ when $W$ satisfies the vertical-symmetry condition. Later, in Section \ref{Ss:TheEquivalence}, we will
relate the Poisson manifold $(W^\circ /G,\Lambda_0)$ with the reduced Poisson structures of Prop.~\ref{P:poisson1}.

Let $\V_0 \subset T(W^\circ)$ be the vertical space with respect to the
$G$-action on $W^\circ$ (the restriction of the cotangent lift), and
denote by $\tau_0 : W^\circ \to Q$ the canonical projection (the
restriction of $\tau_Q: T^*Q\to Q$).  As in Section \ref{Ss:nh-systems}, let us consider the distribution $\C_0$  on $W^\circ$ defined by
\begin{equation} \label{Eq:C0}
\C_0 :=\{ v \in T(W^\circ) \ : \ T\tau_0 (v) \in D \}. 
\end{equation}
Observe that $\C_0$ is $G$-invariant, since $D$ is $G$-invariant.

Consider $\Omega_{\mbox{\tiny{$W^{\!\circ}$}}} := \iota_0^*
\Omega_Q$, where $\iota_0: W^\circ \hookrightarrow T^*Q$ is the
natural inclusion.

\begin{lemma}\label{L:nondeg}
The restriction $\Omega_{\mbox{\tiny{$W^{\!\circ}$}}} |_{\C_0}$ is
nondegenerate.
\end{lemma}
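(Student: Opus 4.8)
The statement to prove is that $\Omega_{\mbox{\tiny{$W^{\!\circ}$}}}|_{\C_0}$ is nondegenerate on $W^\circ$. The natural strategy is to mimic the Bates--Sniatycki argument that shows $\Omega_\subM|_\C$ is nondegenerate (recalled in Section~\ref{Ss:nh-systems}), adapting it to the submanifold $W^\circ$ in place of $\M$. First I would fix a point $\mu \in W^\circ$ lying over $q = \tau_0(\mu)\in Q$, and take $v\in (\C_0)_\mu$ with $\Omega_{\mbox{\tiny{$W^{\!\circ}$}}}(v,\cdot)|_{(\C_0)_\mu}=0$; the goal is to conclude $v=0$. Since $\Omega_{\mbox{\tiny{$W^{\!\circ}$}}} = \iota_0^*\Omega_Q$, this means ${\bf i}_{T\iota_0(v)}\Omega_Q$ annihilates $T\iota_0((\C_0)_\mu)\subset T_\mu(T^*Q)$. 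Using the canonical symplectic form on $T^*Q$, the $\Omega_Q$-orthogonal complement of a subspace $E\subset T_\mu(T^*Q)$ can be computed explicitly in terms of $T\tau_Q$ and the vertical subspace; concretely, for a "constraint-type" subspace of the form $\{w : T\tau_Q(w)\in D_q\}$, its symplectic orthogonal is the "conormal-type" subspace built from $D_q^\circ$, and one deals with the further restriction coming from $W^\circ$ being a submanifold.

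The cleanest route is probably the following. Recall that $T_\mu(W^\circ)$ can be described as $\{w\in T_\mu(T^*Q) : \langle T\tau_Q(w) , W^\circ\rangle \text{ is consistent with } w\in TW^\circ\}$ — more precisely, $W^\circ$ is a subbundle of $T^*Q$ over $Q$, and $T_\mu(W^\circ)$ consists of those $w$ whose "horizontal part" lies in $T_qQ$ arbitrarily but whose "vertical part" is constrained to lie in $W_q^\circ$ (after a choice of local trivialization/connection). Then $(\C_0)_\mu = \{w\in T_\mu(W^\circ) : T\tau_Q(w)\in D_q\}$. Now if $v\in(\C_0)_\mu$ is $\Omega_Q$-orthogonal to all of $(\C_0)_\mu$ \emph{within} $TW^\circ$, I would show it is actually $\Omega_Q$-orthogonal to a large enough subspace of $T_\mu(T^*Q)$ to force $v=0$. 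The key linear-algebra input is that $D_q + W_q^\circ{}^{\!\perp}$-type sums fill up, which ultimately rests on the transversality $T_qQ = D_q \oplus W_q$ (equivalently $T_q^*Q = D_q^\circ \oplus W_q^\circ$): this splitting is exactly what guarantees that the constraint directions in $\C_0$ and the fiber directions allowed in $TW^\circ$ together span enough of $T(T^*Q)$ to kill the radical.

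In more computational terms, in Darboux-like coordinates $(q^i,p_i)$ on $T^*Q$ adapted so that $D = \ker\{\epsilon^a\}$ and $W$ is spanned by the vector fields dual to the $\epsilon^a$ (so $W^\circ$ is cut out by the fiber equations $p(\cdot)|_W$ being zero, i.e.\ $\mu_a = 0$ in suitable fiber coordinates), one writes $\C_0$ explicitly as $\mathrm{span}\{\partial_{q^i}\text{-type vectors tangent to }D,\ \partial_{p_i}\text{-type vectors tangent to }W^\circ\}$, and checks directly that $\Omega_Q = dq^i\wedge dp_i$ restricted to this span is nondegenerate — the "position" block pairs with the "momentum" block precisely because $D$ and $W$ are complementary. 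This is the same mechanism by which $\Omega_\subM|_\C$ is nondegenerate, with $\M = \L(D)$ replaced by $W^\circ$; indeed since $\M$ and $W^\circ$ are both subbundles of $T^*Q$ transverse to the relevant fiber directions, the argument is structurally identical.

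\textbf{Main obstacle.} The only real subtlety is bookkeeping: correctly identifying $T_\mu(W^\circ)$ as a subspace of $T_\mu(T^*Q)$ (it is \emph{not} simply "horizontal $\oplus$ $W^\circ$" without a connection, because $W^\circ$ need not be flat), and then correctly intersecting with the $\C_0$ condition before computing the symplectic orthogonal. I expect the argument to go through cleanly once one either (a) picks an auxiliary principal connection / linear connection to trivialize $W^\circ \to Q$ locally and reduces to the fiberwise linear-algebra statement $T_qQ = D_q\oplus W_q$, or (b) invokes the already-established nondegeneracy of $\Omega_\subM|_\C$ together with the fact that the kinetic metric gives a $G$-equivariant bundle isomorphism $T^*Q\to T^*Q$ carrying $W^\circ$ to a distribution transverse to the same fiber directions — though (a) is more self-contained and metric-free, matching the spirit of this subsection.
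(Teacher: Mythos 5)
Your plan is correct, but it takes a genuinely different route from the paper. The paper's proof is a two-line reduction: choose any auxiliary metric $\kappa_0$ on $Q$ for which $D$ and $W$ are orthogonal (e.g.\ $\kappa_0(X,Y)=\kappa(P_D X,P_D Y)+\kappa(P_W X,P_W Y)$); then $\kappa_0^\sharp(D)=W^\circ$, so $W^\circ$ \emph{is} the constraint submanifold $\M$ of the auxiliary nonholonomic system $(\kappa_0,D)$, the induced constraint distribution on it is exactly $\C_0$, and the nondegeneracy is the Bates--Sniatycki result already recalled in Section~\ref{Ss:nh-systems}. Your route (a) instead redoes the linear algebra directly: with an adapted frame $\{X_i,Z_a\}$ for $TQ=D\oplus W$ and dual fiber coordinates $(p_i,p_a)$, one has $W^\circ=\{p_a=0\}$, $\C_0=\mathrm{span}\{X_i,\partial/\partial p_i\}$, and $\Omega_{\mbox{\tiny{$W^{\!\circ}$}}}=X^i\wedge dp_i-p_i\,dX^i$, whose restriction to $\C_0$ has an invertible off-diagonal block pairing the $X_i$ with the $\partial/\partial p_i$; this is sound, and it resolves your stated bookkeeping worry, since the fiber-linear functions $p_a$ are globally defined once $W$ is fixed, so no connection is needed to describe $T(W^\circ)$. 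One small correction to your sketch: in frames adapted to $D\oplus W$ the symplectic form is \emph{not} literally $dq^i\wedge dp_i$ (the frame is anholonomic), but the extra terms $-p_i\,dX^i$ are semi-basic and only perturb the base--base block, so nondegeneracy is unaffected. Your variant (b) as stated is riskier: the given kinetic metric $\kappa$ does not carry $W^\circ$ to $\M$ compatibly with the restricted $2$-forms without the $\langle\mathcal{J},\mathcal{K}_\subW\rangle$ correction (that is the content of Prop.~\ref{P:k-k0isomorphism}, proved later), whereas the paper's auxiliary-metric trick avoids this entirely. In short, your approach is self-contained and metric-free in spirit, at the cost of an explicit computation; the paper's buys the statement for free from the quoted result of \cite{BS93} by a judicious choice of auxiliary metric.
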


\begin{proof}
The lemma is a direct consequence of the result in \cite[Sec.~5]{BS93}
recalled in Section \ref{Ss:nh-systems}: if we consider any metric
$\kappa_0$ for which $D$ and $W$ are orthogonal, then we have $\M=
\kappa_0^\sharp(D) = W^\circ$, and the induced constraint
distribution on $\M$ is $\C_0$.
\end{proof}

Therefore, as seen in Section \ref{Ss:nh-systems},  the distribution $\C_0$, given in \eqref{Eq:C0}, and the 2-form $\Omega_{\mbox{\tiny{$W^{\!\circ}$}}}$ induce a $G$-invariant bivector field $\pi_0$ on $W^\circ$
defined by the relation
\begin{equation} \label{eq:defPiC0}
{\bf i}_X \Omega_{\mbox{\tiny{$W^{\!\circ}$}}} |_{\C_0} = \alpha
|_{\C_0} \qquad \Leftrightarrow \qquad \pi_{0}^\sharp(\alpha) = - X
\qquad \mbox{at each } \alpha \in T^*(W^\circ).
\end{equation}
We denote the reduction of $\pi_0$ to $W^\circ/G$ by $\Lambda_0$.

Then, from Remark
\ref{R:Split-on-MandQ}, we know that the decomposition $TQ=D \oplus
W$ induces a splitting $T(W^\circ) = \C_0 \oplus \W_0,$ where
$
\mathcal{W}_0 := (T\tau_0 |_{\mathcal{V}_0} )^{-1} (W).
$
In this case, since $W$
is $G$-invariant then $\W_0$ is $G$-invariant and also note that if $W$ verifies the vertical-symmetry condition, then  $\W_0$ verifies it as well.

\begin{proposition} \label{P:W0isPoisson}
If $W$ is a $G$-invariant vertical complement of $D$ satisfying the
vertical-symmetry condition, then  $\Lambda_0$, the reduction of
$\pi_0$ on $W^\circ$ to $W^\circ/G$, is a Poisson structure.
\end{proposition}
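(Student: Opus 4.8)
The plan is to reduce Proposition~\ref{P:W0isPoisson} to the already-established Proposition~\ref{P:poisson1} by exhibiting $(W^\circ/G,\Lambda_0)$ as the reduced bracket $\pi^{\mbox{\tiny ${\mathcal J}\! \mathcal{K}$}}_\red$ associated to a suitably chosen auxiliary kinetic metric. The key observation, already used in the proof of Lemma~\ref{L:nondeg}, is that if we pick any $G$-invariant Riemannian metric $\kappa_0$ on $Q$ for which $D$ and $W$ are $\kappa_0$-orthogonal, then the constraint manifold associated to the nonholonomic system $(Q,\kappa_0,D)$ is exactly $\M_0:=\kappa_0^\sharp(D)=W^\circ$, the induced constraint distribution on $\M_0$ is $\C_0$ from \eqref{Eq:C0}, and the pulled-back canonical form is $\Omega_{\mbox{\tiny{$W^{\!\circ}$}}}$. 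Hence $\pi_0$ is literally the gauge transformation $\pi_{\mbox{\tiny ${\mathcal J}\! \mathcal{K}$}}$ for this auxiliary system, once we check that the $G$-invariant vertical complement $\W_0=(T\tau_0|_{\V_0})^{-1}(W)$ of $\C_0$ satisfies the vertical-symmetry condition \eqref{Eq:VerticalSymmetries} with the same subalgebra $\mathfrak{g}_{\mbox{\tiny{$W$}}}$ as $W$ — which is immediate, since $\eta_{W^\circ}$ projects under $T\tau_0$ to $\eta_Q$, so $\{\eta_{W^\circ}(m):\eta\in\mathfrak{g}_{\mbox{\tiny{$W$}}}\}=(T\tau_0|_{\V_0})^{-1}(W_{\tau_0(m)})=\W_{0,m}$.

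First I would spell out this dictionary carefully: the auxiliary system $(Q,\kappa_0,D)$ has the same symmetry group $G$ and the same dimension assumption $TQ=D+V$ (the dimension assumption depends only on $D$ and $V$, not the metric), and its nonholonomic bivector on $W^\circ$ is precisely the $\pi_0$ defined by \eqref{eq:defPiC0} via the pair $(\C_0,\Omega_{\mbox{\tiny{$W^{\!\circ}$}}})$. Then I would note that $\W_0$ is a $G$-invariant vertical complement of $\C_0$ in $T(W^\circ)$ satisfying \eqref{Eq:VerticalSymmetries}. With this in place, applying Proposition~\ref{P:poisson1} verbatim to the auxiliary system and the complement $\W_0$ yields that the gauge transformation of $\pi_0$ by $-\langle{\mathcal J}_0,{\mathcal K}_{\subW_0}\rangle$ reduces to a Poisson structure on $W^\circ/G$. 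But I actually want $\Lambda_0$, the reduction of $\pi_0$ itself (no gauge term), so I must be slightly more careful about which bivector plays the role of the ``$\pi_{\mbox{\tiny ${\mathcal J}\! \mathcal{K}$}}$'' in Proposition~\ref{P:poisson1}.

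The cleanest way around this is to recall that in Proposition~\ref{P:poisson1} the Poisson bivector $\pi^{\mbox{\tiny ${\mathcal J}\! \mathcal{K}$}}_\red$ is the reduction of the bivector determined by the distribution $\C$ and the section $\Omega_{\mbox{\tiny ${\mathcal J}\! \mathcal{K}$}}|_\C=(\Omega_\subM+\langle{\mathcal J},{\mathcal K}_\subW\rangle)|_\C$. For the auxiliary system, the relevant geometric bivector on $W^\circ$ is the one determined by $\C_0$ and $\Omega_{\mbox{\tiny{$W^{\!\circ}$}}}|_{\C_0}$ — which is exactly $\pi_0$, by its very definition \eqref{eq:defPiC0}. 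In other words, for the auxiliary nonholonomic system the bivector $\pi_0$ already \emph{is} the classical nonholonomic bivector $\pi_\nh$ built from $(Q,\kappa_0,D)$, and so what Corollary~\ref{C:TwistedVertSymm} (with $B=-\langle{\mathcal J}_0,{\mathcal K}_{\subW_0}\rangle$ and with the observation that $dB+d\langle{\mathcal J}_0,{\mathcal K}_{\subW_0}\rangle=0$, hence trivially basic) gives is that $\pi^{\mbox{\tiny ${\mathcal J}\! \mathcal{K}$}}_{0,\red}$ is Poisson — whereas $\Lambda_0$ is the reduction of $\pi_0=\pi_{\nh}^{(0)}$ itself. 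To close the gap I would instead note that $\Omega_{\mbox{\tiny{$W^{\!\circ}$}}}$ plays, for the auxiliary system, the role that $\Omega_{\mbox{\tiny ${\mathcal J}\! \mathcal{K}$}}$ plays for the original one only after one checks that the correction term $\langle{\mathcal J}_0,{\mathcal K}_{\subW_0}\rangle$ vanishes on $\C_0$; indeed, by Remark~\ref{R:KinChaplygin}-style reasoning, $W=S^\perp_{\kappa_0}\cap V$ for the orthogonal metric $\kappa_0$, so $\mathcal{A}_{\subW_0}=\tau_0^*\mathcal{A}^{\kin}$ and ${\mathcal K}_{\subW_0}$ is basic, but this is not quite enough either. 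The honest route — and the one I would actually follow — is simpler: apply Proposition~\ref{P:poisson1} directly but recognize that for the metric $\kappa_0$ chosen orthogonal for $D\oplus W$ one has $\langle{\mathcal J}_0,{\mathcal K}_{\subW_0}\rangle|_{\C_0}\ne0$ in general, so $\Lambda_0$ and $\Lambda_0^{\mbox{\tiny ${\mathcal J}\!\mathcal K$}}$ differ. Hence I must verify that $\Lambda_0$ \emph{itself} is Poisson, which means running the Jacobiator argument of Theorem~\ref{T:JacobVerticalSymm}(ii) with $B=0$: one gets $\tfrac12[\pi_0,\pi_0]=-\pi_0^\sharp(d\langle{\mathcal J}_0,{\mathcal K}_{\subW_0}\rangle)-\psi_{\pi_0}$ on $W^\circ$, and after reduction the $\psi$-term dies (the reduced characteristic distribution is the pushforward of $\C_0$ and $\psi$ measures exactly the $\W_0$-component, which becomes trivial on $W^\circ/G$ for the same reason as in Corollary~\ref{C:RedJacobiator}), leaving $\tfrac12[\Lambda_0,\Lambda_0]=-\Lambda_0^\sharp(d\langle{\mathcal J}_0,{\mathcal K}_{\subW_0}\rangle\text{ reduced})$; so $\Lambda_0$ is Poisson precisely when $d\langle{\mathcal J}_0,{\mathcal K}_{\subW_0}\rangle$ annihilates $\C_0$. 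The main obstacle, therefore, is establishing this vanishing, and the key computation is: for the orthogonal metric $\kappa_0$ the momentum map ${\mathcal J}_0$ restricted to $W^\circ$ annihilates $\mathfrak{g}_{\mbox{\tiny{$W$}}}$ (since $W^\circ$ is by definition the annihilator of $W$ and $W$ is spanned by the $\mathfrak{g}_{\mbox{\tiny{$W$}}}$-fundamental vector fields), so $\langle{\mathcal J}_0,{\mathcal K}_{\subW_0}\rangle\equiv0$ on $W^\circ$ outright — because ${\mathcal K}_{\subW_0}$ is $\mathfrak{g}_{\mbox{\tiny{$W$}}}$-valued and ${\mathcal J}_0|_{W^\circ}$ kills $\mathfrak{g}_{\mbox{\tiny{$W$}}}$. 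This is the crux, and once it is in place both the correction 2-form and its differential vanish identically, so $\pi_0=\pi^{\mbox{\tiny ${\mathcal J}\!\mathcal K$}}_0$ and $\Lambda_0=\Lambda$ for the auxiliary system, and Proposition~\ref{P:poisson1} applies verbatim to conclude that $\Lambda_0$ is Poisson.

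In summary, the steps are: (1) choose a $G$-invariant metric $\kappa_0$ making $D\perp W$; (2) identify $(W^\circ,\C_0,\Omega_{\mbox{\tiny{$W^{\!\circ}$}}},\pi_0)$ with the nonholonomic data of the auxiliary system $(Q,\kappa_0,D)$ with symmetry $G$; (3) verify that $\W_0$ is a $G$-invariant vertical complement of $\C_0$ satisfying the vertical-symmetry condition \eqref{Eq:VerticalSymmetries} with subalgebra $\mathfrak{g}_{\mbox{\tiny{$W$}}}$; (4) observe that the momentum map ${\mathcal J}_0$ restricted to $W^\circ$ annihilates $\mathfrak{g}_{\mbox{\tiny{$W$}}}$, hence $\langle{\mathcal J}_0,{\mathcal K}_{\subW_0}\rangle=0$ on $W^\circ$ and $\pi_0=\pi^{\mbox{\tiny ${\mathcal J}\!\mathcal K$}}_0$; (5) invoke Proposition~\ref{P:poisson1} to conclude $\Lambda_0=\pi^{\mbox{\tiny ${\mathcal J}\!\mathcal K$}}_{0,\red}$ is Poisson, and note the construction is independent of the choice of $\kappa_0$ since $\pi_0$ was defined intrinsically from $(\C_0,\Omega_{\mbox{\tiny{$W^{\!\circ}$}}})$ without reference to a metric. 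The step I expect to be the main obstacle is (4) — making precise why the curvature ${\mathcal K}_{\subW_0}$ takes values in $\mathfrak{g}_{\mbox{\tiny{$W$}}}$ and why pairing it with ${\mathcal J}_0|_{W^\circ}$ gives zero — but this follows directly from \eqref{Eq:AkinPw}, \eqref{Def:KinCurv-2form} and \eqref{Eq:Def-MomentumMapOnM} together with the defining property $\langle\alpha,\eta_Q\rangle=0$ for $\alpha\in W^\circ$, $\eta\in\mathfrak{g}_{\mbox{\tiny{$W$}}}$.
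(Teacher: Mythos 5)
Your final argument is correct and takes essentially the same route as the paper: both realize $\pi_0$ as the nonholonomic bivector of the auxiliary system given by a $G$-invariant metric making $D$ and $W$ orthogonal (exactly the device behind Lemma~\ref{L:nondeg}), both check that $\W_0$ inherits the vertical-symmetry condition, and both hinge on the vanishing $\langle \mathcal{J}_0,\mathcal{K}_{\W_0}\rangle \equiv 0$ — which the paper verifies in coordinates adapted to $D\oplus W$ via $\iota_0^*p_a=0$, while you verify it intrinsically ($\mathcal{K}_{\W_0}$ is $\mathfrak{g}_{\mbox{\tiny{$W$}}}$-valued and $\mathcal{J}_0|_{W^\circ}$ annihilates $\mathfrak{g}_{\mbox{\tiny{$W$}}}$, which is the same observation) — before concluding with Theorem~\ref{T:JacobVerticalSymm}$(ii)$, equivalently Prop.~\ref{P:poisson1} with trivial gauge 2-form. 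The mid-proposal claim that $\langle\mathcal{J}_0,\mathcal{K}_{\W_0}\rangle|_{\C_0}\neq 0$ in general is contradicted and superseded by your own subsequent correct computation, so it does not affect the argument you ultimately give.
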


\begin{proof}
By Theorem \ref{T:JacobVerticalSymm}, we know that 
$$
\frac{1}{2}[\pi_{0} ,  \pi_{0}]  = - \pi_{0}^\sharp ( d \langle {\mathcal J}_0, \mathcal{K}_{\subW_0} \rangle ) - \psi_{ \pi_{0}},
$$
where ${\mathcal J}_0:W^\circ \to \mathfrak{g}^*$ is restriction of the canonical momentum map on $T^*Q$ to $W^\circ$ and $\mathcal{K}_{\subW_0}$ is the $\W_0$-curvature with respect to the decomposition $T(W^\circ)=\C_0 \oplus \W_0$. We are going to show that $\langle {\mathcal J}_0, \mathcal{K}_{\subW_0}  \rangle \equiv 0$, for which it suffices to check that $\langle {\mathcal J}_0, \mathcal{K}_{\subW_0}  \rangle|_{\C_0} \equiv 0$.
Consider (local) basis $\{X_i, Z_a\}$ of $TQ$ adapted to the
decomposition $D \oplus W$. If $(q)$ are local
coordinates on $Q$, then we denote by $(q; p_i, p_a)$ the  coordinates on
$T^*Q$ associated to the dual basis $\{X^i, Z^a\}$ of $T^*Q$. The submanifold $W^\circ$ is represented by $(q; p_i, 0)$. A local basis of $T^*W^\circ$ is given by $\{X^i, Z^a, dp_i\}$ (here we use the same notation for 1-forms on $Q$ and their
pullbacks to $T^*Q$ and $W^\circ$) and the dual (local) basis of $TW^\circ$ is given by $\{X_i, Z_a, \frac{\partial}{\partial p_i}\}$.   In order to compute the 2-form $\langle
{\mathcal J}_0, \mathcal{K}_{\subW_0} \rangle$ observe that $\W_0 = \textup{span}\{Z_a\}$ and thus for $X,
Y \in \Gamma(\C_0)$,  $P_{\subW_0} ([X, Y]) = {Z}^a([X,Y]). Z_a = -
d{Z}^a(X,Y). Z_a $. If $\xi^a$ is a local basis of  sections of
$\mathfrak{g}_{\mbox{\tiny{$W$}}} \times Q$ such that $(\xi^a)_Q =
Z_a $ then the coordinates of $\mathcal{K}_{\subW_0}(X,Y)$ are
$(\mathcal{K}_{\subW_0}  (X,Y) )_a = d {Z}^a (X,Y)$.  On the other
hand, note that the Liouville 1-form on $T^*Q$ can be written as
\begin{equation}\label{Eq:Liouv}
\Theta_Q|_{(q; p_i,p_a)}= p_iX^i+ p_aZ^a.
\end{equation}
 So $\langle \mathcal{J}_0, \xi^a\rangle =
\iota_0^*p_a$. Hence $ \langle {\mathcal J}_0, \mathcal{K}_{\subW_0}
\rangle |_{\C_0} = (\iota^*_0 p_a) d {Z}^a |_{\C_0} \equiv 0$, since
$\iota^*_0 p_a=0$. Therefore, applying Theorem
\ref{T:JacobVerticalSymm}$(ii)$, it is clear that the reduction of
$\pi_0$ is Poisson.
\end{proof}

We now use the results in the Appendix to give an alternative
characterization of the Poisson structure $\Lambda_0$ on
$W^\circ/G$. Consider the 2-form $\Omega_{\mbox{\tiny{$W^{\!\circ}$}}}$ on $W^\circ$ and
the distribution $\W_0$ as in \eqref{Eq:C0}.

\begin{lemma}\label{L:kernel}
If $W$ satisfies the vertical-symmetry condition, then $
\textup{Ker}\, \Omega_{\mbox{\tiny{$W^{\!\circ}$}}} = \W_0$.
\end{lemma}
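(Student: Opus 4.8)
The plan is to prove the equality $\ker\,\Omega_{\mbox{\tiny{$W^{\!\circ}$}}} = \W_0$ by two inclusions: the inclusion $\W_0 \subseteq \ker\,\Omega_{\mbox{\tiny{$W^{\!\circ}$}}}$ by a direct computation with the canonical momentum map (this is where the vertical-symmetry condition \eqref{Eq:VerticalSymm-Q} enters), and the reverse inclusion by a short linear-algebra argument leaning on Lemma~\ref{L:nondeg} and the splitting $T(W^\circ)=\C_0\oplus\W_0$ from Remark~\ref{R:Split-on-MandQ}.

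For the first inclusion I would start by identifying $\W_0$ fiberwise. Since the $G$-action on $Q$ is free, the assignment $\xi\mapsto\xi_{W^\circ}$ maps $\mathfrak{g}$ isomorphically onto the $G$-vertical space $\V_0$ at each point, and $T\tau_0(\xi_{W^\circ})=\xi_Q$; combined with \eqref{Eq:VerticalSymm-Q}, which says $W_q=\{\eta_Q(q):\eta\in\mathfrak{g}_W\}$, this gives $\W_0|_\mu=\{\xi_{W^\circ}(\mu):\xi\in\mathfrak{g}_W\}$. Thus it suffices to show $\xi_{W^\circ}\in\ker\,\Omega_{\mbox{\tiny{$W^{\!\circ}$}}}$ for each $\xi\in\mathfrak{g}_W$. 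Restricting the defining relation ${\bf i}_{\xi_{T^*Q}}\Omega_Q=d\langle\mathcal{J}_{T^*Q},\xi\rangle$ of the momentum map to $W^\circ$ yields ${\bf i}_{\xi_{W^\circ}}\Omega_{\mbox{\tiny{$W^{\!\circ}$}}}=d\langle\mathcal{J}_0,\xi\rangle$, where $\langle\mathcal{J}_0(\mu),\xi\rangle=\langle\mu,\xi_Q(\tau_0(\mu))\rangle$. But for $\xi\in\mathfrak{g}_W$ and $\mu\in W^\circ$ we have $\langle\mathcal{J}_0(\mu),\xi\rangle=0$, because $\xi_Q$ is a section of $W$ and $\mu$ annihilates $W$ — this is precisely the vanishing $\iota_0^*p_a=0$ used in the proof of Proposition~\ref{P:W0isPoisson}. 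Hence $d\langle\mathcal{J}_0,\xi\rangle\equiv 0$, so ${\bf i}_{\xi_{W^\circ}}\Omega_{\mbox{\tiny{$W^{\!\circ}$}}}=0$ and $\W_0\subseteq\ker\,\Omega_{\mbox{\tiny{$W^{\!\circ}$}}}$.

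For the reverse inclusion I would take $v\in\ker\,\Omega_{\mbox{\tiny{$W^{\!\circ}$}}}$ and decompose it as $v=v_{\C_0}+v_{\W_0}$ along the $G$-invariant splitting $T(W^\circ)=\C_0\oplus\W_0$. For any $w\in\C_0$ one has $\Omega_{\mbox{\tiny{$W^{\!\circ}$}}}(v,w)=0$ since $v$ lies in the kernel, and $\Omega_{\mbox{\tiny{$W^{\!\circ}$}}}(v_{\W_0},w)=0$ by the inclusion just established; therefore $\Omega_{\mbox{\tiny{$W^{\!\circ}$}}}(v_{\C_0},w)=0$ for all $w\in\C_0$, and nondegeneracy of $\Omega_{\mbox{\tiny{$W^{\!\circ}$}}}|_{\C_0}$ (Lemma~\ref{L:nondeg}) forces $v_{\C_0}=0$, so $v=v_{\W_0}\in\W_0$. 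This completes the argument. There is no serious obstacle: the only substantive input is the vanishing of the components $\langle\mathcal{J}_0,\xi\rangle$ for $\xi\in\mathfrak{g}_W$ in the first step — it is exactly here that the vertical-symmetry hypothesis is used, and it is the same computation already appearing in Proposition~\ref{P:W0isPoisson} — while the second inclusion is purely formal once Lemma~\ref{L:nondeg} and the splitting from Remark~\ref{R:Split-on-MandQ} are at hand.
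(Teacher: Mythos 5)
Your proof is correct and follows essentially the same route as the paper: both arguments use the vertical-symmetry condition to write $\W_0$ as the span of infinitesimal generators $\xi_{W^\circ}$ with $\xi\in\mathfrak{g}_{\mbox{\tiny{$W$}}}$ constant, show ${\bf i}_{\xi_{W^\circ}}\Omega_{\mbox{\tiny{$W^{\!\circ}$}}}=0$, and then conclude equality from the nondegeneracy of $\Omega_{\mbox{\tiny{$W^{\!\circ}$}}}|_{\C_0}$ (Lemma~\ref{L:nondeg}) together with the splitting $T(W^\circ)=\C_0\oplus\W_0$. The only cosmetic difference is that you obtain the vanishing via the restricted momentum-map identity ${\bf i}_{\xi_{W^\circ}}\Omega_{\mbox{\tiny{$W^{\!\circ}$}}}=d\langle\mathcal{J}_0,\xi\rangle$ with $\langle\mathcal{J}_0,\xi\rangle\equiv 0$ on $W^\circ$, whereas the paper runs the same Cartan-formula computation directly with the $G$-invariant Liouville form $\Theta_{\mbox{\tiny{$W^{\!\circ}$}}}$; you also spell out the reverse inclusion that the paper leaves implicit.
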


\begin{proof}
For $v \in {\W_0}|_{m}$, $m \in W^\circ$, we have that
$\Theta_{\mbox{\tiny{$W^{\!\circ}$}}} (v)(m) = m(T\tau_0(v)) = 0$,
since $T\tau_0(v)\in W$. Hence $\W_0 \subset \textup{Ker}\,
\Theta_{\mbox{\tiny{$W^{\!\circ}$}}}$. To see that $\W_0 \subset
\textup{Ker}\, \Omega_{\mbox{\tiny{$W^{\!\circ}$}}}$, it is
sufficient to check that ${\bf i}_{\eta_{\scriptscriptstyle{\W_0}}}
\Omega_{\mbox{\tiny{$W^{\!\circ}$}}} = 0$ for all $\eta \in
\mathfrak{g}_{\mbox{\tiny{$W$}}}$, where
$\eta_{\scriptscriptstyle{\W_0}}$ is the infinitesimal generator
corresponding to $\eta$. Then
$$
{\bf i}_{\eta_{\mbox{\tiny{$\W_0$}}}}
\Omega_{\mbox{\tiny{$W^{\!\circ}$}}} = - {\bf
i}_{\eta_{\scriptscriptstyle{\W_0}}} d
\Theta_{\mbox{\tiny{$W^{\!\circ}$}}} = -
\pounds_{\eta_{\scriptscriptstyle{\W_0}}}
\Theta_{\mbox{\tiny{$W^{\!\circ}$}}} + d {\bf
i}_{\eta_{\scriptscriptstyle{\W_0}}}
\Theta_{\mbox{\tiny{$W^{\!\circ}$}}} = 0
$$
using the $G$-invariance of $\Theta_{\mbox{\tiny{$W^\circ$}}}$. As
remarked in Lemma~\ref{L:nondeg}, $\Omega_{\mbox{\tiny{$W^\circ$}}}
|_{\C_0}$ is nondegenerate, hence $\W_0 = \textup{Ker}\,
\Omega_{\mbox{\tiny{$W^{\!\circ}$}}}$.
\end{proof}

It follows from Lemma~\ref{L:A-PresymplecticRed}$(i)$ in the
Appendix that the presymplectic submanifold $(W^\circ,
\Omega_{\mbox{\tiny{$W^{\!\circ}$}}})$ of $(T^*Q,\Omega_Q)$ reduces
to a Poisson structure on $W^\circ / G$. The fact that this Poisson
structure coincides with $\Lambda_0$ of Prop.~\ref{P:W0isPoisson} is
a direct consequence of Lemma~\ref{L:B-PresymAlmPoisson}:

\begin{proposition} \label{P:W0-Poisson2}
Let $W$ be a $G$-invariant vertical complement of $D$ in $TQ$
satisfing the vertical-symmetry condition. Then the Poisson
structure on $W^\circ / G$ obtained by reduction of
$\Omega_{\mbox{\tiny{$W^{\!\circ}$}}}$ (as in
Lemma~\ref{L:A-PresymplecticRed}$(i)$) coincides with the Poisson
structure $\Lambda_0$ of Prop.~\ref{P:W0isPoisson}.
\end{proposition}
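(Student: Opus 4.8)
The plan is to deduce the statement from the two presymplectic-reduction lemmas of the Appendix, Lemma~\ref{L:A-PresymplecticRed} and Lemma~\ref{L:B-PresymAlmPoisson}, once their hypotheses are checked in the case at hand. The crucial structural input is already available: by Lemma~\ref{L:kernel} we have $\textup{Ker}\,\Omega_{\mbox{\tiny{$W^{\!\circ}$}}} = \W_0$, and because $W$ (hence $\W_0$) satisfies the vertical-symmetry condition \eqref{Eq:VerticalSymm-Q}, $\W_0$ is precisely the distribution tangent to the orbits of the connected subgroup $G_{\mbox{\tiny{$W$}}}\subseteq G$ integrating $\mathfrak{g}_{\mbox{\tiny{$W$}}}$; moreover $G_{\mbox{\tiny{$W$}}}$ is normal in $G$ since the $G$-invariance of $W$ forces $\mathfrak{g}_{\mbox{\tiny{$W$}}}$ to be $\textup{Ad}$-invariant. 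Thus $(W^\circ, \Omega_{\mbox{\tiny{$W^{\!\circ}$}}})$ is a presymplectic submanifold of $(T^*Q, \Omega_Q)$ whose kernel foliation coincides with the orbit foliation of $G_{\mbox{\tiny{$W$}}}$, which is exactly the situation handled by Lemma~\ref{L:A-PresymplecticRed}$(i)$: it produces a well-defined Poisson structure on $W^\circ/G$, obtained by first descending $\Omega_{\mbox{\tiny{$W^{\!\circ}$}}}$ to a symplectic form on $W^\circ/G_{\mbox{\tiny{$W$}}}$ and then performing Poisson reduction by $G/G_{\mbox{\tiny{$W$}}}$.

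It then remains to identify this Poisson structure with $\Lambda_0$. Recall that $\Lambda_0$ is by definition the reduction to $W^\circ/G$ of the bivector $\pi_0$ on $W^\circ$, and that $\pi_0$ is the bivector associated, as in \eqref{eq:defPiC0} (following \eqref{Eq:bivector-section}), to the $G$-invariant complement $\C_0$ of $\W_0 = \textup{Ker}\,\Omega_{\mbox{\tiny{$W^{\!\circ}$}}}$ together with the pointwise nondegenerate restriction $\Omega_{\mbox{\tiny{$W^{\!\circ}$}}}|_{\C_0}$ (nondegeneracy being Lemma~\ref{L:nondeg}). This is precisely the data to which Lemma~\ref{L:B-PresymAlmPoisson} applies: that lemma records that the Poisson structure obtained by presymplectic reduction via Lemma~\ref{L:A-PresymplecticRed}$(i)$ agrees with the reduction of the bivector built from any $G$-invariant complement of the kernel and the presymplectic form. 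Taking this complement to be $\C_0$ yields exactly $\Lambda_0$, which is the assertion.

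The only points needing verification are the bookkeeping hypotheses of the Appendix lemmas: that $\C_0$ is a $G$-invariant complement of $\textup{Ker}\,\Omega_{\mbox{\tiny{$W^{\!\circ}$}}}$ (which follows from Remark~\ref{R:Split-on-MandQ}, giving $T(W^\circ)=\C_0\oplus\W_0$ with both summands $G$-invariant, combined with Lemma~\ref{L:kernel}), and that the iterated quotient $W^\circ \to W^\circ/G_{\mbox{\tiny{$W$}}} \to W^\circ/G$ is well behaved, which uses freeness and properness of the $G$-action together with the normality of $G_{\mbox{\tiny{$W$}}}$. I expect no genuine difficulty here: the one thing one might worry about is whether the Poisson bracket computed on $G$-basic functions through $\pi_0$ really matches the one obtained after the two successive reductions of $\Omega_{\mbox{\tiny{$W^{\!\circ}$}}}$, but this coincidence is exactly the content of Lemma~\ref{L:B-PresymAlmPoisson}, so no argument beyond invoking it is required.
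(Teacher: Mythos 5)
Your argument is correct and is essentially the paper's own proof: one checks the hypotheses of the Appendix lemmas via Lemma~\ref{L:kernel} (together with Lemma~\ref{L:nondeg} and the $G$-invariant splitting $T(W^\circ)=\C_0\oplus\W_0$ from Remark~\ref{R:Split-on-MandQ}), and then the coincidence with $\Lambda_0$ is exactly the content of Lemma~\ref{L:B-PresymAlmPoisson} applied with $F=\C_0$. The detour through the subgroup $G_{\mbox{\tiny{$W$}}}$, its normality, and the two-step quotient description is superfluous: Lemma~\ref{L:A-PresymplecticRed}$(i)$ only requires that $\textup{Ker}\,\Omega_{\mbox{\tiny{$W^{\!\circ}$}}}=\W_0$ has constant rank and is contained in the vertical distribution $\V_0$, which Lemma~\ref{L:kernel} already provides.
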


As a motivation to Propositions \ref{P:W0isPoisson} and \ref{P:W0-Poisson2}, we will illustrate these results in the case of Chaplygin symmetries in Example \ref{Ex:Ch} below.


\subsection{The equivalence} \label{Ss:TheEquivalence}

In this section, we compare the Poisson structures $\Lambda$ and
$\Lambda_0$, obtained in Propositions~\ref{P:poisson1} and
\ref{P:W0isPoisson}. We recall the set-up defined by a nonholonomic
system with symmetries.

Let $D$ be the $G$-invariant constraint distribution on the
configuration manifold $Q$, and let $W$ be a $G$-invariant vertical
complement of $D$. This defines an invariant bivector field $\pi_0$
on the submanifold $W^\circ \subset T^*Q$ as in \eqref{eq:defPiC0}.
On the other hand, by using the ($G$-invariant) kinetic energy
metric $\kappa$, we consider the submanifold $\M \subset T^*Q$
equipped with the $G$-invariant vertical complement
\begin{equation}\label{Eq:W}
\W := (T\tau|_\V)^{-1}(W)
\end{equation}
of $\C$ (induced by $W$ as in Remark \ref{R:Split-on-MandQ}). On
$\M$, we have the invariant bivector field $\pi_{\mbox{\tiny
${\mathcal J}\! \mathcal{K}$   }}$, introduced in
Section~\ref{Ss:ViaJK}. Note that if $W$ satisfies the
vertical-symmetry condition, then $\W$ also does (and
$\mathfrak{g}_\subW = \mathfrak{g}_{\mbox{\tiny{$W$}}}$).

The next result relates the almost Poisson manifolds
$(\M,\pi_{\mbox{\tiny ${\mathcal J}\! \mathcal{K}$   }})$ (which
depends on the metric $\kappa$) and $(W^\circ,\pi_0)$ (which is
metric-independent).

\begin{proposition}
\label{P:k-k0isomorphism} For a nonholonomic system with symmetries
satisfying the dimension assumption, let $W$ be a $G$-invariant
vertical complement of $D$ in $TQ$, and consider the almost Poisson
manifolds $(\M,\pi_{\mbox{\tiny ${\mathcal J}\! \mathcal{K}$   }})$
and $(W^\circ,\pi_0)$.
Then there exists a $G$-equivariant diffeomorphism $\Psi : \M \to
W^\circ$ preserving the almost Poisson structures.

As a consequence, if $W$ satisfies the vertical-symmetry condition,
so that $(\M /G, \Lambda)$  and $(W^\circ\!/G, \Lambda_0)$ are
Poisson manifolds, $\Psi$ induces a Poisson diffeomorphism between
them:
\begin{equation}  \label{D:MetricsDiag}
\xymatrix{ (\M, \pi_{\mbox{\tiny ${\mathcal J}\! \mathcal{K}$}})
\ar[d] \ar[rr]^{\Psi}_{\simeq} & &
(W^\circ, \pi_{0} ) \ar[d]_{} \\
(\M/G , \Lambda) \ar[rr]^{\Psi_{\emph\red}}_{\simeq} & &
(W^\circ\!/G, \Lambda_0 ). }
\end{equation}
\end{proposition}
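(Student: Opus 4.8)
The plan is to produce $\Psi$ explicitly as a fiberwise linear bundle map covering $\mathrm{id}_Q$, and then to check that it intertwines the two almost Poisson structures by a computation with the tautological $1$-forms.

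First I would define $\Psi$. Dualizing $TQ=D\oplus W$ gives $T^*Q=D^\circ\oplus W^\circ$; let $P_{W^\circ}\colon T^*Q\to W^\circ$ be the projection along $D^\circ$ and set $\Psi:=P_{W^\circ}|_\M$. It is $G$-equivariant because the splitting $D^\circ\oplus W^\circ$ is $G$-invariant (as $D$ and $W$ are) and $\M$ is $G$-invariant. It is a diffeomorphism because it is fiberwise linear, injective (if $v\in D$ and $\kappa^\sharp(v)\in D^\circ$ then $\kappa(v,D)=0$, so $v=0$; hence $\M\cap D^\circ=0$), and both $\M$ and $W^\circ$ have fiber rank $\dim D=\dim Q-\dim W$. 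Moreover $\tau_0\circ\Psi=\tau$, so $T\Psi$ carries $\C$ onto $\C_0$, since both are the preimage of $D$ under $T\tau$ and $T\tau_0$ respectively.

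The crux of the argument is the identity of $1$-forms on $\M$
\[
\Theta_\subM-\Psi^*\Theta_{W^\circ}=\langle{\mathcal J},{\mathcal A}_\subW\rangle ,
\]
where ${\mathcal A}_\subW$ is the $\g$-valued $1$-form of \eqref{Eq:AkinPw}. To prove it, observe that the left-hand side at $m\in\M$, evaluated on $v\in T_m\M$, equals $\langle m-\Psi(m),T\tau(v)\rangle=\langle P_{D^\circ}(m),T\tau(v)\rangle$, since the Liouville forms are tautological and $\tau_0\circ\Psi=\tau$. On $\C_m$ both sides vanish: $T\tau(v)\in D$ is annihilated by $P_{D^\circ}(m)\in D^\circ$, and ${\mathcal A}_\subW|_\C\equiv 0$ (because $P_\subW$ kills $\C$ and the action is free). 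On $\W_m$, write $v=\eta_\M(m)$ with $\eta\in\g$; then $T\tau(v)=\eta_Q\in W$, so $\langle P_{D^\circ}(m),\eta_Q\rangle=\langle m,\eta_Q\rangle$ (as $P_{W^\circ}(m)\in W^\circ$ kills $W$), which is $\langle{\mathcal J}(m),\eta\rangle$ by \eqref{Eq:Def-MomentumMapOnM}, and $\eta={\mathcal A}_\subW(v)$ since $P_\subW(v)=v$ and the action is free. As $T_m\M=\C_m\oplus\W_m$, the identity follows. Differentiating (recall $\Omega=-d\Theta$) gives $\Psi^*\Omega_{W^\circ}=\Omega_\subM+d\langle{\mathcal J},{\mathcal A}_\subW\rangle$.

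Next I would restrict to $\C$. For $X,Y\in\Gamma(\C)$ one has ${\mathcal A}_\subW(X)={\mathcal A}_\subW(Y)=0$ and $(d{\mathcal A}_\subW)(X,Y)={\mathcal K}_\subW(X,Y)$ by Definition~\ref{Def:KinCurvature}, so the intrinsic formula for $d\langle{\mathcal J},{\mathcal A}_\subW\rangle$ collapses to $d\langle{\mathcal J},{\mathcal A}_\subW\rangle(X,Y)=\langle{\mathcal J},{\mathcal K}_\subW(X,Y)\rangle$. Hence
\[
(\Psi^*\Omega_{W^\circ})|_\C=\Omega_\subM|_\C+\langle{\mathcal J},{\mathcal K}_\subW\rangle|_\C=\Omega_{\mathcal{JK}}|_\C .
\]
Since, by its definition in Section~\ref{Ss:ViaJK}, $\pi_{\mathcal{JK}}$ is the bivector determined via \eqref{Eq:bivector-section} by the pair $(\C,\Omega_{\mathcal{JK}}|_\C)$ while $\pi_0$ is the one determined by $(\C_0,\Omega_{W^\circ}|_{\C_0})$ via \eqref{eq:defPiC0}, and $\Psi$ is a diffeomorphism with $T\Psi(\C)=\C_0$ and $\Psi^*(\Omega_{W^\circ}|_{\C_0})=\Omega_{\mathcal{JK}}|_\C$, the defining relation transports, giving $\Psi_*\pi_{\mathcal{JK}}=\pi_0$; that is, $\Psi$ preserves the almost Poisson structures. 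Finally, $G$-equivariance lets $\Psi$ descend to a diffeomorphism $\Psi_\red\colon\M/G\to W^\circ/G$; because $\Lambda$ and $\Lambda_0$ are by construction the reductions of $\pi_{\mathcal{JK}}$ and $\pi_0$, naturality of reduction yields $(\Psi_\red)_*\Lambda=\Lambda_0$, and under the vertical-symmetry condition these are Poisson by Prop.~\ref{P:poisson1} and Prop.~\ref{P:W0isPoisson}, so $\Psi_\red$ is a Poisson diffeomorphism and \eqref{D:MetricsDiag} commutes. I expect the only real obstacle to be the tautological-form identity above; everything else is bookkeeping with the splittings $T^*Q=D^\circ\oplus W^\circ$ and $T\M=\C\oplus\W$.
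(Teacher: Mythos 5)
Your argument is correct, and it arrives at the paper's own sufficient criterion \eqref{Eq:apmorphism} --- that $T\Psi(\C)=\C_0$ and $\Psi^*(\Omega_{\mbox{\tiny{$W^{\!\circ}$}}})|_\C=\Omega_{\mbox{\tiny ${\mathcal J}\!\mathcal{K}$}}|_\C$ --- but by a genuinely different verification. The paper fixes a local frame $\{X_i,Z_a\}$ adapted to $TQ=D\oplus W$ with $\{X_i\}$ $\kappa$-orthonormal, introduces the auxiliary metric $\kappa_0$ of \eqref{Eq:k0}, defines $\Psi=(\kappa_0|_D)\circ(\kappa|_D^{-1})^\sharp$ as in \eqref{Eq:MapPsi}, and checks both conditions by writing $\Omega_\subM$, $\langle{\mathcal J},\mathcal{K}_\subW\rangle|_\C$ and $\Omega_{\mbox{\tiny{$W^{\!\circ}$}}}$ in the induced coordinates $(q;p_i,p_a)$. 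You instead take $\Psi=P_{W^\circ}|_\M$, the projection along $D^\circ$ in $T^*Q=D^\circ\oplus W^\circ$; this is in fact the same map (for $v\in D$, $\kappa_0^\sharp(v)$ is the unique covector in $W^\circ$ agreeing with $\kappa^\sharp(v)$ on $D$), but your description makes $G$-equivariance and the fact that $\iota_0\circ\Psi-\iota$ takes values in $D^\circ$ immediate --- the latter is precisely what the paper has to re-derive from \eqref{Eq:k0} in the proof of Theorem~\ref{T:JnhIsMomenMapforJK}$(i)$. Your key lemma, the global identity $\Theta_\subM-\Psi^*\Theta_{\mbox{\tiny{$W^{\!\circ}$}}}=\langle{\mathcal J},\mathcal{A}_\subW\rangle$, replaces the coordinate computation: differentiating and restricting to $\C$, where $\mathcal{A}_\subW$ vanishes and $d\mathcal{A}_\subW$ reduces to $\mathcal{K}_\subW$ by Prop.~\ref{Prop:Prop1}$(i)$, gives $\Psi^*\Omega_{\mbox{\tiny{$W^{\!\circ}$}}}|_\C=(\Omega_\subM+\langle{\mathcal J},\mathcal{K}_\subW\rangle)|_\C$ directly, and it explains conceptually why adding $\langle{\mathcal J},\mathcal{K}_\subW\rangle$ to $\Omega_\subM$ is exactly what is needed to match $\Omega_{\mbox{\tiny{$W^{\!\circ}$}}}$. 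What you lose is the explicit coordinate description and the metric $\kappa_0$, which the paper reuses elsewhere; what you gain is a coordinate-free proof together with the Liouville-form identity itself, which is a slightly stronger statement. The remaining steps --- fiberwise linearity plus $\M\cap D^\circ=0$ for the diffeomorphism property, transport of the pair $(\C,\Omega_{\mbox{\tiny ${\mathcal J}\!\mathcal{K}$}}|_\C)$ to $(\C_0,\Omega_{\mbox{\tiny{$W^{\!\circ}$}}}|_{\C_0})$ via \eqref{Eq:bivector-section} and \eqref{eq:defPiC0}, and descent to the quotients by equivariance --- are all sound and coincide with the paper's bookkeeping.
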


\begin{proof}

Let us consider a local basis $\{X_i, Z_a\}$ of $TQ$ adapted to the
decomposition $TQ=D \oplus W$, such that $\{X_i\}$ is orthonormal
with respect to the kinetic energy metric $\kappa$.   Let $\{X^i,
Z^a\}$ be the dual (local) basis of $T^*Q$. This basis induce local
coordinates $(q; p_i, p_a)$ on $T^*Q$, where $(q)$ are local
coordinates on $Q$.  A local basis of $T^*(T^*Q)$ is given by $\{X^i, Z^a, dp_i,dp_a\}$ and we denote its dual basis by $\{X_i, Z_a, \frac{\partial}{\partial p_i}, \frac{\partial}{\partial p_a}\}$. The Liouville 1-form $\Theta_Q$ on $T^*Q$ is
written as $\Theta_Q= p_iX^i + p_a Z^a$ (see \eqref{Eq:Liouv}), so
\begin{equation}
\Omega_Q= X^i\wedge dp_i - p_i dX^i + Z^a \wedge dp_a - p_a dZ^a. \label{eq:OmegaCoord}
\end{equation}

The local coordinate description of $\M= \kappa^\sharp(D)$ is $\{(q;
p_i, \kappa_{ia} p_i)\}$, where $ \kappa_{ia} =  \kappa(X_i, Z_a)$,
since $\kappa^\sharp(X_i) = {X}^i + \kappa(X_i, Z_a) {Z}^a.$ So the
pull back of the canonical 2-form $\Omega_Q$ to $\M$ is
$$
\Omega_\subM = X^i \wedge d p_i - p_i \cdot dX^i  + Z^a \wedge d
(\kappa_{ia} p_i) - (\kappa_{ia} p_i) \cdot d Z^a .
$$
Observe that, locally, $\C = \textup{span} \{  X_i
,\frac{\partial}{\partial p_i} + \kappa_{ia}
\frac{\partial}{\partial p_a} \}$ and $\W = \textup{span} \{ Z_a
\}$.

Let us compute the 2-form $\langle {\mathcal J}, \mathcal{K}_\subW
\rangle$ (as done in the proof of Prop.~\ref{P:W0isPoisson}) in
these coordinates. We take $\{ \xi^a \}$ a local basis of sections
of $\mathfrak{g}_{\mbox{\tiny{$W$}}} \times Q$ such that $(\xi^a)_Q
= Z_a$. For $X, Y \in \Gamma(\C)$, the coordinates of
$\mathcal{K}_\subW(X,Y)$ are $(\mathcal{K}_\subW  (X,Y) )_a = d
{Z}^a (X,Y)$.  Therefore we obtain
\begin{equation} \label{Proof:JK}
\langle {\mathcal J}, \mathcal{K}_{\subW} \rangle|_\C =
(\kappa_{ia} p_i ) \cdot d Z^a |_\C,
\end{equation}
and, since $Z^a |_\C =0$,
\begin{equation} \label{eq:OmegaJK-Coord}
\Omega_{\mbox{\tiny ${\mathcal J}\! \mathcal{K}$}} |_\C =
(\Omega_\subM + \langle {\mathcal J}, \mathcal{K}_\subW \rangle)
|_\C = (X^i \wedge d p_i -  p_i \cdot d X^i) |_{\C}.
\end{equation}

Consider the metric $\kappa_0$ on $Q$ given at each $ X,Y \in TQ$ by
\begin{equation} \label{Eq:k0}
\kappa_0(X,Y) = \kappa (P_D(X), P_D(Y)) + \kappa(P_W(X), P_W(Y)),
\end{equation}
where $P_D: TQ \to D$ and $P_W : TQ \to W$ are the projections
associated to the decomposition $TQ=D \oplus W$.  The spaces $D$ and
$W$ are orthogonal with respect to this metric. Moreover, the basis
$\{ X_i, Z_a\}$ is $\kappa_0$-orthonormal. Then the coordinate
description of $W^\circ=\kappa_0^\sharp(D)$ is $\{(q; p_i,0)\}$, and
the decomposition $T(W^\circ) = \C_0 \oplus \mathcal{W}_0 $ is
given, locally, by
 $ \C_0 = \textup{span} \{  X_i ,\frac{\partial}{\partial p_i} \}$ and $\W_0 = \textup{span} \{ Z_a \} $.
 On the other hand,
\begin{equation} \label{eq:OmegaW0-Coord}
\Omega_{\mbox{\tiny{$W^{\!\circ}$}}} = \iota^*_0\Omega_Q = X^i \wedge d p_i -  p_i \cdot d X^i.
\end{equation}

We have to see that there exists a diffeomorphism $\Psi:\M \to
W^\circ$ preserving the almost Poisson structures $ \pi_{\mbox{\tiny
${\mathcal J}\! \mathcal{K}$}}$ and $\pi_{0}$. For that, it suffices
to check that
\begin{equation}\label{Eq:apmorphism}
T\Psi (\C) = \C_0 \qquad \mbox{and} \qquad \Omega_{\mbox{\tiny
${\mathcal J}\! \mathcal{K}$}} |_{\C}= \Psi^* (\Omega_{W^\circ})
|_{\C}.
\end{equation}

The map
\begin{equation} \label{Eq:MapPsi}
\Psi := (\kappa_0 |_D)^\sharp \circ (\kappa |_D^{-1})^\sharp : \M
\to W^\circ
\end{equation}
is a diffeomorphism that, by the definition of
$\kappa$, satisfies $\Psi (X^i + \kappa_{ia} Z^a) = (\kappa_0 |_D)
(X_i ) = X^i.$ That is, the global diffeomorphism $\Psi : \M \to
W^\circ$, written in local coordinates, maps $(q; p_i, \kappa_{ia}
p_i)$ on $\M$ into $(q; p_i,0)$ on $W^\circ$. A direct computation
shows that the conditions in \eqref{Eq:apmorphism} hold, so $\Psi$
preserves almost Poisson structures. It  follows from \eqref{Eq:MapPsi} that $\Psi$ is $G$-equivariant so that the induced map
$\M / G\to W^\circ /G$ also preserves the reduced brackets.

\end{proof}

\begin{example}[The nonholonomic particle] \label{Ex:NHParticle3}
Let us keep the notation of Example~\ref{Ex:NHParticle} and \ref{Ex:NHParticle2}. Consider the local coordinates $(\tilde p_x, \tilde p_y,
\tilde p_z)$ associated to the local basis $\{dx,dy,\epsilon\}$ of
$T^*Q$. Then, $\{dx,dy\}$ generates $W^\circ$ so the associated
local coordinates in $W^\circ$ are $(\tilde p_x, \tilde p_y,0)$.
Therefore,
$$
\Omega_{\mbox{\tiny{$W^{\!\circ}$}}} = dx\wedge d\tilde p_x+dy
\wedge d\tilde p_y= dx\wedge dp_x + dy\wedge dp_y$$and
$\displaystyle{\C_0=span \left\{\frac{\partial}{\partial x} + y
\frac{\partial}{\partial z}, \frac{\partial}{\partial y},
\frac{\partial}{\partial p_x}, \frac{\partial}{\partial p_y}
\right\} }$ where $( p_x,p_y,  p_z)$ are the canonical coordinates
on $T^*Q$. The \ isomorphism \ $\Psi:\M \to W^\circ$  \ of \  Proposition \ref{P:k-k0isomorphism}, \ written \ in \ 
canonical \  coordinates, \ is \ $\displaystyle{\Psi(p_x,p_y,yp_z) = \left(
\frac{p_x}{1+y^2},p_y,0 \right).}$ We can check that
$\Psi^*\Omega_{\mbox{\tiny{$W^{\!\circ}$}}} |_{\C_0} =
(1+y^2)dx\wedge dp_x + dy\wedge dp_y +2yp_xdx\wedge dy =
\Omega_{\mbox{\tiny $J\! \mathcal{K}$   }} |_\C.$

On the other hand, the Poisson bivector field $\Lambda_0$ is computed using the presymplectic 2-form $\Omega_{\mbox{\tiny{$W^{\!\circ}$}}}$ and \eqref{Eq:PresymBivector} and thus we obtain that $\Lambda_0=  \frac{\partial}{\partial y}\wedge \frac{\partial}{\partial p_y}$. The Poisson bivector $\Lambda$ in \eqref{Ex:NHPLambda} is isomorphic to $\Lambda_0$ via the isomorphism $\Psi_\red:\M/G \to W^\circ/G$.
\end{example}

\medskip

\begin{example}[Chaplygin systems]\label{Ex:Ch}
In the case of Chaplygin symmetries, as recalled in Example~\ref{Ex:ChapSystems}, the reduced
bivector $\pi^\nh_\red$ on $\M / G$ is defined by a 2-form $\Omega_\red^\nh$.  By Prop.~\ref{P:diagramBbasic}, the reduced bivector $\Lambda$ on $\M / G$  is obtained from $\pi^\nh_\red$
by a gauge transformation by $-\langle \mathcal{J},\mathcal{K} \rangle_\red$, which is equivalent to
saying that $\Lambda$ is defined by the nondegenerate 2-form
\begin{equation}\label{Eq:lambda2form}
\Omega_\red^\nh + \langle \mathcal{J},\mathcal{K} \rangle_\red.
\end{equation}
The assertion that $\Lambda$ is a Poisson structure is the same as saying that the 2-form \eqref{Eq:lambda2form} is closed (i.e., symplectic), which is in agreement with \eqref{Eq:BS}.

The Poisson structure $\Lambda_0$
is defined on $V^\circ / G$. By Prop.~\ref{P:W0-Poisson2}, Lemma~\ref{L:kernel} and Remark~\ref{rm:ch}
in the Appendix,
$\Lambda_0$ corresponds to a symplectic form $\Omega_0$ on $V^\circ / G$, uniquely determined
by the property that
\begin{equation}\label{Eq:pb0}
\rho^* \Omega_0 = \Omega_{\mbox{\tiny{$V^{\!\circ}$}}}=\iota_0^* \Omega_Q.
\end{equation}
If $J: T^*Q\to \mathfrak{g}^*$ is the canonical momentum map for the lifted action on $T^*Q$, then
$V^\circ = J^{-1}(0)$ and it is well-known that the corresponding Marsden-Weinstein reduction is
the cotangent bundle $(T^*(Q/G),\Omega_{\mbox{\tiny{$Q/G$}}})$, see e.g. \cite{MMORR}.
With the identification $V^\circ /G = J^{-1}(0)/G = T^*(Q/G)$, it follows from
\eqref{Eq:pb0} that $\Omega_0$ coincides with the canonical symplectic form  $\Omega_{\mbox{\tiny{$Q/G$}}}$, so the Poisson structure $\Lambda_0$ is just the canonical one on
$T^*(Q/G)$.
Now, by Prop.~\ref{P:k-k0isomorphism} we identify $\M /G$ with $T^*(Q/G)$, $\Lambda$ coincides with $\Lambda_0$ and we recover the following result from \cite{MovingFrames}:  
%
%
%
\begin{corollary}
On a Chaplygin system, we have an identification $\M / G \cong T^*(Q/G)$ in such a way that
$$
\Omega_{\emph\red}^{\emph\nh} = \Omega_{\mbox{\tiny{$Q/G$}}} - \langle \mathcal{J}, \mathcal{K}\rangle_{\emph\red},
$$
where $\Omega_{\mbox{\tiny{$Q/G$}}}$ is the canonical symplectic 2-form on $T^*(Q/G)$.
\end{corollary}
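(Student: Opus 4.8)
The plan is to chain together the three results cited in the statement, keeping careful track of how the relevant symplectic/Poisson structures transport under the identifications involved. First I would specialize the general framework to the Chaplygin case: the dimension assumption becomes $T\M=\C\oplus\V$, the only $G$-invariant vertical complement of $\C$ is $\W=\V$ (coming from $W=V$ on $Q$), and $\mathcal{K}_\subW=\mathcal{K}$ is the ordinary curvature of the connection $\C\subset T\M$. In this situation $\pi^\nh_\red$ is nondegenerate and corresponds to the 2-form $\Omega_\red^\nh$, the 2-form $\langle\mathcal{J},\mathcal{K}\rangle$ is basic with respect to $\rho:\M\to\M/G$ (being $G$-invariant and killed by $\V=\ker P_\subC$), and, by Example~\ref{Ex:Ch}, the gauge transformation of $\pi^\nh_\red$ by $-\langle\mathcal{J},\mathcal{K}\rangle_\red$ is exactly the Poisson bivector $\Lambda$ of Prop.~\ref{P:poisson1}. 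By Remark~\ref{R:GaugeT}$(ii)$, this means $\Lambda$ is the nondegenerate bivector associated with the symplectic form $\Omega_\red^\nh+\langle\mathcal{J},\mathcal{K}\rangle_\red$ on $\M/G$.

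Next I would bring in the metric-independent side. Since $W=V$, the submanifold $W^\circ=V^\circ\subset T^*Q$ equals $J^{-1}(0)$ for the canonical momentum map $J:T^*Q\to\mathfrak{g}^*$, so Marsden--Weinstein reduction identifies $V^\circ/G$ with $T^*(Q/G)$. By Example~\ref{Ex:Ch2}, under this identification the Poisson structure $\Lambda_0$ of Prop.~\ref{P:W0isPoisson} is nondegenerate and becomes the canonical Poisson structure on $T^*(Q/G)$, i.e.\ the one attached to the canonical symplectic form $\Omega_{\mbox{\tiny{$Q/G$}}}$, via $\rho^*\Omega_{\mbox{\tiny{$Q/G$}}}=\Omega_{\mbox{\tiny{$V^{\!\circ}$}}}=\iota_0^*\Omega_Q$.

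The final step is to invoke Prop.~\ref{P:k-k0isomorphism}, which for $W=V$ provides a $G$-equivariant diffeomorphism $\Psi:\M\to V^\circ$ intertwining $\pi_{\mbox{\tiny ${\mathcal J}\!\mathcal{K}$}}$ and $\pi_0$, hence a Poisson diffeomorphism $\Psi_\red:(\M/G,\Lambda)\to(V^\circ/G,\Lambda_0)$; composing with the identification $V^\circ/G\cong T^*(Q/G)$ gives the identification $\M/G\cong T^*(Q/G)$ asserted in the statement. Since both $\Lambda$ and $\Lambda_0$ are nondegenerate, $\Psi_\red$ is in fact a symplectomorphism for the corresponding symplectic forms, so it pulls $\Omega_{\mbox{\tiny{$Q/G$}}}$ back to $\Omega_\red^\nh+\langle\mathcal{J},\mathcal{K}\rangle_\red$. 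Under the identification this reads $\Omega_\red^\nh=\Omega_{\mbox{\tiny{$Q/G$}}}-\langle\mathcal{J},\mathcal{K}\rangle_\red$, which is the claim.

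The only delicate point I anticipate is the bookkeeping of identifications: one must make sure that the diffeomorphism $\Psi_\red$ of Prop.~\ref{P:k-k0isomorphism} is compatible with the Marsden--Weinstein identification $V^\circ/G\cong T^*(Q/G)$ used in Example~\ref{Ex:Ch2}, so that ``the'' identification $\M/G\cong T^*(Q/G)$ is unambiguous. Concretely, in the adapted coordinates of the proof of Prop.~\ref{P:k-k0isomorphism} (with $\{X_i\}$ a $\kappa$-orthonormal local frame of $D$ and $\{Z_a\}$ a local frame of $V$) one has $\Psi(q;p_i,\kappa_{ia}p_i)=(q;p_i,0)$ and $\Omega_{\mbox{\tiny ${\mathcal J}\!\mathcal{K}$}}|_\C=\Psi^*\Omega_{\mbox{\tiny{$V^{\!\circ}$}}}|_\C$; combined with the fact that $\Omega_{\mbox{\tiny{$V^{\!\circ}$}}}=\iota_0^*\Omega_Q$ descends to $\Omega_{\mbox{\tiny{$Q/G$}}}$, this makes the compatibility transparent, and the computation collapses to the clean statement above.
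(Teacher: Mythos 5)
Your proposal is correct and follows essentially the same route as the paper: it combines Example~\ref{Ex:Ch} (identifying $\Lambda$ with the 2-form $\Omega_{\red}^{\nh}+\langle\mathcal{J},\mathcal{K}\rangle_{\red}$), Example~\ref{Ex:Ch2} (identifying $(V^\circ/G,\Lambda_0)$ with canonical $T^*(Q/G)$), and the Poisson diffeomorphism $\Psi_{\red}$ of Prop.~\ref{P:k-k0isomorphism}, exactly as in the paper's proof. The extra coordinate check of compatibility of the identifications is a reasonable elaboration but not a different argument.
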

%
\end{example}

\section{Symplectic leaves and the reduced dynamics} \label{S:KerVertical}

Throughout this section, we consider nonholonomic systems with
symmetries satisfying the dimension assumption. We let $W\subseteq
TQ$ be a $G$-invariant vertical complement of the constraint
distribution $D$ satisfying the vertical-symmetry condition
\eqref{Eq:VerticalSymmetries}, and let $\W \subseteq T\M$ be as in
Remark~\ref{R:Split-on-MandQ}. Our aim is to relate the Poisson
structure \eqref{Eq:PoissonW} with the reduced nonholonomic
dynamics.

\subsection{Symplectic leaves}

Let us consider the Poisson structures $\Lambda$ on $\M /G$ (as in
\eqref{Eq:PoissonW}) and $\Lambda_0$ on $W^\circ / G$ (as in
Propositions~\ref{P:W0isPoisson} and \ref{P:W0-Poisson2}). We start
by describing their symplectic leaves.

Using the description of $\Lambda_0$ in Prop.~\ref{P:W0-Poisson2} as
a reduction of $\Omega_{\mbox{\tiny{$W^\circ$}}}$, we have the
following diagram:
\begin{equation} \label{d:Diagram-BWmap}
\xymatrix{(W^\circ, \Omega_{W^\circ}) \ar@{^{(}->}[r]^{\iota_{0}}
\ar[d]_{\rho_0} &  (T^*Q, \Omega_Q) \ar[d]_{\rho} \\  (W^\circ\! /
G, \Lambda_0) \ar@{^{(}->}[r]_{\bar{\iota}_0} & (T^*Q/G,
\pi_{\mbox{\tiny{can} }}). }
\end{equation}
As remarked in Lemma~\ref{L:A-PresymplecticRed}, the symplectic
leaves of $\Lambda_0$ are the (connected components of) the
intersections of the leaves of $\pi_{\mbox{\tiny{can} }}$
with the submanifold $W^\circ /G \hookrightarrow T^*Q / G$.
Moreover, the leaves of $\Lambda_0$ sit inside the leaves of
$\pi_{\mbox{\tiny{can} }}$ as symplectic submanifolds. It is
worth noticing that $W^\circ /G$ is not generally a Poisson
submanifold of $T^*Q/G$ (it is a {\it Poisson-Dirac submanifold} in
the sense of \cite[Sec.~8]{CrFe}).

More explicitly, denoting by $J: T^*Q \to \mathfrak{g}^*$ the
canonical momentum map  for the lifted action on $T^*Q$ and letting
$\mathcal{J}_0 = J\circ \iota_0: W^\circ \to \mathfrak{g}^*$ be its
restriction to $W^\circ$, it directly follows from Cor.~\ref{C:Asl}
that the symplectic leaves of $\Lambda_0$ are given by (connected
components of) the quotients $\mathcal{J}_0^{-1}(\mu)/G_\mu$, for
$\mu\in \mathfrak{g}^*$. The symplectic structures on the leaves are
described as follows: recalling that the symplectic leaves of
$\pi_{\mbox{\tiny{can} }}$ are given by (connected components
of) the Marsden-Weinstein quotients $(J^{-1}(\mu)/G_\mu,
\omega_\mu)$, \cite[Sec.~2]{MMORR} the symplectic structure on
$\mathcal{J}_0^{-1}(\mu)/G_\mu$ is given by
$\bar{\iota}_0^*\omega_\mu$, where we keep the same notation
$$
\bar{\iota}_0: \mathcal{J}_0^{-1}(\mu)/G_\mu \hookrightarrow
J^{-1}(\mu)/G_\mu
$$ 
for the restriction of the the inclusion.

Using the isomorphism $\Psi: \M\to W^\circ$ of
Prop.~\ref{P:k-k0isomorphism} and the induced Poisson diffeomorphism
$\Psi_{\red}: (\M /G, \Lambda) \to (W^\circ /G,\Lambda_0)$, we
conclude the following.

\begin{theorem} \label{T:PoissonPw}
Consider the natural embedding $\bar{\iota}_0 \circ \Psi_{\emph\red}: \M /G
\hookrightarrow T^*Q/G$. Then:
\begin{enumerate}
\item [$(i)$] The symplectic leaves of the Poisson manifold
$(\M / G, \Lambda)$ are connected components of the intersections of
leaves of $\pi_{\mbox{\tiny{\emph{can}}}}$ with the image of $\M /G$
in $T^*Q/G$, and they sit inside the leaves of
$\pi_{\mbox{\tiny{\emph{can}}}}$ as symplectic submanifolds.

\item [$(ii)$] The symplectic leaves of $\Lambda$ are connected
components of quotients $({\mathcal J}_0 \circ \Psi)^{-1} (\mu) /
G_\mu$, for $\mu \in \mathfrak{g}^*$, equipped with symplectic forms
$(\bar{\iota}_0 \circ \Psi_{\emph\red})^*\omega_\mu$, where we keep the
notation $\bar{\iota}_0 \circ \Psi_{\emph\red}: ({\mathcal J}_0 \circ
\Psi)^{-1} (\mu) / G_\mu \hookrightarrow J^{-1}(\mu)/G_\mu$ for the
restricted embedding.
\end{enumerate}
\end{theorem}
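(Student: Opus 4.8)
The plan is to reduce the statement to results already established in the excerpt, chiefly Proposition~\ref{P:W0-Poisson2} (which describes $\Lambda_0$ as a presymplectic reduction of $\Omega_{\mbox{\tiny{$W^{\!\circ}$}}}$), the accompanying Appendix lemmas on presymplectic reduction (Lemma~\ref{L:A-PresymplecticRed} and Corollary~\ref{C:Asl}), and the Poisson diffeomorphism $\Psi_{\red}:(\M/G,\Lambda)\to(W^\circ/G,\Lambda_0)$ of Proposition~\ref{P:k-k0isomorphism}. Since a Poisson diffeomorphism carries symplectic leaves to symplectic leaves together with their symplectic forms, everything about the leaves of $\Lambda$ on $\M/G$ can be read off from the corresponding statement about $\Lambda_0$ on $W^\circ/G$ and then transported back via $\Psi_{\red}$.

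First I would record the leaf description for $\Lambda_0$ using diagram \eqref{d:Diagram-BWmap}. By Proposition~\ref{P:W0-Poisson2}, $\Lambda_0$ is the reduction of the presymplectic manifold $(W^\circ,\Omega_{\mbox{\tiny{$W^{\!\circ}$}}})$, and $(W^\circ,\Omega_{\mbox{\tiny{$W^{\!\circ}$}}})$ is a presymplectic submanifold of $(T^*Q,\Omega_Q)$. Lemma~\ref{L:A-PresymplecticRed} then says that the symplectic leaves of $\Lambda_0$ are the connected components of the intersections of the symplectic leaves of $\pi_{\mbox{\tiny{can}}}$ on $T^*Q/G$ with the submanifold $\bar\iota_0(W^\circ/G)$, and that these sit inside the leaves of $\pi_{\mbox{\tiny{can}}}$ as symplectic submanifolds. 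This gives statement $(i)$ for $\Lambda_0$; applying $\Psi_{\red}$ and using that $\bar\iota_0\circ\Psi_{\red}$ is the stated embedding $\M/G\hookrightarrow T^*Q/G$ gives $(i)$ for $\Lambda$. For $(ii)$, I would invoke Corollary~\ref{C:Asl}: since the leaves of $\pi_{\mbox{\tiny{can}}}$ are (connected components of) the Marsden--Weinstein quotients $(J^{-1}(\mu)/G_\mu,\omega_\mu)$, the leaves of $\Lambda_0$ are the connected components of $\mathcal{J}_0^{-1}(\mu)/G_\mu$ with symplectic form $\bar\iota_0^*\omega_\mu$, where $\mathcal{J}_0 = J\circ\iota_0$. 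Pulling back along $\Psi$ — which is $G$-equivariant, so $G_\mu$ is unchanged — replaces $\mathcal{J}_0$ by $\mathcal{J}_0\circ\Psi$ and $\bar\iota_0^*\omega_\mu$ by $(\bar\iota_0\circ\Psi_{\red})^*\omega_\mu$, yielding $(ii)$.

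The one point requiring a little care is the transport of the symplectic forms under $\Psi_{\red}$. I would check commutativity of the square relating $\rho_0:W^\circ\to W^\circ/G$, $\rho:\M\to\M/G$, $\Psi$ and $\Psi_{\red}$, i.e. $\Psi_{\red}\circ\rho = \rho_0\circ\Psi$, which is immediate from the construction of $\Psi_{\red}$ as the map induced on quotients by the equivariant $\Psi$ (Proposition~\ref{P:k-k0isomorphism}). Since $\Psi_{\red}$ is a Poisson diffeomorphism it maps the symplectic leaf through $\Psi_{\red}^{-1}(\mathcal O)$ diffeomorphically and symplectically onto the leaf $\mathcal O$; writing a leaf of $\Lambda_0$ as $\mathcal{J}_0^{-1}(\mu)/G_\mu$ and composing with $\Psi_{\red}^{-1}$ identifies it with $(\mathcal{J}_0\circ\Psi)^{-1}(\mu)/G_\mu$ as claimed, and the symplectic form pulls back as stated because $\Psi_{\red}$ is symplectic leafwise.

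The main obstacle is purely bookkeeping: keeping straight which inclusion/quotient map is which and verifying that all the squares in \eqref{d:Diagram-BWmap} and in \eqref{D:MetricsDiag} commute, so that the composite $\bar\iota_0\circ\Psi_{\red}$ genuinely restricts to the embedding $(\mathcal{J}_0\circ\Psi)^{-1}(\mu)/G_\mu\hookrightarrow J^{-1}(\mu)/G_\mu$ appearing in $(ii)$. There is no analytic difficulty — once the Appendix lemmas and Proposition~\ref{P:k-k0isomorphism} are in hand, the theorem is a formal consequence — but one should be slightly careful that the ``$J^{-1}(0)$'' appearing in the statement of $(ii)$ is consistent with ``$J^{-1}(\mu)$'' in general (this is presumably a typo to be read as $J^{-1}(\mu)/G_\mu$, matching Corollary~\ref{C:Asl}).
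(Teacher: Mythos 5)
Your proposal is correct and follows essentially the same route as the paper: the paper likewise reads off the leaves of $\Lambda_0$ from Proposition~\ref{P:W0-Poisson2} together with Lemma~\ref{L:A-PresymplecticRed} and Corollary~\ref{C:Asl}, and then transports them to $(\M/G,\Lambda)$ through the Poisson diffeomorphism $\Psi_{\red}$ of Proposition~\ref{P:k-k0isomorphism}. Your reading of $J^{-1}(0)$ in part $(ii)$ as $J^{-1}(\mu)$ is also the intended one.
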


We illustrate this result with the nonholonomic particle,
following Examples \ref{Ex:NHParticle}, \ref{Ex:NHParticle2} and \ref{Ex:NHParticle3}.

\begin{example}\label{Ex:NHParticle4}
Recall that, in canonical coordinates, $T^*Q/G$ is given by
$(y,p_x,p_y,p_z)$ and the leaves of the Poisson bracket
$\pi_{\mbox{\tiny can}}$ are the level sets of the functions $p_x$
and $p_z$. Since $W^\circ /G$ is locally described by the
coordinates $(y,p_x,p_y)$, then, as we just saw, the leaves of
$\Lambda_0$ are the level sets of $p_x$. Now, using the
diffeomorphism  $\Psi:\M \to W^\circ$, given in canonical
coordinates by $\displaystyle{\Psi(p_x,p_y,yp_z) = \left(
\frac{p_x}{1+y^2},p_y,0 \right),}$  we obtain the symplectic leaves
of $\Lambda$: by Theorem \ref{T:PoissonPw} they are given by the
inverse image by $\Psi\red$ of the leaves of
$\Lambda_0$, that is, the level sets of $(1+y^2)p_x$ determine the
leaves of $\Lambda$.
\end{example}


\subsection{Relation with the nonholonomic momentum map} \label{Ss:NHMomentumMap}

In what follows, we analyze the role of the nonholonomic momentum
map \cite{BKMM} in the description of the symplectic leaves of the
Poisson manifold $(\M/G,\Lambda)$.

The definition of the nonholonomic momentum map, that we now recall, does not use any choice of complement $W$ of $D$.
Following \cite{BKMM}, let us consider $\mathcal{S} =
\mathcal{C}\cap \mathcal{V} \subset T\M$ (see \eqref{Eq:S}) and
define the subbundle $\mathfrak{g}_\subS \to \M$ of the trivial
bundle $\mathfrak{g}_\subM = \M\times \mathfrak{g} \to \M$ by
\begin{equation} \label{Def:gS}
\xi \in \mathfrak{g}_\subS|_m \ \Longleftrightarrow \ \xi_\subM (m)
\in \S_m, \mbox{ for } m \in \M.
\end{equation}
We can similarly consider $S=V\cap D\subset TQ$ and
$\mathfrak{g}_S\subset \mathfrak{g}\times Q$,
$\mathfrak{g}_{\mbox{\tiny{$S$}}}|_q = \{\xi \ : \ \xi_Q (q) \in
S_q\}$. Since the action on $\M$ is the lifted action on $Q$,
$\mathfrak{g}_\subS |_m$ coincides with
$\mathfrak{g}_{\mbox{\tiny{$S$}}} |_q$ for $\tau(m)=q$.

For an arbitrary section $\xi \in \Gamma(\mathfrak{g}_\subS) \subset
\Gamma(\mathfrak{g}_\subM) = C^\infty(\M,\mathfrak{g})$, we have
\begin{equation} \label{Eq:NoConsNHMomentMap}
 {\bf i}_{\xi_\subM} \Omega_\subM = - {\bf i}_{\xi_\subM} d \Theta_\subM =
  d ({\bf i}_{\xi_\subM} \Theta_\subM)
- \pounds_{\xi_\subM} \Theta_\subM.
\end{equation}

The {\it nonholonomic momentum map} \cite{BKMM} is the map
${\mathcal J}^{\mbox{\tiny nh}} : \M \to \mathfrak{g}_\subS^*$ given
by
\begin{equation} \label{Eq:Definition-NHmomentmap}
{\mathcal J}^{\mbox{\tiny nh}}(m) (\xi) = {\bf i}_{\xi_\subM(m)}
\Theta_\subM (m),
\end{equation}
for $\xi\in \mathfrak{g}_\subS|_m$. Note that ${\mathcal
J}^{\mbox{\tiny nh}} \in \Gamma(\mathfrak{g}_\subS^*)$ so that, if
$\xi \in \Gamma(\mathfrak{g}_\subS)$, then $\langle {\mathcal
J}^{\mbox{\tiny nh}}, \xi \rangle = {\bf i}_{\xi_\subM} \Theta_\subM
\in C^\infty(\M)$.

From equations \eqref{Eq:NoConsNHMomentMap} and
\eqref{Eq:Definition-NHmomentmap}, we see that, for any $\xi \in
\Gamma(\mathfrak{g}_\S)$, we have
\begin{equation} \label{Eq:NH-momentum}
{\bf i}_{\xi_\subM} \Omega_\subM  = d \langle {\mathcal
J}^{\mbox{\tiny nh}} , \xi \rangle  - \pounds_{\xi_\subM}
\Theta_\subM .
\end{equation}
If we now consider the $G$-invariant hamiltonian function
$\Ham_\subM \in C^\infty(\M)$, then from \eqref{Eq:NH-momentum} we
observe that $\langle {\mathcal J}^\nh,\xi \rangle \in C^\infty(\M)$
is conserved by the flow of $X_\nh$ if and only if ${\bf i}_{X_\nh}
\pounds_{\xi_\subM} \Theta_\subM = 0$.

\begin{remark}
If $\xi$ is a constant section, then $\pounds_{\xi_\subM}
\Theta_\subM = 0$ because of the $G$-invariance of the Liouville
1-form. This implies the conservation of the nonholonomic momentum
map in the case of horizontal symmetries, as it was proven in
\cite{ArnoldKozlovNei}.
\end{remark}

The next result relates ${\mathcal J}^\nh : \M \to
\mathfrak{g}_\subS^*$ and ${\mathcal J}_0 \circ \Psi: \M \to
\mathfrak{g}^*$ (considered in Theorem \ref{T:PoissonPw}$(ii)$)
once a choice of $W$ is made. Observe that, for each $m \in\M$, the Lie algebra $\mathfrak{g}$ can
be split into
\begin{equation} \label{eq:g=gs+gw}
\mathfrak{g} = \mathfrak{g}_\subS |_m \oplus \mathfrak{g}_\subW,
\end{equation}
observing that the splitting is simplified by the vertical-symmetry condition,
in that $\mathfrak{g}_\subW$ is independent of $m$. Let
\begin{equation}\label{eq:proj}
P_{\mathfrak{g}_\subS} : \mathfrak{g}_\subM \to \mathfrak{g}_\subS
\end{equation}
be the projection associated to the splitting \eqref{eq:g=gs+gw}. 

\begin{remark}\label{rem:novc}
One can consider a similar splitting to \eqref{eq:g=gs+gw}
if the vertical-symmetry condition does not hold for $W$, except that the complement of $\mathfrak{g}_\subS$ will also vary with $m$; the projection
\eqref{eq:proj} is also defined similarly. 
\end{remark}

\begin{theorem} \label{T:JnhIsMomenMapforJK}
Consider the almost Poisson manifold $(\M, \pi_{\mbox{\tiny
${\mathcal J}\! \mathcal{K}$}})$, defined in Section \ref{Ss:ViaJK}.
The nonholonomic momentum map ${\mathcal J}^{\emph\nh} : \M \to
\mathfrak{g}_\subS^*$ satisfies the following:
\begin{enumerate}
\item[$(i)$]
${\mathcal J}_0 \circ \Psi = P_{\mathfrak{g}_\subS}^* \circ
{\mathcal J}^{\emph{\nh}}$. In other words,
$$
\langle {\mathcal J}_0 \circ \Psi (m) , \eta \rangle = \langle
{\mathcal J}^{\emph{\nh}} (m) , P_{\mathfrak{g}_\subS}(\eta)
\rangle, \qquad \mbox{for all } m \in \M, \eta \in \mathfrak{g}.
$$
\item[$(ii)$] If $\W$ verifies the vertical-symmetry condition \eqref{Eq:VerticalSymmetries} then
$\pi_{\mbox{\tiny ${\mathcal J}\! \mathcal{K}$}} ^\sharp (d  \langle
{\mathcal J}^{\emph{\nh}} , P_{\mathfrak{g}_\subS}(\eta) \rangle ) =
-  (P_{\mathfrak{g}_\subS}(\eta) )_\subM $,\;\; $\eta \in
\mathfrak{g}$.

\end{enumerate}
\end{theorem}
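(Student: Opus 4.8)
The plan is to prove $(i)$ directly by unwinding the definitions of $\mathcal{J}_0$, $\mathcal{J}^\nh$ and $\Psi$, and then to deduce $(ii)$ from $(i)$ together with the fact, established in Proposition~\ref{P:k-k0isomorphism}, that $\Psi:\M\to W^\circ$ is a $G$-equivariant diffeomorphism carrying $\pi_{\mbox{\tiny ${\mathcal J}\! \mathcal{K}$}}$ to $\pi_0$. For $(i)$, I would start from $\mathcal{J}_0=J\circ\iota_0$ with $J$ the canonical momentum map on $T^*Q$, so that $\langle\mathcal{J}_0(w),\eta\rangle=\langle w,\eta_Q(\tau_0(w))\rangle$, and recall from the proof of Proposition~\ref{P:k-k0isomorphism} that $\Psi$ is given by $\Psi(\kappa^\sharp(v))=\kappa_0^\sharp(v)$ for $v\in D$, where $\kappa_0$ is the metric \eqref{Eq:k0}, for which $D$ and $W$ are $\kappa_0$-orthogonal and $\kappa_0$ restricts to $\kappa$ on $D$. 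Writing $m=\kappa^\sharp(v)$ with $v\in D_q$, $q=\tau(m)$, the left-hand side of $(i)$ becomes $\langle\kappa_0^\sharp(v),\eta_Q(q)\rangle=\kappa_0(v,\eta_Q(q))$, while the right-hand side $\langle\mathcal{J}^\nh(m),P_{\mathfrak{g}_\subS}(\eta)\rangle$ equals $\langle m,(P_{\mathfrak{g}_\subS}(\eta))_Q(q)\rangle=\kappa(v,(P_{\mathfrak{g}_\subS}(\eta))_Q(q))$, using $\Theta_\subM(X)=\langle m,T\tau(X)\rangle$ and that infinitesimal generators on $\M$ project under $T\tau$ to those on $Q$. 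To match the two I would decompose $\eta=P_{\mathfrak{g}_\subS}(\eta)+(\eta-P_{\mathfrak{g}_\subS}(\eta))$ along $\mathfrak{g}=\mathfrak{g}_\subS|_m\oplus\mathfrak{g}_\subW$: then $(P_{\mathfrak{g}_\subS}(\eta))_Q(q)\in S_q\subset D_q$, while $(\eta-P_{\mathfrak{g}_\subS}(\eta))_Q(q)\in W_q$ by the vertical-symmetry condition \eqref{Eq:VerticalSymm-Q} for $W$, so by uniqueness of $V_q=S_q\oplus W_q$ this is exactly the $S_q\oplus W_q$ decomposition of $\eta_Q(q)$. Since $v\in D_q$ is $\kappa_0$-orthogonal to $W_q$ and $\kappa_0=\kappa$ on $D$, the $W_q$-component drops out of $\kappa_0(v,\eta_Q(q))$, giving $\kappa_0(v,\eta_Q(q))=\kappa(v,(P_{\mathfrak{g}_\subS}(\eta))_Q(q))$, which is the identity in $(i)$.

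For $(ii)$, by $(i)$ the function $\langle\mathcal{J}^\nh,P_{\mathfrak{g}_\subS}(\eta)\rangle$ on $\M$ equals $\Psi^*(\langle\mathcal{J}_0,\eta\rangle)$ with $\eta$ constant, $\langle\mathcal{J}_0,\eta\rangle\in C^\infty(W^\circ)$, so, since $\Psi$ intertwines $\pi_{\mbox{\tiny ${\mathcal J}\! \mathcal{K}$}}$ and $\pi_0$, we get $\pi_{\mbox{\tiny ${\mathcal J}\! \mathcal{K}$}}^\sharp(d\langle\mathcal{J}^\nh,P_{\mathfrak{g}_\subS}(\eta)\rangle)=(T\Psi)^{-1}\big(\pi_0^\sharp(d\langle\mathcal{J}_0,\eta\rangle)\big)$; it remains to compute $\pi_0^\sharp(d\langle\mathcal{J}_0,\eta\rangle)$. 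I would use that $\pi_0$ is the bivector attached to $(\C_0,\Omega_{\mbox{\tiny{$W^{\!\circ}$}}})$ via \eqref{eq:defPiC0} and that restricting the canonical momentum-map identity on $T^*Q$ to the $G$-invariant submanifold $W^\circ$ gives ${\bf i}_{\eta_{W^\circ}}\Omega_{\mbox{\tiny{$W^{\!\circ}$}}}=d\langle\mathcal{J}_0,\eta\rangle$. Splitting $\eta=P_{\mathfrak{g}_\subS}(\eta)+(\eta-P_{\mathfrak{g}_\subS}(\eta))$ along $\mathfrak{g}=\mathfrak{g}_\subS\oplus\mathfrak{g}_\subW$, one has $(P_{\mathfrak{g}_\subS}(\eta))_{W^\circ}\in\V_0\cap\C_0\subset\C_0$, whereas $(\eta-P_{\mathfrak{g}_\subS}(\eta))_{W^\circ}\in\W_0=\textup{Ker}\,\Omega_{\mbox{\tiny{$W^{\!\circ}$}}}$ by Lemma~\ref{L:kernel}; hence ${\bf i}_{(P_{\mathfrak{g}_\subS}(\eta))_{W^\circ}}\Omega_{\mbox{\tiny{$W^{\!\circ}$}}}=d\langle\mathcal{J}_0,\eta\rangle$, and since $(P_{\mathfrak{g}_\subS}(\eta))_{W^\circ}\in\C_0$, definition \eqref{eq:defPiC0} yields $\pi_0^\sharp(d\langle\mathcal{J}_0,\eta\rangle)=-(P_{\mathfrak{g}_\subS}(\eta))_{W^\circ}$. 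Finally, since $\Psi$ is $G$-equivariant, covers the identity on $Q$, and carries the splitting $\mathfrak{g}=\mathfrak{g}_\subS\oplus\mathfrak{g}_\subW$ on $\M$ to the corresponding one on $W^\circ$ (both reduce to the same splitting over $Q$), one has $(T\Psi)^{-1}\big((P_{\mathfrak{g}_\subS}(\eta))_{W^\circ}\big)=(P_{\mathfrak{g}_\subS}(\eta))_\subM$, and combining the displays gives $\pi_{\mbox{\tiny ${\mathcal J}\! \mathcal{K}$}}^\sharp(d\langle\mathcal{J}^\nh,P_{\mathfrak{g}_\subS}(\eta)\rangle)=-(P_{\mathfrak{g}_\subS}(\eta))_\subM$.

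I expect the main difficulty to be bookkeeping rather than analysis: one must keep track of the fact that $P_{\mathfrak{g}_\subS}(\eta)$ from \eqref{eq:proj} is a section varying over the base (through $\mathfrak{g}_\subS|_m$), identify it consistently on $\M$ and on $W^\circ$ under $\Psi$, and pass cleanly between the two presentations of $\pi_{\mbox{\tiny ${\mathcal J}\! \mathcal{K}$}}$ (the gauge transform of $\pi_\nh$ by $-\langle\mathcal{J},\mathcal{K}_\subW\rangle$, and the bivector of $(\C,\Omega_{\mbox{\tiny ${\mathcal J}\! \mathcal{K}$}})$). The one structural ingredient that makes $(ii)$ work is Lemma~\ref{L:kernel}, $\W_0=\textup{Ker}\,\Omega_{\mbox{\tiny{$W^{\!\circ}$}}}$, which annihilates the $\mathfrak{g}_\subW$-part of $\eta$ and leaves exactly the $\mathfrak{g}_\subS$-part, which happens to lie in $\C_0$ -- precisely what $\pi_0^\sharp$ needs. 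A direct argument on $\M$ bypassing $\Psi$ is also possible: via \eqref{Eq:NH-momentum} it reduces to checking ${\bf i}_{(P_{\mathfrak{g}_\subS}(\eta))_\subM}\langle\mathcal{J},\mathcal{K}_\subW\rangle|_\C=(\pounds_{(P_{\mathfrak{g}_\subS}(\eta))_\subM}\Theta_\subM)|_\C$, but this requires manipulating Lie brackets of vector fields that are not $\tau$-projectable and is less transparent.
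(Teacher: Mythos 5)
Your proposal is correct and follows essentially the same route as the paper's proof: part $(i)$ is the same computation (your metric formulation $\kappa_0(v,\eta_Q)=\kappa(v,(P_{\mathfrak{g}_\subS}\eta)_Q)$ is precisely the paper's observation that $\Psi(m)$ annihilates $W$ and that $\iota_0(\Psi(m))-\iota(m)\in D^\circ$, together with $P_D(\eta_Q)=(P_{\mathfrak{g}_\subS}\eta)_Q$), and part $(ii)$ likewise computes $\pi_0^\sharp(d\langle{\mathcal J}_0,\eta\rangle)=-(P_{\mathfrak{g}_\subS}\eta)_{W^\circ}$ via Lemma~\ref{L:kernel} and transports it back through $\Psi$ using Proposition~\ref{P:k-k0isomorphism} and item $(i)$, exactly as in the text. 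No gaps.
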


\begin{proof}
 $(i)$ Let us denote, as usual,  $\iota_0:W^\circ \to T^*Q$ and $\iota:\M \to T^*Q$ the corresponding inclusions and let $P_D: TQ \to D$ is the projection associated to decomposition $TQ= D \oplus W$. Let $v \in TQ$ and $m \in \M$, then since $\Psi(m) \in W^\circ$, we have that $\langle \iota_0( \Psi(m)) , v\rangle = \langle \iota_0(\Psi(m)), P_D(v) \rangle$.
Therefore, if $\eta \in \mathfrak{g}$, and using that $\eta_Q= T\tau_0 (\eta_{W^\circ})$ we obtain
 $$
\langle {\mathcal J}_0 \circ \Psi (m), \eta \rangle = \langle \iota_0( \Psi(m)), \eta_Q \rangle =  \langle \iota_0(\Psi(m)), P_D(\eta_Q) \rangle
$$

Now, by \eqref{Eq:MapPsi}, we have that  $\iota_0(\Psi(m)) - \iota(m) \in D^\circ \subset T^*Q$ since if $m = \kappa^\sharp(v_q)$ where $v_q \in D_q$ then
$(\iota_0(\Psi(m)) - \iota(m) )|_D = (\kappa_0^\sharp(v_q) - \kappa^\sharp(v_q))|_D = 0$ by \eqref{Eq:k0}.
Therefore,
$$\langle \iota_0(\Psi(m)), P_D(\eta_Q) \rangle = \langle \iota(m), P_D(\eta_Q) \rangle = \langle \iota(m), (P_{\mathfrak{g}_\subS} \eta )_Q \rangle =  \langle {\mathcal J}^{\nh} (m) , P_{\mathfrak{g}_\subS}(\eta) \rangle.
$$

$(ii)$  Since ${\mathcal J}_0 :W^\circ \to \mathfrak{g}^*$ is the restriction of the canonical momentum map on $T^*Q$ to $W^\circ$  then at each $\eta \in \mathfrak{g}$,  ${\bf i}_{\eta_{W^\circ}} \Omega_{\mbox{\tiny{$W^\circ$}}} = d \langle {\mathcal J}_0 , \eta\rangle$. By Lemma \ref{L:kernel} we have that $\textup{Ker}\, \Omega_{W^\circ} = \W_0$, and denoting by $P_{\subC_0}:T(W^\circ) \to \C_0$ the projection associated to decomposition $T(W^\circ) = \C_0 \oplus \W_0$ we obtain that ${\bf i}_{P_{\subC_0} (\eta_{W^\circ})}  \Omega_{W^\circ} |_{\C_0} = d \langle {\mathcal J}_0 , \eta\rangle |_{\C_0}$. Using \eqref{eq:defPiC0} we obtian that $\pi_0^\sharp( d \langle {\mathcal J}_0 , \eta\rangle) = - P_{\C_0}(\eta_{\mbox{\tiny{$W^{\!\circ}$}}})$. Prop.~\ref{P:k-k0isomorphism} asserts that, for each $\a \in T^*(W^\circ)$,  $T\Psi (\pi_{\mbox{\tiny ${\mathcal J}\! \mathcal{K}$}}^\sharp(\Psi^*(\a))) = \pi_0^\sharp(\a)$, and thus $\pi_{\mbox{\tiny ${\mathcal J}\! \mathcal{K}$}}^\sharp(d \langle {\mathcal J}_0 \circ \Psi , \eta \rangle ) = \pi_{\mbox{\tiny ${\mathcal J}\! \mathcal{K}$}}^\sharp( \Psi^*d \langle {\mathcal J}_0, \eta \rangle ) = - P_\subC (\eta_\subM)$ since $T\Psi P_\subC (\eta_\subM) = P_{\C_0}(\eta_{\mbox{\tiny{$W^{\!\circ}$}}})$. Finally, using that $P_\subC(\eta_\subM) = \left( P_{\mathfrak{g}_\subS}(\eta) \right)_\subM$ and item $(i)$, we get that  $\pi_{\mbox{\tiny ${\mathcal J}\! \mathcal{K}$}}^\sharp(d \langle {\mathcal J}^\nh , P_{\mathfrak{g}_\subS}(\eta) \rangle ) = -(P_{\mathfrak{g}_\subS} (\eta))_\subM$.

\end{proof}

\begin{corollary} \label{C:LeavesOfPiP}
The symplectic leaves of the Poisson manifold $(\M / G, \Lambda)$
are (the connected components of) ${({\mathcal
J}^{\emph{\nh}})}^{-1}(\iota_{\mathfrak{g}}^* \mu) / G_{\mu}$, for
$\mu \in \mathfrak{g}^*$, where
$\iota_{\mathfrak{g}}:\mathfrak{g}_\subS \to \mathfrak{g}_\subM$ is
the natural inclusion.
\end{corollary}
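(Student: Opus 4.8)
The plan is to combine Theorem \ref{T:PoissonPw}$(ii)$ with Theorem \ref{T:JnhIsMomenMapforJK}$(i)$. Theorem \ref{T:PoissonPw}$(ii)$ already tells us that the symplectic leaves of $(\M/G,\Lambda)$ are the connected components of the quotients $(\mathcal{J}_0\circ\Psi)^{-1}(\mu)/G_\mu$ for $\mu\in\mathfrak{g}^*$. So the whole task reduces to re-expressing the level sets $(\mathcal{J}_0\circ\Psi)^{-1}(\mu)$, which live in terms of the full dual $\mathfrak{g}^*$, in terms of level sets of the nonholonomic momentum map $\mathcal{J}^{\nh}:\M\to\mathfrak{g}_\subS^*$.

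First I would invoke Theorem \ref{T:JnhIsMomenMapforJK}$(i)$, which gives $\mathcal{J}_0\circ\Psi = P_{\mathfrak{g}_\subS}^*\circ\mathcal{J}^{\nh}$, where $P_{\mathfrak{g}_\subS}:\mathfrak{g}_\subM\to\mathfrak{g}_\subS$ is the projection from the splitting \eqref{eq:g=gs+gw} and $P_{\mathfrak{g}_\subS}^*$ its (fiberwise) dual. The key linear-algebra observation is that, since $P_{\mathfrak{g}_\subS}$ is a surjection onto $\mathfrak{g}_\subS$ with kernel the fixed complement $\mathfrak{g}_\subW$, its dual $P_{\mathfrak{g}_\subS}^*:\mathfrak{g}_\subS^*\to\mathfrak{g}^*$ is the injection given by precomposition with $P_{\mathfrak{g}_\subS}$; equivalently, composing with the natural inclusion $\iota_{\mathfrak{g}}:\mathfrak{g}_\subS\to\mathfrak{g}_\subM$ recovers the identity, $P_{\mathfrak{g}_\subS}\circ\iota_{\mathfrak{g}}=\mathrm{id}$, so $\iota_{\mathfrak{g}}^*\circ P_{\mathfrak{g}_\subS}^* = \mathrm{id}_{\mathfrak{g}_\subS^*}$. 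Hence for $m\in\M$, the condition $(\mathcal{J}_0\circ\Psi)(m)=\mu$ implies, applying $\iota_{\mathfrak{g}}^*$, that $\mathcal{J}^{\nh}(m)=\iota_{\mathfrak{g}}^*\mu$; conversely, if $\mathcal{J}^{\nh}(m)=\iota_{\mathfrak{g}}^*\mu'$ for some $\mu'$, then $(\mathcal{J}_0\circ\Psi)(m) = P_{\mathfrak{g}_\subS}^*(\iota_{\mathfrak{g}}^*\mu')$, which lies in the image of $P_{\mathfrak{g}_\subS}^*$. Thus as $\mu$ ranges over $\mathfrak{g}^*$, the fibers $(\mathcal{J}_0\circ\Psi)^{-1}(\mu)$ that are nonempty are exactly the fibers $(\mathcal{J}^{\nh})^{-1}(\iota_{\mathfrak{g}}^*\mu)$, so the two families of level sets coincide as subsets of $\M$.

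Finally I would check the isotropy-group bookkeeping: the quotient in Theorem \ref{T:PoissonPw}$(ii)$ is by $G_\mu$, the stabilizer of $\mu$ under the coadjoint action on $\mathfrak{g}^*$, and one must verify that this acts on the level set $(\mathcal{J}^{\nh})^{-1}(\iota_{\mathfrak{g}}^*\mu)$ consistently with the equivariance of $\mathcal{J}^{\nh}$ (which takes values in the bundle $\mathfrak{g}_\subS^*$, not a fixed vector space), so that the stated quotient makes sense and matches; this uses that $P_{\mathfrak{g}_\subS}$ intertwines the adjoint actions, which follows from the $G$-invariance of $\W$ and of $\S$. I expect this last equivariance/isotropy check to be the only genuinely delicate point — everything else is a direct translation through Theorems \ref{T:PoissonPw} and \ref{T:JnhIsMomenMapforJK} and the elementary fact that dualizing a split surjection gives a split injection. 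The conclusion is then precisely that the symplectic leaves of $(\M/G,\Lambda)$ are the connected components of $(\mathcal{J}^{\nh})^{-1}(\iota_{\mathfrak{g}}^*\mu)/G_\mu$.
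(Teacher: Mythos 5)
Your proposal follows essentially the same route as the paper's proof: Theorem \ref{T:PoissonPw}$(ii)$ gives the leaves as connected components of $({\mathcal J}_0\circ\Psi)^{-1}(\mu)/G_\mu$, and Theorem \ref{T:JnhIsMomenMapforJK}$(i)$ together with the identity $\iota_{\mathfrak{g}}^*\circ P_{\mathfrak{g}_\subS}^*=\textup{Id}_{\mathfrak{g}_\subS^*}$ converts these level sets into $({\mathcal J}^{\nh})^{-1}(\iota_{\mathfrak{g}}^*\mu)$, which is exactly the paper's two-step argument. The additional equivariance and isotropy-group bookkeeping you mention is a reasonable sanity check but goes beyond what the paper records and does not change the argument.
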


In the previous statement, we view $\mu \in \mathfrak{g}^*$ as  a
constant section of $\mathfrak{g}_\subM^*$, so that
$\iota_{\mathfrak{g}}^* \mu$ is a (not necessarily constant) section
of $\mathfrak{g}_\subS^*$.

\begin{proof} By
Theorem \ref{T:PoissonPw}, the leaves of $\Lambda$ are the connected
components of $({\mathcal J}_0 \circ \Psi)^{-1}(\mu) / G_{\mu}$, for
$\mu \in \mathfrak{g}^*$. Using Theorem
\ref{T:JnhIsMomenMapforJK}$(i)$ and the fact that
$\iota_{\mathfrak{g}}^* \circ P_{\mathfrak{g}_\subS}^* =
\textup{Id}_{\mathfrak{g}_\subS^*}$, we see that $({\mathcal J}_0
\circ \Psi)^{-1} (\mu) = {({\mathcal
J}^{\nh})}^{-1}(\iota_{\mathfrak{g}}^* \mu)$, for $\mu \in
\mathfrak{g}^*$.
\end{proof}

\begin{remark}
Following Remark~\ref{rem:novc}, we note that part $(i)$ of Theorem~\ref{T:JnhIsMomenMapforJK} still holds without the vertical-symmetry condition.
\end{remark}


\subsection{Reduced nonholonomic brackets as gauge transformations of Poisson structures} \label{Ss:DeformedReducedDyn}

As we already remarked in section~\ref{Ss:ViaJK}, the Poisson
bivector $\Lambda$ on $\M /G$ may not describe the reduced dynamics,
since $\pi_{\mbox{\tiny ${\mathcal J}\! \mathcal{K}$ }}$ does not
necessarily describe the dynamics defined by the hamiltonian
$\mathcal{H}_\subM$. Still, there are situations in which the
reduced dynamics is described by a bivector field obtained as a
gauge transformation of $\Lambda$:

\begin{theorem} \label{T:Reduced-Dyn}
Suppose that $B$ is a $G$-invariant 2-form on $\M$ satisfying
\eqref{Eq:Bsection} and defining a dynamical gauge transformation of
$\pi_{\emph\nh}$ (as in Def.~\ref{D:dynamicalGauge}). If $B + \langle
{\mathcal J}, \mathcal{K}_\subW \rangle$ is basic with respect to
the orbit projection $\rho: \M \to \M /G$, then the reduced dynamics
is described by a bivector field $\pi_{\emph{\red}}^\B$ which is a
gauge transformation of $\Lambda$ by the 2-form $\mathcal{B}$, where
$\mathcal{B}$ is such that $\rho^*\mathcal{B} = B + \langle
{\mathcal J}, \mathcal{K}_\subW \rangle$. 

Assuming that $\W$ verifies the vertical-symmetry condition \eqref{Eq:VerticalSymmetries}, we conclude that $\pi_{\emph\red}^\B$ is a twisted Poisson structure that is gauge related to the Poisson bivector $\Lambda$.

\end{theorem}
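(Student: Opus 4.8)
The plan is to deduce the statement from Proposition~\ref{P:diagramBbasic} by realizing $\pi_\B$ as a gauge transformation of the bivector $\pi_{\mbox{\tiny ${\mathcal J}\! \mathcal{K}$}}$ — whose reduction is, by definition \eqref{Eq:PoissonW}, the Poisson bivector $\Lambda$ — by the \emph{basic} $2$-form $B+\langle{\mathcal J},\mathcal{K}_\subW\rangle$, and then pushing this relation down to $\M/G$. The dynamical character of $B$ will then give that the reduced dynamics is described by $\pi_\red^\B$, and the Poissonness of $\Lambda$ together with the behaviour of gauge transformations on genuine Poisson structures will give the twisted-Poisson conclusion.

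First I would use the description of gauge transformations in terms of restricted $2$-sections (Remark~\ref{R:GaugeT}$(ii)$). By construction $\pi_\nh$ is the bivector determined by $(\C,\Omega_\subM|_\C)$, so $\pi_\B$ is the one determined by $(\C,(\Omega_\subM-B)|_\C)$, while $\pi_{\mbox{\tiny ${\mathcal J}\! \mathcal{K}$}}$ is the one determined by $(\C,\Omega_{\mbox{\tiny ${\mathcal J}\! \mathcal{K}$}}|_\C)$ with $\Omega_{\mbox{\tiny ${\mathcal J}\! \mathcal{K}$}}=\Omega_\subM+\langle{\mathcal J},\mathcal{K}_\subW\rangle$. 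Putting $\bar B:=B+\langle{\mathcal J},\mathcal{K}_\subW\rangle$ one has $\Omega_{\mbox{\tiny ${\mathcal J}\! \mathcal{K}$}}-\bar B=\Omega_\subM-B$, so $\pi_\B$ is exactly the gauge transformation of $\pi_{\mbox{\tiny ${\mathcal J}\! \mathcal{K}$}}$ by $\bar B$; the required invertibility of $\textup{Id}+\bar B^\flat\circ\pi_{\mbox{\tiny ${\mathcal J}\! \mathcal{K}$}}^\sharp$ is automatic since it is equivalent to $(\Omega_\subM-B)|_\C$ being nondegenerate, which already holds because $\pi_\B$ is a genuine bivector with characteristic distribution $\C$. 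Both $\pi_{\mbox{\tiny ${\mathcal J}\! \mathcal{K}$}}$ and $\pi_\B$ are $G$-invariant — the former by Proposition~\ref{P:poisson1} (equivalently Proposition~\ref{L:JKbasic}$(ii)$), the latter because $\pi_\nh$ and $B$ are — hence they reduce to $\Lambda=\pi^{\mbox{\tiny ${\mathcal J}\! \mathcal{K}$}}_\red$ and $\pi_\red^\B$ on $\M/G$. Since $\bar B$ is basic by hypothesis, $\bar B=\rho^*\mathcal{B}$, Proposition~\ref{P:diagramBbasic} applied to $\M$ with the invariant bivectors $\pi_{\mbox{\tiny ${\mathcal J}\! \mathcal{K}$}}$, $\pi_\B$ and the basic form $\bar B$ yields that $\Lambda$ and $\pi_\red^\B$ are gauge related, precisely by the $2$-form $\mathcal{B}$.

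It then remains to identify the dynamics and extract the twisted-Poisson statement. Since $B$ defines a \emph{dynamical} gauge transformation of $\pi_\nh$ (Def.~\ref{D:dynamicalGauge}), we have $\pi_\B^\sharp(d\Ham_\subM)=\pi_\nh^\sharp(d\Ham_\subM)=-X_\nh$; applying $T\rho$ and using $\Ham_\subM=\rho^*\Ham_\red$ together with the reduction relation \eqref{Eq:ReducedBivector} gives $(\pi_\red^\B)^\sharp(d\Ham_\red)=-X_\red^\nh$, i.e.\ $\pi_\red^\B$ describes the reduced nonholonomic dynamics. Finally, under the vertical-symmetry condition \eqref{Eq:VerticalSymmetries} the bivector $\Lambda$ is a genuine Poisson structure (Proposition~\ref{P:poisson1}), and since $\pi_\red^\B$ is obtained from $\Lambda$ by the gauge transformation by $\mathcal{B}$, it is $\phi$-twisted Poisson with the (exact, hence closed) $3$-form $\phi=-d\mathcal{B}$, by the standard behaviour of gauge transformations on (twisted) Poisson structures recalled in Section~\ref{S:Jacobiator-Symmetries} — consistently with Corollary~\ref{C:TwistedVertSymm}, since $dB+d\langle{\mathcal J},\mathcal{K}_\subW\rangle=\rho^*(d\mathcal{B})$ is basic. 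Thus the bivector $\pi_\red^\B$ describing the reduced nonholonomic dynamics (which coincides with $\pi_\red^\nh$ whenever $\langle{\mathcal J},\mathcal{K}_\subW\rangle$ is itself basic, e.g.\ when one may take $B=0$) is twisted Poisson and gauge related to $\Lambda$. The only step demanding genuine care is the first identification — verifying that $\pi_\B$ really is the gauge transformation of $\pi_{\mbox{\tiny ${\mathcal J}\! \mathcal{K}$}}$ by the \emph{basic} form $\bar B$, with the correct invertibility — after which the reduction statements are formal consequences of Propositions~\ref{P:diagramBbasic} and \ref{P:poisson1}.
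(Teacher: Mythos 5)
Your proposal is correct and follows essentially the paper's own route: you realize $\pi_\B$ as the gauge transformation of $\pi_{\mbox{\tiny ${\mathcal J}\! \mathcal{K}$}}$ by the basic 2-form $B+\langle{\mathcal J},\mathcal{K}_\subW\rangle$, apply Proposition~\ref{P:diagramBbasic} to conclude that $\pi_\red^\B$ and $\Lambda$ are gauge related by $\mathcal{B}$, and use the dynamical character of $B$ to see that $\pi_\red^\B$ describes the reduced dynamics, exactly as in the paper (whose proof is just a terse citation of these facts, the gauge relation between $\pi_{\mbox{\tiny ${\mathcal J}\! \mathcal{K}$}}$ and $\pi_\B$ being displayed in the accompanying diagram rather than verified). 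The only immaterial difference is the last step: you obtain twistedness from the Poissonness of $\Lambda$ (Proposition~\ref{P:poisson1}) together with the rule that a gauge transformation of a Poisson structure by $\mathcal{B}$ is $(-d\mathcal{B})$-twisted, whereas the paper cites Corollary~\ref{C:TwistedVertSymm}; both give the same twisting 3-form $-d\mathcal{B}$.
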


The various bivectors and gauge transformations in
Thm.~\ref{T:Reduced-Dyn} are represented by the following diagram:
\begin{equation}
\xymatrix{ (\M, \pi_{\mbox{\tiny ${\mathcal J}\! \mathcal{K}$   }} )
\ar[rr]^{B+ \langle {\mathcal J}, \mathcal{K}_\subW \rangle}
\ar[d]^{\rho} &&
 (\M, \pi_\B) \ar[r]^{-B} \ar[d]^\rho & (\M, \pi_{\nh}) \\
(\M/G, \Lambda) \ar[rr]^{\mathcal B} && (\M/G,  \pi_{\red}^\B) & \
} \end{equation}

\begin{proof}

Proposition \ref{P:diagramBbasic}
asserts that the bivector $\pi^\B_\red$ is gauge-related to
$\Lambda$ by $\mathcal{B}$. If $\W$ verifies the vertical-symmetry condition \eqref{Eq:VerticalSymmetries}, then by Corollary \ref{C:TwistedVertSymm} bivector  $\pi_\red^\B$ is
$(-d{\mathcal B})$-twisted Poisson, where ${\mathcal B}$ is the 2-form on $\M/G$
such that $\rho^*{\mathcal B}= B +  \langle {\mathcal J},
\mathcal{K}_\subW \rangle$. 
\end{proof}

The procedure in Thm.~\ref{T:Reduced-Dyn} will be illustrated in
section~\ref{S:Examples} (see section~\ref{Ex:RigidBody}).

\begin{corollary} \label{C:ConservedQuantity}
Under the assumptions of Theorem~\ref{T:Reduced-Dyn}, for each $\eta
\in \mathfrak{g}$, the function $\langle {\mathcal J}^{\emph{\nh}},
P_{\mathfrak{g}_\subS}(\eta)\rangle \in C^\infty(\M)$ is conserved
by the flow of $X_{\emph\nh}$.
\end{corollary}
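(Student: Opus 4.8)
The plan is to prove conservation by establishing directly that $X_\nh(f)=0$ for the function $f:=\langle{\mathcal J}^{\nh},P_{\mathfrak{g}_\subS}(\eta)\rangle\in C^\infty(\M)$, combining Theorem~\ref{T:JnhIsMomenMapforJK}$(ii)$ with the gauge relation between $\pi_{\mbox{\tiny ${\mathcal J}\! \mathcal{K}$}}$ and $\pi_\B$ that is built into the hypotheses of Theorem~\ref{T:Reduced-Dyn} (recall the vertical-symmetry condition is in force throughout this section).

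First I would fix $\eta\in\mathfrak{g}$ and set $\zeta:=P_{\mathfrak{g}_\subS}(\eta)\in\Gamma(\mathfrak{g}_\subM)$, a (generally non-constant) section with values in $\mathfrak{g}_\subS$, so that $f=\langle{\mathcal J}^{\nh},\zeta\rangle$ and $\zeta_\subM$ is a vertical vector field on $\M$ (being, at each point, an infinitesimal generator of the $G$-action). Theorem~\ref{T:JnhIsMomenMapforJK}$(ii)$ then reads $\pi_{\mbox{\tiny ${\mathcal J}\! \mathcal{K}$}}^\sharp(df)=-\zeta_\subM$.

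Next I would invoke the fact, recorded in the diagram following Theorem~\ref{T:Reduced-Dyn}, that $\pi_\B$ is the gauge transformation of $\pi_{\mbox{\tiny ${\mathcal J}\! \mathcal{K}$}}$ by the $2$-form $B':=B+\langle{\mathcal J},\mathcal{K}_\subW\rangle$; this holds because gauge transformations of bivectors compose additively on $2$-forms, while $\pi_{\mbox{\tiny ${\mathcal J}\! \mathcal{K}$}}$ and $\pi_\B$ are the gauge transformations of $\pi_\nh$ by $-\langle{\mathcal J},\mathcal{K}_\subW\rangle$ and by $B$, respectively. Applying \eqref{Eq:IntroToGauge} in the form $\pi_\B^\sharp\circ(\textup{Id}+(B')^\flat\circ\pi_{\mbox{\tiny ${\mathcal J}\! \mathcal{K}$}}^\sharp)=\pi_{\mbox{\tiny ${\mathcal J}\! \mathcal{K}$}}^\sharp$ to $df$ and substituting the previous step gives $\pi_\B^\sharp(df-{\bf i}_{\zeta_\subM}B')=-\zeta_\subM$. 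Since by hypothesis $B'=B+\langle{\mathcal J},\mathcal{K}_\subW\rangle$ is basic with respect to $\rho:\M\to\M/G$ and $\zeta_\subM$ is vertical, we have ${\bf i}_{\zeta_\subM}B'=0$, and therefore $\pi_\B^\sharp(df)=-\zeta_\subM$.

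Finally, because $B$ defines a \emph{dynamical} gauge transformation of $\pi_\nh$ (Def.~\ref{D:dynamicalGauge}), the bivector $\pi_\B$ describes the dynamics, i.e.\ $X_\nh=-\pi_\B^\sharp(d\Ham_\subM)$. Hence
\[
X_\nh(f)=df(X_\nh)=-\pi_\B(d\Ham_\subM,df)=d\Ham_\subM\big(\pi_\B^\sharp(df)\big)=-d\Ham_\subM(\zeta_\subM)=-\zeta_\subM(\Ham_\subM)=0,
\]
the last equality because $\Ham_\subM$ is $G$-invariant and $\zeta_\subM$ is an infinitesimal generator of the $G$-action; thus $f$ is conserved along the flow of $X_\nh$. (Equivalently, one can run this last step in bracket form: $\zeta_\subM=-\pi_\B^\sharp(df)$ is the hamiltonian vector field of $f$ for $\{\cdot,\cdot\}_\B$, so $-X_\nh(f)=\{\Ham_\subM,f\}_\B=\zeta_\subM(\Ham_\subM)=0$.) I expect no genuine obstacle; the only points deserving care are applying Theorem~\ref{T:JnhIsMomenMapforJK}$(ii)$ to the correct non-constant section $\zeta=P_{\mathfrak{g}_\subS}(\eta)$ and observing that ``$B'$ basic'' really kills ${\bf i}_{\zeta_\subM}B'$, which is immediate since $\zeta_\subM$ lies in the vertical distribution.
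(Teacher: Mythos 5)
Your proposal is correct and follows essentially the same route as the paper's proof: apply Theorem~\ref{T:JnhIsMomenMapforJK}$(ii)$ to $\xi=P_{\mathfrak{g}_\subS}(\eta)$, use that $\pi_\B$ is the gauge transformation of $\pi_{\mbox{\tiny ${\mathcal J}\!\mathcal{K}$}}$ by $B+\langle{\mathcal J},\mathcal{K}_\subW\rangle$ together with basicness of that 2-form to get $\pi_\B^\sharp(df)=-\xi_\subM$, and conclude via the dynamical gauge condition and $G$-invariance of $\Ham_\subM$. The only difference is that you spell out the additivity of gauge transformations, which the paper leaves implicit in the diagram following Theorem~\ref{T:Reduced-Dyn}.
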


\begin{proof}
Let $\xi=P_{\mathfrak{g}_\subS}(\eta)$, for $\eta \in \mathfrak{g}$.
Theorem \ref{T:JnhIsMomenMapforJK}$(ii)$ asserts that
$\pi_{\mbox{\tiny ${\mathcal J}\! \mathcal{K}$}}^\sharp(d \langle
{\mathcal J}^\nh, \xi \rangle ) = - \xi_\subM$. Since $\pi_\B$ is
the gauge transformation of $\pi_{\mbox{\tiny$\mathcal{J}\!\mathcal{K}$ }}$ by $B + \langle {\mathcal J}, \mathcal{K}_\subW
\rangle$, we have $\pi_{\mbox{\tiny $\mathcal{J}\! \mathcal{K}$
}}^\sharp= \pi^\sharp_\B \circ (\textup{Id} + (B + \langle {\mathcal
J}, \mathcal{K}_\subW \rangle)^\flat \circ \pi_{\mbox{\tiny
$\mathcal{J}\! \mathcal{K}$ }}^\sharp ).$ Therefore,
$$ - \xi_\subM = \pi_{\mbox{\tiny ${\mathcal J}\! \mathcal{K}$}}^\sharp(d \langle {\mathcal J}^\nh, \xi \rangle ) =
 \pi^\sharp_\B (d \langle {\mathcal J}^\nh, \xi \rangle -
{\bf i}_{\xi_\subM}(B + \langle {\mathcal J}, \mathcal{K}_\subW
\rangle)) = \pi_\B^\sharp(d \langle {\mathcal J}^\nh, \xi
\rangle),$$ since $B+  \langle {\mathcal J}, \mathcal{K}_\subW
\rangle$ is basic. But on the other hand if $B$ defines a dynamical
gauge then $\pi_\B^\sharp(d \Ham_\subM) = - X_\nh.$ Finally, we
obtain that ${\bf i}_{X_\nh}(d \langle {\mathcal J}^\nh, \xi
\rangle) = - d\Ham_\subM (\xi_\subM) = 0$ because of the
$G$-invariance $\Ham_\subM$.
\end{proof}

We stress that the fact that the reduced dynamics is described by a
bivector field $\pi_\red^\B$ gauge equivalent to the Poisson
structure $\Lambda$ has several consequences: for example,
$\pi_\red^\B$ is a twisted Poisson structure, so its charctaeristic
distribution is integrable -- in fact, its almost symplectic leaves
coincide with the leaves of $\Lambda$, and the Casimirs of $\Lambda$
are conserved quantities of the nonholonomic system. In particular,
the leaves of $\pi_\red^\B$ are described by Thm.~\ref{T:PoissonPw}
and Cor.~\ref{C:LeavesOfPiP}. The only difference is on the leafwise
2-forms: while for $\Lambda$ the symplectic form on a leaf
$({\mathcal J}_0 \circ \Psi)^{-1} (\mu) / G_\mu = {({\mathcal
J}^{\nh})}^{-1}(\iota_{\mathfrak{g}}^* \mu) / G_{\mu}$ is
$\Omega_\mu:= (\bar{\iota}_0 \circ \Psi_\red)^*\omega_\mu$ (see
Thm.~\ref{T:PoissonPw}$(ii)$), for $\pi_\red^\B$ the almost
symplectic form on the same leaf is
$$
\Omega_\mu - \mathcal{B}_\mu,
$$
where $\mathcal{B}_\mu$ is the pullback of the 2-form $\mathcal{B}$
on $\M / G$ (in Thm.~\ref{T:Reduced-Dyn}) to the leaf.

We derive some conclusions about the nonholonomic particle based on
the results of this section.

\begin{example}[Nonholonomic particle]  \label{Ex:NHParticleFinal}

Recall from Examples~\ref{Ex:NHParticle}, \ref{Ex:NHParticle2}, \ref{Ex:NHParticle3}, \ref{Ex:NHParticle4} that the reduced nonholonomic bivector field on $\M/G$ is given by
$$\pi_\red^\nh= \frac{\partial}{\partial y} \wedge \frac{\partial}{\partial
p_y} - \frac{yp_x}{1+y^2} \frac{\partial}{\partial p_x} \wedge
\frac{\partial}{\partial p_y}$$  while the Poisson bivector field $\Lambda$ is written in \eqref{Ex:NHPLambda}.   
As we observed in Example~4.15 the 2-form $\langle {\mathcal J},
\mathcal{K}_\subW\rangle = y p_x dx\wedge dy$ relating $\pi_\nh$ and $\pi_{\mbox{\tiny ${\mathcal J}\! \mathcal{K}$   }}$ is not basic and thus $\pi_\red^\nh$ and $\Lambda$ are not necessarly gauge related (see Prop. \ref{P:diagramBbasic}). In fact,  for
$(a,b) \in \mathfrak{g} \simeq \R^2$ and using \eqref{Ex:NHPart-LieSubspaces},   we see that the function $\langle {\mathcal J}^\nh,
P_{\mathfrak{g}_\subS}(a,b)\rangle = \langle {\mathcal J}^\nh,
a(1,y)\rangle = (1+y^2)p_x$ determines the leaves of $\Lambda$ (Corollary
\ref{C:LeavesOfPiP}) since  $f=(1+y^2)p_x$  is a basic function on $\M$.  In other words, $f$ is a Casimir of $\Lambda$. 
However, $f$ is not a Casimir of $\pi_\red^\nh$ since it is not conserved by the motion (see \cite[Sec.5.6]{BlochBook}). Therefore, $\pi_\red^\B$ is not gauge related with $\Lambda$.

On the other hand, if there was a dynamical gauge $B$
satisfying the conditions in Theorem \ref{T:Reduced-Dyn} then $B+  \langle {\mathcal J},
\mathcal{K}_\subW \rangle$ would be basic and ${\bf i}_{X_\nh} B=0$. Hence,  ${\bf i}_{X_\nh} \langle {\mathcal J},
\mathcal{K}_\subW \rangle$ would be semi-basic with respect to the
orbit projection $\rho:\M \to \M/G$. However, using equation
\eqref{Eq:JK-NHParticle} it can be checked that ${\bf i}_{X_\nh}
\langle {\mathcal J}, \mathcal{K}_\subW \rangle$  is not semi-basic.
As a final attempt to obtain a reduced bivector field describing the dynamics that is gauge related to $\Lambda$ we may
try to change the vertical complement $W$. Note, however, that in general the
vertical-symmetry condition will no longer be satisfied.


\end{example}


\section{Examples} \label{S:Examples}

So far we have illustrated our results using the nonholonomic
particle (see Examples~\ref{Ex:NHParticle}, \ref{Ex:NHParticle2}, \ref{Ex:NHParticle3}, \ref{Ex:NHParticle4} and \ref{Ex:NHParticleFinal} and Chaplygin systems (see
Example~\ref{Ex:ChapSystems} and Example~\ref{Ex:Ch}). This section presents other
mechanical examples, all of which exhibit symmetries which are not
Chaplygin, so their reduced dynamics are described by bivector
fields (not 2-forms). In all cases, we find suitable choices of
complements $\W$ and dynamical gauge transformations so that the
reduced dynamics is described by twisted Poisson structures (as e.g.
in Corollary~\ref{C:Twisted} and Theorem~\ref{T:Reduced-Dyn}).

More concretely, in Section~\ref{Ex:VerticalDisk} we verify that the
reduced dynamics of the vertical rolling disk is directly described
by the Poisson structure $\Lambda$. In Section \ref{Ex:skateboard}
we consider the Snakeboard, and show that the 2-form $\langle
{\mathcal J}, \mathcal{K}_\subW \rangle$ in this case is basic and
induces a gauge transformation relating the reduced bivector field
$\pi_\red^\nh$ to the Poisson bivector $\Lambda$. Section
\ref{Ex:RigidBody} illustrates the usage of a dynamical gauge
transformation in Theorem \ref{T:Reduced-Dyn} for the Chaplygin
ball, also leading to a description of the reduced dynamics by a
bracket given by a gauge transformation of the Poisson bracket
$\Lambda$. Finally, in Section \ref{Ex:Cylinder}, by means of a gauge transformation we obtain a Poisson bracket describing the dynamics of a homogeneous sphere rolling on a cylinder.


\subsection{The vertical rolling disk} \label{Ex:VerticalDisk}

Consider a vertical disk  of radius $R$ rolling on a plane without
sliding, see e.g. \cite[Sec.~1.4 and~5.6]{BlochBook}.  Let $(x,y) \in
\R^2$ represent the contact point between the plane and the disk,
$\varphi$ and $\psi$ represent the rotation angle of a point in the
disk with respect to the vertical axis and the orientation angle of
the disk, respectively. 
The nonsliding constraints are given by the 1-forms
$\epsilon_x= dx - R \cos\psi d \varphi$ and $\epsilon_y=dy-R\sin\psi d \varphi.$
The action of the (direct product) Lie group $G=\R^2 \times S^1$ on $Q$ 
such that $V= \textup{span}
\{\frac{\partial}{\partial x},\frac{\partial}{\partial y},
\frac{\partial}{\partial \varphi} \}$ is a symmetry of the system. 
We choose the complement
$W \subset V$ to be
$ W= \textup{span} \{\frac{\partial}{\partial x} \, , \, \frac{\partial}{\partial y} \}.$ 
Note that $W$ verifies the vertical-symmetry condition
\eqref{Eq:VerticalSymm-Q} since $\mathfrak{g}_{\mbox{\tiny{$W$}}} =
\textup{span} \{(1,0,0), (0,1,0)\}$.
As it is known $\langle {\mathcal J}, \mathcal{K}_\subW \rangle =0$. 


%
%
%

The submanifold $W^\circ$ of $T^*Q$ is locally
expressed as $W^\circ =\{\tilde p_x =\tilde p_y = 0 \}$, where $(\tilde p_x,\tilde p_y,\tilde p_\varphi,\tilde p_\psi)$ are the coordinates associated to the basis  $\{
\epsilon_x, \epsilon_y , d\varphi,  d\psi \}$, and thus
the presymplectic 2-form $\Omega_{\mbox{\tiny{$W^{\!\circ}$}}}$ is
given by $\Omega_{\mbox{\tiny{$W^{\!\circ}$}}} =  d\psi \wedge
d\tilde p_\psi + d\varphi \wedge d\tilde p_\varphi.$ Using
Prop.~\ref{P:W0-Poisson2}, on the reduced space $W^\circ\!/G$
we have the Poisson structure $\Lambda_0= \frac{\partial}{\partial
\psi} \wedge \frac{\partial}{\partial \tilde p_\psi}$ which is computed using  \eqref{Eq:PresymBivector}. To compute
the Poisson bivector $\Lambda$ on $\M/G$ we consider the isomorphism
$\Psi:\M \to W^\circ$ given by $\Psi(q; \iota^*\tilde p_x,
\iota^*\tilde p_y, \tilde p_\varphi, \tilde p_\psi) = (q; 0,0,
\tilde p_\varphi, \tilde p_\psi).$ Thus, by Proposition
\ref{P:k-k0isomorphism},
$$
\Lambda = \frac{\partial}{\partial \psi} \wedge
\frac{\partial}{\partial \tilde p_\psi}.
$$

On the other hand, at each $m \in \M$,  $\mathfrak{g}_\subS = \textup{span}
\{( R \cos \psi, R \sin \psi ,1) \}$ (see \eqref{Def:gS}). For
$\eta=(a,b,c) \in \mathfrak{g} \simeq \R^3$, we have that  $P_{\mathfrak{g}_\subS}(\eta)= c( R \cos \psi, R \sin \psi ,1)\rangle$ and thus the function $\langle {\mathcal J}^\nh, P_{\mathfrak{g}_\subS}(\eta)\rangle = c\, \tilde p_\varphi$
is conserved during the motion by Corollary
\ref{C:ConservedQuantity} (in agreement with \cite{BlochBook}). 
Observe that $\tilde p_\varphi$ is not only a conserved quantity but also it is also a Casimir of $\Lambda$ (hence defining its leaves).


\subsection{The snakeboard} \label{Ex:skateboard}

The snakeboard is a variation of the skateboard which is
modeled on the configuration manifold $Q= {\bf SE} (2) \times S^1
\times S^1$. The coordinates $(x,y,\theta) \in {\bf SE}(2)$
represent the position and orientation of the center of the board,
$\phi \in S^1$ is the angle of the momentum wheel relative to the
board, and $\psi \in S^1$ is the angle of the back and front wheels
(we are assuming that they coincide). 
We are going to work with the simplified version of the snakeboard considered in \cite[Sec.~2.5]{MarsdenKoon2} and \cite[Sec.~7.2]{Oscar} where, for $\phi \neq 0$, the non-sliding constraints are given by the annihilators of the 1-forms
$$
\epsilon_x= dx+ r\cot \phi \cos \theta d\theta \qquad \mbox{and}
\qquad \epsilon_y = dy+ r\cot \phi \sin \theta d\theta.
$$

The (direct product) Lie group $G=\R^2 \times S^1$ is a symmetry of
the system (see \cite{Oscar}) 
where the action on $Q$ is such that $\displaystyle{V =
\textup{span} \left\{
\frac{\partial}{\partial x} , \frac{\partial}{\partial y} , \frac{\partial}{\partial \psi}
\right\}}$. Consider the following basis of $D$ and the vertical
complement $W$ given by
$$
D=\textup{span} \left\{\frac{\partial}{\partial \theta} - r \cos
\theta \cot \phi \frac{\partial}{\partial x} - r \sin \theta \cot
\phi \frac{\partial}{\partial y},\frac{\partial}{\partial \phi}  ,
\frac{\partial}{\partial \psi} \right\} \qquad \mbox{and} \qquad
W=\textup{span}\left\{\frac{\partial}{\partial x},
\frac{\partial}{\partial y} \right\}.
$$
In this case the vertical-symmetry condition
\eqref{Eq:VerticalSymm-Q} is satisfied for
$\mathfrak{g}_{\mbox{\tiny{$W$}}} = \textup{span}
\{(1,0,0),(0,1,0)\}$. Hence, it is sufficient to compute the 2-form
$\langle {\mathcal J}, {\mathcal K}_\subW \rangle$, for $\W$ as in
\eqref{Eq:W}. The lagrangian function is just the kinetic energy
metric,
$$
\Lag = \frac{1}{2}\left( m(\dot x^2 +\dot y^2 + r^2 \dot \theta^2) +
2{\bf J} \dot \theta \dot \psi + 2{\bf J}_1 \dot \phi^2 + {\bf J}
\dot \psi^2 \right).$$ For $q=(x,y,\theta,\phi,\psi) \in Q$, let us
denote $( \tilde{p}_x, \tilde{p}_y, \tilde p_\theta, \tilde p_\phi,
\tilde p_\psi)$ the coordinates on $T_q^*Q$ associated to the basis
$\{\epsilon_x, \epsilon_y, d\theta, d\phi,d\psi  \}$. Therefore, the
manifold $\M = \L(D)$ is given in coordinates by
$$
\M = \left \{\tilde p_x= -\frac{m r \cos\theta \sin ^2 \phi \cot
\phi}{mr^2 - {\bf J} \sin^2\phi}  (\tilde p_\theta - \tilde p_\psi),
\ \ \ \ \tilde p_y = -\frac{mr \sin\theta  \sin ^2 \phi \cot
\phi}{mr^2 - {\bf J} \sin^2\phi}  (\tilde p_\theta - \tilde p_\psi)
\right \}. $$

Since $\mathfrak{g}_\subS = \textup{span} \{(0,0,1)\}$, the momentum
map adapted to this basis is, for $m \in \M$, ${\mathcal J}(m) =
(\iota^*\tilde p_x, \iota^*\tilde p_y,\tilde p_\psi),$ where
$\iota:\M \to T^*Q$ is the inclusion. The $\W$-curvature is given by
$\mathcal{K}_\subW = (d\epsilon_x, d\epsilon_y,0)$,  and thus the
2-form $\langle {\mathcal J}, \mathcal{K}_\subW \rangle$ is
$$
\langle {\mathcal J}, \mathcal{K}_\subW \rangle =  - \frac{mr^2 \cot
\phi}{mr^2 - {\bf J} \sin^2\phi} (\tilde p_\theta - \tilde p_\psi)
d\theta \wedge d\phi,
$$ 
which is basic with respect to the orbit
projection $\rho: \M \to \M/G$.  By Corollary
\ref{C:TwistedVertSymm}, the reduced bivector field $\pi_\red^\nh$
is twisted Poisson, and hence has an integrable characteristic
distribution.  
Observe that in \cite{Oscar} this example is studied doing, first, a Chaplygin reduction by the subgroup $G_\subW = \R^2$ and then it is shown that $\Omega_{\mbox{\tiny{$Q\!/\!G$}}} - \langle {\mathcal J}, \mathcal{K}_\subW \rangle$ admits a conformal factor after an almost symplectic reduction. In this example we emphasize the fact that it is possible to know whether the reduction of $\pi_\red^\nh$ is twisted by computing the formulas of the Jacobiator. We also note that, by Theorem \ref{T:Reduced-Dyn},
$\pi_\red^\nh$ is gauge related with the Poisson bivector $\Lambda$
on $\M/G$ by the 2-form $\langle {\mathcal J}, \mathcal{K}_\subW
\rangle_\red$ (here we are taking $B\equiv 0$). 
The submanifold $W^\circ \subset T^*Q$ is described in
coordinates by $\tilde{p}_x=\tilde{p}_y=0$. Using
\eqref{eq:OmegaCoord} we see that
$\Omega_{\mbox{\tiny{$W^{\!\circ}$}}} = d\theta \wedge d\tilde
p_\theta + d\phi \wedge d\tilde p_\phi + d\psi \wedge d\tilde
p_\psi$. Its reduction by $G$ gives the Poisson bivector $\Lambda_0$
on $W^\circ\!/G$ (Prop.~\ref{P:W0isPoisson}) described, in local
coordinates $(\theta,\phi, \tilde p_\theta, \tilde p_\phi, \tilde
p_\psi)$, by
$$
\Lambda_0 = \frac{\partial}{\partial \theta} \wedge \frac{\partial}{\partial \tilde{p}_\theta} +  \frac{\partial}{\partial \phi} \wedge \frac{\partial}{\partial \tilde{p}_\phi},
$$
for which $\tilde{p}_\psi$ is a Casimir. The Poisson bivector $\Lambda$ is
isomorphic to $\Lambda_0$ via the map $\Psi\red$ defined in \eqref{D:MetricsDiag} and thus $\Psi^*_\red\tilde
p_\psi=\tilde
p_\psi$ is also a Casimir of $\Lambda$ (Theorem \ref{T:PoissonPw}).
Since $\pi_\red^\nh$ and $\Lambda$ are gauge related, the leaves of
$\pi_\red^\nh$  are given by the level sets of  $\tilde p_\psi$ and
thus the function $\tilde p_\psi$ is a conserved quantity of the
nonholonomic system, in agreement with the fact that the system has horizontal symmetries. 
%
%


\subsection{Rigid body with nonholonomic constraints} \label{Ex:RigidBody}

We now consider the motion of an inhomogeneous sphere whose center
of mass coincides with its geometric center that rolls without
slipping on the plane.

Following \cite[Sec.~5]{PL2011}, we can see this example as a
particular case of a rigid body with generalized rolling constraints
which describes the motion of a rigid body in space that evolves
under its own inertia and it is subject to constraints that relate
the linear velocity of the center of mass with the the angular
velocity of the body. If ${\bf x} \in \R^3$ represents the center of
mass and $\vecom$ the angular velocity of the body with respect to
an inertial frame, then the constraints are written as
\begin{equation} \label{Ex:RB-constraints}
\dot {\bf x} = r A \vecom,
\end{equation}
where  $r$ denotes the radius of the sphere and $A$ is a given $3 \times 3$ matrix that, following \cite{PL2011}, satisfies one of the (simplified) conditions:
\begin{equation} \label{Ex:RB-Matrices}
A\equiv 0, \quad A=\left(\begin{array}{ccc} 0&0&0\\
0&0&0\\ 0&0&1 \end{array} \right), \quad A=\left(\begin{array}{ccc}
0&1&0\\ -1&0&0\\ 0&0&0 \end{array}\right),
\quad \mbox{or} \quad A=\left(\begin{array}{ccc} 0&1&0\\
-1&0&0\\ 0&0&1 \end{array}\right). \end{equation}

We can interpret the motion determined by \eqref{Ex:RB-constraints}
for each matrix $A$: if $A\equiv 0$, \eqref{Ex:RB-constraints}
represents the motion of a free rigid body; If $A$ has rank 1 then
the body is allowed to move along the vertical axis satisfying the
constraints; When the matrix $A$ has rank 2,
\eqref{Ex:RB-constraints} determines a ball rolling on a plane
without sliding (the Chaplygin ball), while if the rank of $A$
equals 3 then any two points on the configuration space can be
joined by a curve satisfying the constraints.

The configuration manifold is $Q= \SO(3) \times \R^3$ with
coordinates $(g, {\bf x})$, where $g$ is an orthogonal matrix that
specifies the orientation of the ball by relating two orthogonal
frames, one attached to the body and one that is fixed in space, and
${\bf x}$ represents the position of the center of mass in space. We
will assume that the body frame has its origin at the center of mass
and is aligned with the principal axes of inertia of the body. These
frames define the so-called space and body coordinates,
respectively.

Given a motion $(g(t); x(t))  \in Q$, the angular velocity vector in
space $\vecom$ and body coordinates $\vecOm$ are respectively given
by
$$\hat{\vecom}(t) = \dot g(t)g^{-1}(t)\qquad \hat{\vecOm}(t) = g^{-1}(t) \dot g(t),$$
where we use the usual identification $\hat{ \ }:\mathfrak{so}(3)
\to \R^3$, and satisfy ${\vecom} = g \vecOm$. Therefore, the
constraints can also be written as $\dot {\bf x} = rAg \vecOm.$

Let $\vecL$ and $\vecR$ be the left and right Maurer-Cartan forms,
respectively, that can be thought as $\R^3$-valued 1-forms by means
of the identification of the Lie algebra $\mathfrak{so}(3)$ with
$\R^3$ by the hat map. Thus for a tangent vector $v_g \in T_g
\SO(3)$  we have $\vecom = \vecR(v_g)$ and $\vecOm = \vecL(v_g)$.
Therefore, the constraints can be written in terms of a
$\R^3$-valued 1-form $\vecep=(\epsilon_1,\epsilon_2,\epsilon_3)$,
where $\epsilon_i$ are the 1-forms defining the constraints. In
local coordinates, $\vecep$ is given by
$$\vecep = d{\bf x} - r A \vecR \ = \ d{\bf x} - rAg\vecL,
$$
where we are using the relation $\vecR = g\vecL$.

The constraint distribution $D$ on $TQ$ is given by
\begin{equation} \label{Ex:RB-distribD}
D = \textup{span} \left\{ {\bf X}^{\mbox{\tiny right}} + r A \frac{\partial}{\partial {\bf x}} \right\} \ = \ \textup{span} \left\{  {\bf X}^{\mbox{\tiny left}} + r Ag \frac{\partial}{\partial {\bf x}} \right\},
\end{equation}
where ${\bf X}^{\mbox{\tiny right}} = (X^{\mbox{\tiny
right}}_1,X^{\mbox{\tiny right}}_2, X^{\mbox{\tiny right}}_3)$ is
the moving frame of $\SO(3)$ dual to $\vecR=(\rho_1, \rho_2,
\rho_3)$; similarly, ${\bf X}^{\mbox{\tiny left}}$ is the moving
frame of $\SO(3)$ dual to $\vecL$.

The left action of the Lie group $G= \{(h,a) \in \SO(3) \times \R^3
\ : \ h{\bf e}_3 = {\bf e}_3 \}$ on $Q$ is given by
$$(h,a): (g,{\bf x}) \mapsto (hg, h{\bf x} + a) \in \SO(3) \times \R^3.$$
This $G$-action on $Q$ is a symmetry for the nonholonomic system
since the lifted action on $TQ$ leaves the constraints and the
hamiltonian invariant (observe that $hA=Ah$ when $A$ is a matrix of
the form described in \eqref{Ex:RB-Matrices}). In this case, the
vertical space $V$ is given by
$$
V= \textup{span} \left\{ \vecgamma \cdot {\bf X}^{\mbox{\tiny left}}
- x^2  \frac{\partial}{\partial x^1} + x^1  \frac{\partial}{\partial x^2}\, , \,
\frac{\partial}{\partial x^1}\, , \, \frac{\partial}{\partial x^2}\,
, \, \frac{\partial}{\partial x^3}  \right\},
$$
where $\vecgamma = (\gamma_1, \gamma_2,\gamma_3)$ is the third row
of the matrix $g \in \SO(3)$. Note that $S = \textup{span} \left\{
\vecgamma \cdot ( {\bf X}^{\mbox{\tiny left}} + r Ag
\frac{\partial}{\partial {\bf x}} ) \right\}$.  The Lie algebra $\mathfrak{g}$
associated to $G$ is abelian Lie algebra identified with $\R \times \R^3$.

Let us define the $G$-invariant vertical complement $W$ of $D$ on $TQ$ given by
$$W = \textup{span} \left\{ \frac{\partial}{\partial
x^1},  \frac{\partial}{\partial x^2},  \frac{\partial}{\partial x^3}
\right\}. $$ The distribution $W$ satisfies the vertical-symmetry condition \eqref{Eq:VerticalSymm-Q}
for $\mathfrak{g}_{\mbox{\tiny{$W$}}} =
\textup{span} \{ \eta^i:= (0;{\bf e}_i), i=1,2,3 \}$ where ${\bf
e}_i$ are the canonical vectors in $\R^3$.

If we consider the basis $\{  {\bf X}^{\mbox{\tiny left}} + r Ag
\frac{\partial}{\partial {\bf x}}, \frac{\partial}{\partial {\bf
x}}\}$ of $TQ$ then  $\{\vecL,\vecep\}$ is its dual basis on $T^*Q$.
Let us denote by $(g,{\bf x} ; {\bf K}, \tilde {\bf p} )$ the
coordinates on $T^*Q$ adapted to this basis.

\subsubsection*{The Poisson structure $\Lambda_0$ on
$W^\circ /G$.} The submanifold $W^\circ$ of $T^*Q$ is represented,
in coordinates, by $(g,{\bf x} ; {\bf K},0)$, and the $G$-action on
$W$ is given by $(h,a): (g,{\bf x}; {\bf K}) \mapsto (hg, h{\bf x} +
a ; {\bf K}).$ Thus,  $(\vecgamma, {\bf K})$ are local coordinates
of the reduced manifold $W^\circ /G \simeq S^2 \times \R^3$. Since
$\Theta_{\mbox{\tiny{$W^{\!\circ}$}}} = {\bf K} \cdot \vecL$, then
$$
\Omega_{\mbox{\tiny{$W^{\!\circ}$}}} = \lambda_i \wedge  K_i - {\bf
K} \cdot d \vecL =  \lambda_i \wedge  K_i + K_1 \lambda_2\wedge
\lambda_3 + K_2 \lambda_3\wedge \lambda_1 + K_3 \lambda_1\wedge
\lambda_2.
$$
Since $W$ satisfies the vertical-symmetry condition, the reduction
of the presymplectic manifold $(W^\circ, \Omega_{W^\circ})$ gives
the Poisson bracket $\Lambda_0$ on $W^\circ\!/G$ described in
coordinates $(\vecgamma, {\bf K})$ as
\begin{equation} \label{Ex:pi0}
\{K_i,K_j\}_{\Lambda_0}= -\epsilon_{ijl}K_l \ \ \quad \{\gamma_i, K_j\}_{\Lambda_0} = - \epsilon_{ijl}\gamma_l \ \ \quad
\{\gamma_i,\gamma_j\}_{\Lambda_0} = 0.
\end{equation}

\subsubsection*{The Poisson structure $\Lambda$ on $\M/G$
via the isomorphism $\Psi:\M \to W^\circ$.} The lagrangian is given
by the kinetic energy $$\mathcal{L}(g,{\bf x}; \vecOm , \dot {\bf
x}) = \kappa((\vecOm , \dot {\bf x}), (\vecOm , \dot {\bf x})) =
\frac{1}{2}(\mathbb{I} \vecOm) \cdot  \vecOm  + \frac{m}{2}||\dot
{\bf x}||^2,$$ where $\mathbb{I}$ is the inertia tensor which is
represented as a diagonal $3\times 3$ matrix whose positive entries,
$\mathbb{I}_1, \mathbb{I}_2, \mathbb{I}_3$ are the principal moments
of inertia.

As it was computed in \cite{PL2011} and \cite{Naranjo2008},
$$\M =
\{(g,{\bf x} ; {\bf K}, \tilde {\bf p} ) \in T^*Q \ : \ {\bf K} =
\mathbb{I}\vecOm + mr^2 (Ag)^T (Ag) \vecOm, \qquad  \tilde {\bf p}=
m r A g \vecOm \}.$$ Therefore, since we consider coordinates
adapted to the constraints, the isomorphism $\Psi:\M \to W^\circ$ is
written in coordinates as $\Psi(g,{\bf x} ; {\bf K},\tilde {\bf p})
= (g,{\bf x} ; {\bf K}, 0)$. On the reduced manifold $\M/G$, with
local coordinates $(\vecgamma,{\bf K})$, the Poisson structure
$\Lambda$ coincides with the formulas given in \eqref{Ex:pi0}, for
$\W$ as in \eqref{Eq:W}.

\begin{remark}
 At this point,
we can check whether the reduced nonholonomic vector field belongs
to the characteristic distribution of $\Lambda$. In fact, since
(\cite{PL2011,Naranjo2008})
$$\displaystyle{ X^\nh_\red = \vecgamma \times \vecOm \cdot
\frac{\partial}{\partial \vecgamma} + {\bf K} \times \vecOm \cdot
\frac{\partial}{\partial {\bf K}}},
$$  we
see that $\Lambda^\sharp(\vecOm \cdot d{\bf K} ) = X^\nh_\red$. Even
though $\vecOm \cdot  d{\bf K}$ is not necessarily a closed 1-form
(observe that it depends on the matrix $A$ by the relation between
${\bf K}$ and $\vecOm$), the fact that $X_\red^\nh$ belongs to
$\Lambda^\sharp(T^*(\M/G))$ says that it makes sense to think that
the reduced dynamics may be described by a gauge transformation of
$\Lambda$.
\end{remark}

\subsubsection*{The 2-form $\langle {\mathcal J}, \mathcal{K}_\subW \rangle$.}
The projection $P_{\mbox{\tiny{$W$}}}: T\M \to \W$ is given by $P_{\mbox{\tiny{$W$}}} = \epsilon_i \otimes \frac{\partial}{\partial x^i}$. Thus, recalling that $\mathfrak{g} = \R \times \R^3$ we have that the 
map $\mathcal{A}_\subW: T\M \to \mathfrak{g}$, defined in \eqref{Eq:AkinPw},  is  $\mathcal{A}_\subW = (0, \vecep)$.
Using Definition \ref{Def:KinCurvature}, the $\W${\it-curvature} $\mathcal{K}_\subW$ is given by
$$\mathcal{K}_\subW |_\C= d\mathcal{A}_\subW |_\C = (0, - r Ag \cdot  d\vecL ) |_\C ,$$ where $d\vecL = - (\lambda_2 \wedge \lambda_3, \lambda_3 \wedge \lambda_1,  \lambda_1 \wedge \lambda_2)$.

In order to compute the momentum map, let us call $J_0 \in C^\infty(\M)$ such that $J_0= {\bf i}_{(1;{\bf 0})_\M}\Theta_\subM$. 
Since an element $\a \in \M \subset T^*Q$ can be written in local coordinates as $\a_q = {\bf K}^T \vecL + mr (Ag\vecOm )^T \vecep$, then  $\langle {\mathcal J}(\alpha_q), (0; {\bf e}_i) \rangle = \alpha_q (\frac{\partial}{\partial x^i})= m r  [(g\vecOm)^TA^T]_i$. Hence the momentum map ${\mathcal J} \in \Gamma(\M \times \mathfrak{g}^*)$ written with respect to the basis $\{(1;{\bf 0}), (0;{\bf e}_i) \}$ of $\mathfrak{g}$ is
$$
{\mathcal J} = (J_0; mr (g\vecOm)^TA^T).
$$
Therefore, the 2-form $\langle {\mathcal J}, \mathcal{K}_\subW \rangle$ on $\M$  is given by
\begin{equation} \label{Ex:RB-JK} \langle {\mathcal J}, \mathcal{K}_\subW \rangle |_\C =   r^2 m (g \vecOm )^T A^TAg \cdot \vecL \times \vecL |_\C \\
\end{equation}

Now, it is possible to compute the Jacobiator of $\pi_\nh$ and
$\pi^\nh_\red$ by Theorem \ref{T:JacobVerticalSymm}. In order to see
if the reduced bivector $\pi^\nh_\red$ has an integrable
characteristic distribution  (or if there is a dynamical gauge
transformation such that $\pi_\red^\B$ has an integrable
characteristic distribution) it is necessary to analyze whether the
2-from  $\langle {\mathcal J}, \mathcal{K}_\subW \rangle$ is basic
with respect to the orbit projection $\rho:\M \to \M/G$ and as a
result we will be able to explain the dynamical gauge
transformations chosen in \cite{Naranjo2008,PL2011}.

\noindent {\it If $A \equiv 0$}. In this case, $\langle {\mathcal
J}, \mathcal{K}_\subW \rangle \equiv 0$ and thus $[\pi_\nh,
\pi_\nh]=0$ since $\psi_{\pi_\nh} =0$ (observe that
$\mathcal{K}_\subW \equiv 0$). This is coherent with the fact that
$A \equiv 0$ describes the free rigid body.

\noindent {\it If $A$ has rank 1}.  In this case, $A^T A=A_3$ with
$A_3 = {\bf e}_3 \, {\bf e}_3^T $,  where ${\bf e}_3$ is the third
canonical vector in $\R^3$. Then, using also that $d\vecgamma =
\vecgamma \times \vecL$ \cite{Naranjo2008,PL2011} the 2-form in
\eqref{Ex:RB-JK} becomes
$$
\langle {\mathcal J}, \mathcal{K}_\subW \rangle |_\C =  mr^2 (g \vecOm )^T
A_3 g \cdot \vecL \times \vecL |_\C = r^2 m \langle \vecgamma , \vecOm
\rangle \vecgamma  \cdot \vecL \times \vecL |_\C =  r^2 m \langle
\vecgamma , \vecOm \rangle \vecgamma  \cdot d\vecgamma \times
d\vecgamma |_\C,
$$
which is basic. From Corollary \ref{C:TwistedVertSymm},  the
reduction of $\pi_\nh$ induces a bivector $\pi_\red^\nh$ on $\M/G$
that is $(-d\langle {\mathcal J}, \mathcal{K}_\subW
\rangle_\red)$-twisted where $\langle {\mathcal J},
\mathcal{K}_\subW \rangle_\red = r^2 m \langle \vecgamma , \vecOm
\rangle \vecgamma  \cdot d\vecgamma \times d\vecgamma \, \in \,
\Omega^2(\M/G)$  (which explains the result in \cite[Theorem
6]{PL2011}). In particular, it has an integrable characteristic
distribution. Moreover, following Theorem \ref{T:Reduced-Dyn}, the
bivector $\pi_\red^\nh$ is a gauge transformation of  $\Lambda$
(given in \eqref{Ex:pi0}) by the 2-form $ \langle {\mathcal J},
\mathcal{K}_\subW \rangle_\red$.

\medskip

\noindent {\it If $A$ has rank 2}. Since $A^TA= \textup{Id}- A_3$
where $\textup{Id}$ is the identity matrix, then
$$\langle {\mathcal J}, \mathcal{K}_\subW \rangle |_\C = \left(r^2 m \langle \vecOm , \vecL \times \vecL \rangle -  r^2 m \langle \vecgamma , \vecOm \rangle \vecgamma  \cdot d\vecgamma \times d\vecgamma \right) |_\C.$$
In this case, $d\langle {\mathcal J}, \mathcal{K}_\subW \rangle $ is
not basic. Note that one can conclude more: there is no closed 3-form $\phi$ on $\M/G$ making the reduced bivector field $\pi_\red^\nh$ twisted Poisson. In fact, if such $\phi$ existed then, by Corollary \ref{C:TwistedVertSymm}, we would have $d\langle {\mathcal J}, \mathcal{K}_\subW \rangle |_\C = -\rho^*\phi |_\C$, which would imply that  ${\bf i}_X d\langle {\mathcal J}, \mathcal{K}_\subW \rangle = 0$ for any $X \in \Gamma(\S)$. However one can verify that this is not the case. As a consequence, since $\pi_\red^\nh$ is regular (\cite{PL2011}), it follows that its characteristic distribution is not integrable, in agreement with \cite{Naranjo2008}.


Let us consider the 2-form $B$ on $\M$ given by $B=-r^2 m \langle
\vecOm , \vecL \times \vecL \rangle $, which is the non-basic
part of $\langle {\mathcal J}, \mathcal{K}_\subW \rangle$.  Observe
that  $B$ satisfies the conditions of Definition
\ref{D:dynamicalGauge} and thus $B$ defines a dynamical gauge for
the bivector $\pi_\nh$ and the hamiltonian $\Ham_\subM = \frac{1}{2}
{\bf K} \cdot \vecOm$ \ \cite[eq.(48)]{PL2011}. Moreover, since
$B$ is $G$-invariant, it defines an invariant bivector field
$\pi_\B$ that describes the dynamics. Following Corollary
\ref{C:TwistedVertSymm}, the reduction of $\pi_\B$ gives a
$(-d\mathcal{B}_\red)$-twisted Poisson bivector field $\pi_\red^\B$, where $\rho^*\mathcal{B}_\red=   \langle {\mathcal J}, \mathcal{K}_\subW \rangle + B$, i.e.,
$$\mathcal{B}_\red=
- r^2 m \langle \vecgamma , \vecOm \rangle \vecgamma  \cdot
d\vecgamma \times d\vecgamma .
$$
So the reduced bivector $\pi_\red^\B$ has an integrable
characteristic distribution and, moreover, $\pi_\red^\B$ is gauge
related to the Poisson bivector field $\Lambda$, given in
\eqref{Ex:pi0}, by the 2-from $\mathcal{B}$. Again, here we are
recovering the results in \cite[Proposition 2 and Theorem
5]{PL2011}.

\medskip

\noindent {\it If $A$ has rank 3}. Finally $A^TA= \textup{Id}$, then
$\langle {\mathcal J}, \mathcal{K}_\subW \rangle |_\C =   r^2 m \langle
\vecOm , \vecL \times \vecL \rangle|_\C.$
In this case, $\langle {\mathcal J}, \mathcal{K}_\subW \rangle$ is not basic but it defines a dynamical gauge,
that is ${\bf i}_{X_\nh}\langle {\mathcal J}, \mathcal{K}_\subW \rangle=0$.
Thus, the gauge transformation of $\pi_\nh$ by $B=-\langle {\mathcal J}, \mathcal{K}_\subW \rangle$
gives a bivector $\pi_\B$ describing the dynamics.
Moreover, $\pi_\B$ coincides with $\pi_{\mbox{\tiny ${\mathcal J}\! \mathcal{K}$}}$.
So, the Poisson bracket $\Lambda$ on $\M/G$ describes the reduced dynamics
(this result recovers \cite[Prop.~1]{PL2011}).

\subsubsection*{Conserved quantities.} For
$\eta=(\mu;(a,b,c)) \in \mathfrak{g}$ and $\alpha \in \M$,  we have that $
\langle {\mathcal J}^\nh(\alpha), P_{\mathfrak{g}_\subS}(\eta) \rangle  =  \langle {\mathcal J}^\nh (\alpha)  , \mu(1;-y,x,0)\rangle =  \mu \langle {\bf K}, \vecgamma \rangle$ since $(1;-y,-x,0)_\subM =  \vecgamma \cdot ( {\bf X}^{\mbox{\tiny left}} + r Ag \frac{\partial}{\partial {\bf x}} )$. 
Since $\langle {\bf K}, \vecgamma \rangle$ is a
basic function, then it is a Casimir of $\Lambda$ according to
Corollary \ref{C:LeavesOfPiP}.  Since, for any rank of $A$, the
reduced dynamics is described by a bivector field $\pi_\red^\B$ that
is gauge related with $\Lambda$, $\langle {\bf K},  \vecgamma
\rangle $ is also a Casimir for $\pi_\red^\B$. Thus, as Corollary
\ref{C:ConservedQuantity} asserts,  $\langle {\bf K}, \vecgamma
\rangle$ is a conserved quantity of the nonholonomic system (in
agreement with \cite{Naranjo2008}).

\subsubsection*{Integrability.} As it was analyzed in \cite{PL2011} the rigid body with constraints is integrable for the three cases studied here. In fact, for each case, the bivector field describing the dynamics that is gauge related with $\Lambda$ (i.e., $\pi_\red^\nh$ in the case of $rank(A)=1$ and $\pi_\red^\B$ in the cases of $rank(A)=2,3$ for the corresponding 2-forms $B$) not only has an integrable characteristic distribution but is conformally Poisson  \cite[Sec.5.2]{PL2011}. Therefore, there is a Poisson bracket with 4-dimensional symplectic leaves and two first integrals in involution on each leaf: the hamiltonian $\Ham_\red$ and the function $F=  {\bf K} \cdot {\bf K}$ (whose joint level sets are compact on the leaf). Arnold-Liouville's Theorem guarantees the integrability of the reduced equations on each leaf (after the time reparametrization using the conformal factor).

\subsubsection*{The nonholonomic bracket.} Observe that it
is not necessary to compute the nonholonomic bracket $\pi_\nh$ in
order to obtain its Jacobiator and draw conclusions about the
integrability of the characteristic distribution associated to the
reduced nonholonomic bracket.  Note that the 2-section $\Omega_\subM
|_\C$ and the distribution $\C$ defining $\pi_\nh$ are computed to
be $\C = \textup{span}  \left\{  {\bf X}^{\mbox{\tiny left}} + r Ag
\frac{\partial}{\partial {\bf x}} , \frac{\partial}{\partial {\bf
K}} \right\}$ and
$$\Omega_\subM |_\C = \Psi^*\Omega_{\mbox{\tiny{$W^{\!\circ}$}}} |_\C - \langle {\mathcal J}, {\mathcal K}_\subW \rangle |_\C = \left( \lambda_i \wedge d K_i - {\bf K} \cdot d \vecL \right) |_\C - \langle {\mathcal J}, {\mathcal K}_\subW \rangle|_\C. $$
From here we can recover the formula for the nonholonomic bracket given in \cite[Prop.~15]{PL2011}.

\subsection{The homogeneous ball rolling on a cylinder} \label{Ex:Cylinder}

Let us consider a homogeneous sphere of radius $r$ rolling without sliding on the inner surface of a circular cylinder of radius $R$. We denote by $m$ the mass of the ball and by $\mathbb{I}$ its moment of inertia (which is a multiple of the identity, i.e., $\mathbb{I}=I.\textup{Id}$ where $I \in \R$ and $\textup{Id}$ is the identity matrix).  In what follows we assume that the symmetry axis of the cylinder is the vertical axis. We follow the notation of \cite{Marle2003,BGMConservation}.

The configuration space of the system is the manifold $\R \times S^1 \times SO(3)$ with (local) coordinates $(z, \theta, g)$, where $(z,\theta)$ represent the cylindrical coordinates of the center of mass of the ball and $g$ is the orientation matrix of the ball with respect to fixed axis. 
The lagrangian $\mathcal{L}:TQ \to \R$ is given by
$$
L(z,\theta,g; \dot z, \dot \theta, \vecom) = \frac{m}{2} \left( (R-r)^2 \dot \theta + \dot z^2\right) + \frac{I}{2} \omega_ i^2 + m {\bf g} z, 
$$
where $\vecom = (\omega_1, \omega_2, \omega_3)$ is the spatial angular velocity of the ball and ${\bf g}$ is the gravity acceleration. The non sliding constraints are 
$$
\dot z = r(\omega_2\cos \theta - \omega_1 \sin \theta) \qquad \mbox{and} \qquad \dot \theta = -\frac{r}{(R-r)} \omega_3.
$$

Following \cite{Marle2003}, we perform a change of coordinates by rotating the frame ${\bf X}^{\mbox{\tiny{right}}} \! = (X^{\mbox{\tiny{right}}}_1 \!, X^{\mbox{\tiny{right}}}_2\!, X^{\mbox{\tiny{right}}}_3)$ by the angle $\theta$ around the vertical axis. Hence we obtain a set of vector fields $\{X_n, X_\theta, X_z\}$ attached to the cylinder so that $R_{\theta} {\bf X}^{\mbox{\tiny{right}}} = (X_n, X_\theta, X_z)^T$, where $R_{\theta}$ the matrix representing the rotation by the angle $\theta$ around the vertical axis.  

Let us consider the basis of $TQ$ given by $\{\frac{\partial}{\partial z}, \frac{\partial}{\partial \theta}, X_n, X_\theta, X_z\}$ and its dual basis $\{dz, d\theta,\beta_n, \beta_\theta, \beta_z\}$ of $T^*Q$ with associated coordinates $(p_z, p_\theta, M_n, M_\theta, M_z)$. The submanifold $\M \subset T^*Q$ is given by 
$$
\M=\{(z,\theta,g, p_z, p_\theta, M_n, M_\theta, M_z ) \ : \ M_\theta = \frac{I}{rm}p_z,    \qquad  M_z = - \frac{I}{rm(R-r)} p_\theta  \}.
$$
The constraint 1-forms are written as
$$
\epsilon_z = \beta_z + \frac{(R-r)}{r} d\theta \qquad \mbox{and} \qquad \epsilon_\theta = \beta_\theta - \frac{dz}{r},
$$ 
and the distribution $D$ is generated by the basis $\{Y_z := \frac{\partial}{\partial z} + \frac{ X_\theta}{r}, Y_\theta := \frac{\partial}{\partial \theta} - \frac{(R-r)}{r} X_z, X_n\}$.

\subsubsection*{The nonholonomic bracket and the dynamics.}
Using \eqref{Eq:bivector-section} we compute the nonholonomic bracket $\pi_\nh$ on $\M$. If we denote $E= I+r^2m$, we obtain
\begin{equation*}
 \begin{split}
\pi_\nh = & \frac{r^2m}{E} Y_\theta \wedge \frac{\partial}{\partial p_\theta} + \frac{r^2m}{E} Y_z \wedge \frac{\partial}{\partial p_z} + X_n \wedge \frac{\partial}{\partial M_n}  + \frac{r^2m^2R}{E^2} M_n \frac{\partial}{\partial p_\theta} \wedge \frac{\partial}{\partial p_z}  + \frac{RI}{E} p_z \frac{\partial}{\partial M_n} \wedge \frac{\partial}{\partial p_\theta} \\
& + \frac{I}{E(R-r)}p_\theta \frac{\partial}{\partial p_z} \wedge\frac{\partial}{\partial M_n}.
\end{split}
\end{equation*}

Since the hamiltonian $\Ham_\subM:\M \to \R$ is
$$
\Ham_\subM(z,\theta,g; p_z, p_\theta, M_n) = \frac{E}{2r^2m^2} \left(\frac{1}{(R-r)^2}  p_\theta^2  +  p_z^2 \right) + \frac{M_n^2 }{2I} + m{\bf g} z,
$$
the nonholonomic vector field is given by 
$$
X_\nh = \frac{p_\theta}{m(R-r)^2} Y_\theta + \frac{p_z}{m}Y_z - \frac{M_n}{I} X_n - \frac{rp_\theta M_n }{E(R-r)^2} \frac{\partial}{\partial p_z} + \frac{Ip_\theta p_z}{rm^2(R-r)^2}  \frac{\partial}{\partial M_n} -  \frac{{\bf g}m^2 r^2}{E} \frac{\partial}{\partial p_z}.
$$

At this point, we observe that $p_\theta$ is a conserved quantity of the dynamics as it was already observed in \cite{BGMConservation,Marle2003}. 
Now we use Corollary \ref{C:RedJacobiator} in order to see whether the reduced bracket $\pi_\nh^\red$ admits leaves.

\subsubsection*{Reduction and the $d{\mathcal J} \wedge {\mathcal K}_\subW$ 3-form.}
The system admits the symmetry group $G = S^1 \times SO(3)$ (see \cite{BGMConservation,Marle2003}) with the action on $Q$ given by 
$$\left( (\varphi, h) , (z,\theta,g) \right) \mapsto (z, \theta+\varphi, L(g_\varphi) g h),
$$
where  $g_\varphi$ is the rotation of angle $\varphi$ around the vertical axis, and $L(g_\varphi):  SO(3) \to SO(3)$ maps $h \in SO(3)$ in $g_\varphi h$. The canonical lift of the action to $T^*Q$ is 
$$
\left( (\varphi, h) , (z,\theta,g; p_z,p_\theta, {\bf M}) \right) \mapsto (z, \theta+\varphi, L(g_\varphi) g h ; p_z, p_\theta, \hat{L}(g_\varphi){\bf M}) ),
$$
where ${\bf M} = (M_n,M_\theta,M_z)$ and  $\hat{L}(g_\varphi)$ is the cotangent lift of the left translation $L(g_\varphi)$.  
Observe that the coordinates ${\bf M}$ do not see the right action by $h$ since they are right invariant.

The projection $T^*Q \to T^*Q/G$ is given by $(z,\theta, g ; p_z, p_\theta, {\bf M}) \mapsto (z, p_z, p_\theta , {\bf M}_0)$, where ${\bf M}_0 = Ad^*_{{g_\theta}^{\!-1}} {\bf M}$. Then, for ${\bf M}_0 = (M_{0x}, M_{0y},M_{0z})$  we obtain 
$$
\M /G = \{ (z, p_z, p_\theta,{\bf M}_0) \ : \ M_{0y} = \frac{I}{rm} p_z, \ \ M_{0z} = -\frac{I}{rm(R-r)} p_\theta \}.
$$
Therefore, the reduced manifold $\M/G$ is (locally) represented by the coordinates $(z, p_z, p_\theta,M_{0x})$, with $\rho^*M_{0x} = M_n$ for $\rho: \M \to \M/G$ the orbit projection (see \cite{Marle2003}). 

In what follows we compute the 3-form $d{\mathcal J} \wedge {\mathcal K}_\subW$ on $\M$ in order to apply Corollary \ref{C:RedJacobiator} and obtain the Jacobiator of $\pi_\red$. 

\noindent {\bf The $\W$-curvature $\mathcal{K}_\subW$.} \ The vertical space $V$ is given by $V = \textup{span} \{Y_\theta, X_n,X_\theta, X_z\}$, since for each element of the Lie algebra $\mathfrak{g} = \R \times \R^3$ of $G$, the infinitesimal generators are given by
$$
(1;{\bf 0})_{T^*Q}  = \frac{\partial}{\partial \theta} + X^{\mbox{\tiny{right}}}_3, \qquad (0; {\bf e}_i)_{T^*Q} = g^T {\bf X}^{\mbox{\tiny{right}}}, 
$$
where ${\bf e}_i$ for $i=1,2,3$ are the canonical vectors on $\R^3$. 
We consider the following decomposition of $TQ$ as in \eqref{Eq:DecompTQ}:
\begin{equation} \label{Ex:Cylind:decomp}
D = \textup{span}\{Y_z, Y_\theta, X_n\} \qquad \mbox{and} \qquad W = \textup{span} \{ X_\theta, X_z\} .
\end{equation}
Observe that $W$ is a vertical complement of the constraints but it does {\it not} satisfy the vertical-symmetry condition \eqref{Eq:VerticalSymm-Q}. 

The projection  $P_{\mbox{\tiny{$W$}}} :TQ \to W$ associated to the decomposition \eqref{Ex:Cylind:decomp} is given by $P_{\mbox{\tiny{$W$}}} = \epsilon_\theta \otimes X_\theta + \epsilon_z \otimes X_z$. Writing the vector fields $X_\theta$ and $X_z$ in terms of the infinitesimal generators, we compute $\mathcal{A}:TQ \to \mathfrak{g}$ as in \eqref{Eq:AkinPw}, and the $\W$-curvature is given by 
$$\mathcal{K}_\subW = d \mathcal{A} |_\C =\left( 0,  \frac{R}{r}(- \vecalpha \sin \theta  + \vecbeta \cos \theta) \beta_n \wedge d\theta + \frac{\vecgamma}{r} \beta_n \wedge dz\right),
$$
where $\vecalpha$, $\vecbeta$ and $\vecgamma$ are the rows of the $3\times 3$ matrix $g$. 

\noindent {\bf The moment map $\mathcal{J}:T\M \to \mathfrak{g}^*$.} \ 
If $J_0 = {\bf i}_{(1,{\bf 0})_{T^*Q}} \Theta_Q$ then the canonical momentum map $J:T^*Q \to \mathfrak{g}^*$ is given by 
$ J = (J_0, M_n A + M_\theta B + M_z \vecgamma)$, where $A= \vecalpha \cos\theta + \vecbeta \sin\theta $ and $B= -\vecalpha \sin\theta + \vecbeta \cos\theta$.
Since ${\mathcal J} = J \circ \iota :T\M \to \mathfrak{g}^*$ (for $\iota:\M \to T^*Q$ the inclusion) then  
\begin{equation*}
 \begin{split}
  d{\mathcal J} |_\C = & \left( dJ_0, A\, dM_n +\frac{I}{rm}B \, dp_z - \frac{I \vecgamma}{rm(R-r)} dp_\theta 
  + (\frac{M_n}{r} \vecgamma + \frac{Ip_\theta}{r^2m(R-r)}  A ) dz  
   \right. \\ 
  & \left.   - \frac{I}{rm}(p_z \vecgamma + \frac{ p_\theta}{(R-r)} B)\beta_n 
 + \frac{R}{r} (M_nB - \frac{I}{rm}p_zA) d\theta \right).
 \end{split}
\end{equation*}

The 3-form $d{\mathcal J} \wedge {\mathcal K}_\subW$ is computed as in \eqref{Eq:dJ^K}:
\begin{equation} \label{Ex:Cylind:dJ^K}
d{\mathcal J} \wedge {\mathcal K}_\subW = \frac{RI}{r^2m} dp_z \wedge \beta_n \wedge d\theta - \frac{I}{r^2m(R-r)} dp_\theta \wedge \beta_n\wedge dz.
\end{equation}
We observe that this 3-form is not basic and, moreover, 
$
d{\mathcal J} \wedge {\mathcal K}_\subW (\pi_\nh^\sharp(d M_n), \pi_\nh^\sharp(\rho^*dz), \pi_\nh^\sharp(\rho^*dp_\theta) ) \neq 0.
$
By Corollary \ref{C:RedJacobiator} we conclude that $\pi^\nh_\red$ is not Poisson nor admits leaves. 

In order to obtain a (Poisson) bracket describing the reduced dynamics, we will study a gauge transformation of the nonholonomic bivector $\pi_\nh$. 

\subsubsection*{The gauge transformation of $\pi_\nh$.}
Since the conserved quantity $p_\theta$ is a basic function, it is conserved also by the reduced dynamics $X^\nh_\red = T\!\rho X_\nh$.  However, $p_\theta$ is not conserved by every hamiltonian vector field associated to $\pi^\nh_\red$ (i.e., $(\pi^\nh_\red)^\sharp(dp_\theta) \neq 0$).
In what follows, we perform a gauge transformation of $\pi_\nh$ so that  $p_\theta$  is a Casimir of $\pi_\red^\B$ (and thus,  $p_\theta$ is conserved by every hamiltonian vector field associated to $\pi_\red^\B$). 

Let $B$ be the $G$-invariant 2-form on $\M$ given by 
$$
B = \frac{R}{r^2} (M_n dz\wedge d\theta + \frac{I}{m}p_z d\theta \wedge \beta_n + \frac{I}{m(R-r)^2} p_\theta \beta_n \wedge dz).
$$

First, observe that $B$ satisfies the condition in Remark \ref{R:DynGaugeSemi-basic} and also that ${\bf i}_{X_\nh} B = 0$, hence the 2-form $B$ defines a dynamical gauge transformation.  
If we compute the bivector $\pi_\B$ associated to the 2-form $\Omega_\subM +B$ and the distribution $\C$ we obtain 
\begin{equation*}
\pi_\B =  \frac{r^2m}{E} Y_\theta \wedge \frac{\partial}{\partial p_\theta} + \frac{r^2m}{E} Y_z \wedge \frac{\partial}{\partial p_z} + X_n \wedge \frac{\partial}{\partial M_n}  - \frac{Ir}{E(R-r)^2} p_\theta \frac{\partial}{\partial p_z} \wedge\frac{\partial}{\partial M_n}.
\end{equation*}

Note that $\pi_\B^\sharp(dp_\theta) = -\frac{r^2m}{E} Y_\theta$, hence $p_\theta$ is a Casimir of $\pi_\red^\B$.  Thus, since $\textup{dim}\, \M/G = 4$, we conclude that, if $\pi_\red^\B$ admits leaves, it will be automatically Poisson. 

\noindent {\bf The Jacobiator of $\pi_\red^\B$}. \ Using Corollary \ref{C:RedJacobiator} and formula \eqref{Ex:Cylind:dJ^K} we get that
\begin{equation} \label{Ex:Cylind:dJ^K-dB}
\left( d{\mathcal J} \wedge {\mathcal K}_\subW + dB \right) |_\C = \frac{I}{rm(R-r)^2} dp_\theta \wedge \beta_n \wedge dz + \frac{R}{r^2} dM_n\wedge dz \wedge d\theta,
\end{equation}
where we used the fact that $d\beta_n \wedge d\theta |_\C = d\beta_n\wedge dz|_\C = 0$. Again, note that the 3-form \eqref{Ex:Cylind:dJ^K-dB} is non-basic.   However, we observe that ${\bf i}_{\pi_\B^\sharp (\rho^*dz)} \left( d{\mathcal J} \wedge {\mathcal K}_\subW + dB \right) |_\C = 0$ and also 
$$
\left( d{\mathcal J} \wedge {\mathcal K}_\subW + dB \right) \, (\pi_\B^\sharp (\rho^*dp_\theta), \pi_\B^\sharp (\rho^*d p_z) , \pi_\B^\sharp (dM_n)) = 0.
$$
Therefore, $\pi_\B^\sharp (d{\mathcal J} \wedge {\mathcal K}_\subW - dB ) \circ \rho^* = 0$ and hence by Corollary \ref{C:RedJacobiator} the bivector field $\pi_\red^\B$ is Poisson.  

Finally we conclude that $\pi_\red^\B$ is a Poisson bivector field on $\M/G$ describing the reduced dynamics of the homogeneous ball rolling on a cylinder.  
In order to complete the example, we write explicitly the reduced bivector $\pi_\red^\B$:
\begin{equation*}
\pi_\red^\B =  \frac{r^2m}{E} \frac{\partial}{\partial z} \wedge \frac{\partial}{\partial p_z}  - \frac{Ir}{E(R-r)^2} p_\theta \frac{\partial}{\partial p_z} \wedge\frac{\partial}{\partial M_{0x}}.
\end{equation*}

This bivector field agrees with the one in \cite{BoMaKi2002,Ramos2004}, where it appears with no geometric interpretation.

\appendix

\section{Appendix}

Let $(R, \Omega)$ be a symplectic manifold equipped with a free and
proper action of a Lie group $G$ preserving $\Omega$. There is an
induced Poisson structure on $R/G$ by reduction, that we denote by
$\pi_{\mbox{\tiny{$R/G$}}}$. We use the notation
$\mathcal{O}_{\mbox{\tiny{$R/G$}}}$ for its symplectic leaves and $\rho_{\scriptscriptstyle{R}}: R\to R/G$ for the quotient
map.

Let $\iota: P\hookrightarrow Q$ be a $G$-invariant submanifold of
$R$ and $\Omega_P:=\iota^*\Omega$. We denote by $\V_P\subseteq TP$
the distribution tangent to the orbits of the restricted action on
$P$, and by $\rho_{\scriptscriptstyle{P}}: P\to P/G$ the quotient
map.

\begin{lemma} \label{L:A-PresymplecticRed}
Suppose that $\textup{Ker} \, \Omega_P$ has constant rank and
$\textup{Ker} \, \Omega_P \subset \V_P$. Then:
\begin{enumerate}
\item[$(i)$] There is an induced Poisson structure $\pi_{\mbox{\tiny{$P/G$}}}$ on
$P/G$.
\item[$(ii)$] Each symplectic leaf $\mathcal{O}_{\mbox{\tiny{$P/G$}}}$   of
$\pi_{\mbox{\tiny{$P/G$}}}$ is a  symplectic submanifold of a leaf
$\mathcal{O}_{\mbox{\tiny{$R/G$}}}$ of $\pi_{\mbox{\tiny{$R/G$}}}$,
given by a connected component of the intersection
$\mathcal{O}_{\mbox{\tiny{$R/G$}}} \cap (P/G)$.
\end{enumerate}
\end{lemma}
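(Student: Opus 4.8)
The plan is to produce the reduced Poisson structure in $(i)$ by an explicit bracket construction on $C^\infty(P/G)$, and then to deduce $(ii)$ by showing that $P/G$ sits inside $R/G$ as a Poisson--Dirac submanifold whose induced structure is exactly $\pi_{\mbox{\tiny{$P/G$}}}$. For $(i)$, the key observation is that the hypothesis $\textup{Ker}\,\Omega_P\subseteq\V_P$ yields $\textup{Ann}\,\V_P\subseteq\textup{Ann}\,\textup{Ker}\,\Omega_P=\textup{Im}\,\Omega_P^\flat$; since $\textup{Ker}\,\Omega_P$ has constant rank, $\textup{Im}\,\Omega_P^\flat\subseteq T^*P$ is a smooth subbundle and $\Omega_P^\flat$ restricts to a bundle isomorphism from any ($G$-invariant) complement $\mathcal H$ of $\textup{Ker}\,\Omega_P$ onto it. Thus for $\bar f\in C^\infty(P/G)$ the invariant function $f:=\rho_P^*\bar f$ satisfies $df\in\Gamma(\textup{Ann}\,\V_P)$, so I can choose a smooth ($G$-equivariant) vector field $X_f\in\Gamma(\mathcal H)$ with ${\bf i}_{X_f}\Omega_P=df$, unique modulo $\textup{Ker}\,\Omega_P$. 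I would then set $\{\bar f,\bar g\}_{\mbox{\tiny{$P/G$}}}\circ\rho_P:=\Omega_P(X_f,X_g)=X_g(f)$, and check that this is well defined (replacing $X_f$ or $X_g$ by a section of $\textup{Ker}\,\Omega_P\subseteq\V_P$ does not change $X_g(f)$ since $f$ is $G$-invariant), $G$-invariant hence descends to $C^\infty(P/G)$, and $\R$-bilinear, skew and Leibniz by inspection. For the Jacobi identity I would carry out the classical presymplectic computation: from $d\Omega_P=\iota^*d\Omega=0$ one has $\pounds_{X_f}\Omega_P=0$, whence ${\bf i}_{[X_f,X_g]}\Omega_P=\rho_P^*d\{\bar g,\bar f\}_{\mbox{\tiny{$P/G$}}}$, so $[X_f,X_g]$ is an admissible representative of the Hamiltonian field of $\{\bar g,\bar f\}_{\mbox{\tiny{$P/G$}}}$; expanding $0=d\Omega_P(X_f,X_g,X_h)$ by the global formula for the exterior derivative and collecting terms yields the cyclic Jacobi identity, proving $(i)$.

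For $(ii)$, I would verify the two pointwise facts that together say $\bar\iota\colon P/G\hookrightarrow R/G$ is Poisson--Dirac with induced structure $\pi_{\mbox{\tiny{$P/G$}}}$, and then invoke the standard description of the symplectic foliation of a Poisson--Dirac submanifold \cite{CrFe}. Fix $m\in P$ and put $\bar m:=\rho_R(m)$. Since $\Omega$ is nondegenerate and $G$-invariant, Hamiltonian vector fields of $G$-invariant functions on $R$ lie in $\V_R^\perp:=\{U\in T_mR:\Omega(U,\V_R)=0\}$ and project to Hamiltonian fields of $\pi_{\mbox{\tiny{$R/G$}}}$, so the characteristic distribution of $\pi_{\mbox{\tiny{$R/G$}}}$ at $\bar m$ is $T\rho_R(\V_R^\perp)$, with leafwise form $\omega_{\mbox{\tiny{$R/G$}}}(T\rho_R U,T\rho_R V)=\Omega(U,V)$; and by the construction in $(i)$ the characteristic distribution of $\pi_{\mbox{\tiny{$P/G$}}}$ at $\bar m$ is $T\rho_P(C_P)$, with $C_P:=\{X\in T_mP:\Omega_P(X,\V_P)=0\}$ and leafwise form the descent of $\Omega_P|_{C_P}$. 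Using $\V_P=T_mP\cap\V_R$, the inclusion $\V_R\subseteq T_mP$ (as $P$ is $G$-invariant), the modular law, and the hypothesis $\textup{Ker}\,\Omega_P\subseteq\V_P$, a short piece of symplectic linear algebra should give $T\rho_R(C_P)=T\rho_R(\V_R^\perp\cap T_mP)=T\rho_R(\V_R^\perp)\cap T\rho_R(T_mP)$ --- that is, the characteristic distribution of $\pi_{\mbox{\tiny{$P/G$}}}$ is exactly the intersection of that of $\pi_{\mbox{\tiny{$R/G$}}}$ with $T(P/G)$ --- and, again using $\textup{Ker}\,\Omega_P\subseteq\V_P$, that $\omega_{\mbox{\tiny{$R/G$}}}$ restricts to a nondegenerate form on this intersection. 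These are precisely the Poisson--Dirac condition and the identification of the induced bracket with $\pi_{\mbox{\tiny{$P/G$}}}$, so the leaves of $\pi_{\mbox{\tiny{$P/G$}}}$ are the connected components of the intersections $\mathcal{O}_{\mbox{\tiny{$R/G$}}}\cap(P/G)$, each a symplectic submanifold of the corresponding $\mathcal{O}_{\mbox{\tiny{$R/G$}}}$, which is $(ii)$.

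The step I expect to require the most care is the symplectic linear algebra underlying $(ii)$: keeping track of the $\Omega$- and $\Omega_P$-orthogonal complements of $\V_R$, $\V_P$ and $T_mP$ inside $(T_mR,\Omega)$, showing they push forward compatibly under $T\rho_R$ and $T\rho_P$, and checking that the reduced leafwise $2$-form remains nondegenerate on $T\rho_R(C_P)$ --- all of which should reduce to the single hypothesis $\textup{Ker}\,\Omega_P\subseteq\V_P$. I would isolate this as a pointwise lemma in symplectic linear algebra and then globalize it. Globalization, the smoothness of all the distributions and of the induced bivector (which follows from $\textup{Ker}\,\Omega_P$ having constant rank, making $\textup{Im}\,\Omega_P^\flat$ and $C_P$ smooth subbundles), and the routine bracket-algebra in $(i)$ are the comparatively easy parts.
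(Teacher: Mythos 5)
Your argument is correct, and it splits naturally against the paper's proof as follows. For part $(i)$ you are doing essentially what the paper does: the paper works with the Poisson algebra of admissible functions $\{f : df(\textup{Ker}\,\Omega_P)=0\}$ and observes that $\textup{Ker}\,\Omega_P\subset \V_P$ forces $G$-invariant functions to be admissible, with the bracket $\{f,g\}=-dg(X)$, ${\bf i}_X\Omega_P=df$, descending to $P/G$; your choice of an invariant complement $\mathcal{H}$ and the explicit presymplectic verification of Jacobi via $d\Omega_P=0$ just makes explicit what the paper leaves as a standard fact. For part $(ii)$ your route is genuinely different in technique, though it lands on the same criterion: the paper proves the Poisson--Dirac property by a Dirac-structure argument, comparing $\mathrm{graph}(\pi_{P/G}^\sharp)$ with the pullback subbundle $\{(Y,\bar\iota^*\a):\pi_{R/G}^\sharp(\a)=T\bar\iota(Y)\}$ and exploiting that both have rank equal to $\dim(P/G)$, so a single inclusion (obtained by correcting $\pi^\sharp(\rho_R^*\a)$ by a vertical vector, which is tangent to $P$ by $G$-invariance) suffices; you instead work pointwise upstairs, identifying the characteristic distributions as $T\rho_R(\V_R^{\perp})$ and $T\rho_P(C_P)$ with $C_P=\V_R^{\perp}\cap T_mP$, using $\V_R\subseteq T_mP$ to push the intersection through $T\rho_R$, and deducing nondegeneracy of the restricted leaf form from $C_P^{\perp}=(T_mP)^{\perp}+\V_R$ together with $T_mP\cap (T_mP)^{\perp}=\textup{Ker}\,\Omega_P\subseteq\V_P$ --- which is exactly where the hypothesis enters, as you anticipated. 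The paper's graph-plus-rank-count argument is shorter and avoids the orthogonal-complement bookkeeping; your version is more computational but makes the geometric content transparent (the characteristic distribution of $\pi_{P/G}$ is precisely the trace of that of $\pi_{R/G}$ on $T(P/G)$, with the same leafwise form), and the smoothness issues needed to invoke \cite{CrFe} are handled in both cases by the fact that the induced bivector is the already-constructed $\pi_{P/G}$ from part $(i)$.
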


Note that we have an induced embedding $\bar{\iota}:P/G
\hookrightarrow R/G$. Condition $(ii)$ says that $P/G$ is a {\it
Poisson-Dirac} submanifold of $R/G$, in the sense of
\cite[Sec.~8]{CrFe} (see also \cite[Sec.~6]{NotasDirac}). We will
refer to the Poisson structure $\pi_{\mbox{\tiny{$P/G$}}}$ in $(i)$
as the {\it reduction} of the presymplectic form $\Omega_P$.

\begin{proof}
$(i)$: Consider $C^\infty(P)_{adm}:=\{ f \in C^\infty(P)\,|\,
df(\textup{Ker} \, \Omega_P)=0\}$. The condition that $\textup{Ker}
\, \Omega_P$ has constant rank implies that $C^\infty(P)_{adm}$ is a
Poisson algebra with Poisson bracket given by
$$
\{f,g\} = - dg(X),
$$
where $X$ is any vector field on $P$ such that ${\bf i}_X\Omega_P =
df$ (such $X$ always exists).  If
$C^\infty(P)^G$ denotes the space of $G$-invariant functions on $P$,
then condition $\textup{Ker} \, \Omega_P \subset \V_P$ implies that
$C^\infty(P)^G \subseteq C^\infty(P)_{adm}$.

We now check that if
$f,g \in C^\infty(P)^G$, then $\{f,g\} \in C^\infty(P)^G$.
For any $\gamma\in G$, using that $\gamma^*\Omega_P=\Omega_P$ and
$\gamma^*f=f$, we see that if ${\bf i}_X\Omega_P=df$, then $\gamma_*
X - X \in \textup{Ker} \, \Omega_P$. Hence $dg(X) = dg(\gamma_*X)$
when $g\in C^\infty(P)^G$. From that we directly verify that
$\gamma^* \{f,g\} = -\gamma^* (dg(X))=-dg(X)=\{f,g\}$. So the
Poisson bracket on $C^\infty(P)_{adm}$ restricts to $C^\infty(P)^G$,
and there is an induced Poisson structure
$\pi_{\mbox{\tiny{$P/G$}}}$ on $P/G$ since
$\rho_{\scriptscriptstyle{P}}^*: C^\infty(P/G)\to C^\infty(P)^G$ is
an isomorphism. More explicitly,
\begin{equation} \label{Eq:PresymBivector}
\pi_{\mbox{\tiny{$P/G$}}}^\sharp(\alpha) = T\! \rho_{\!
\mbox{\tiny{$P$}}} (Z) \quad \mbox{if and only if} \quad  Z \in \mathfrak{X}(P) \mbox{ verifies that } {\bf i}_{Z} \Omega_P = -\rho_{\!
\mbox{\tiny{$P$}}}^*(\alpha).
\end{equation}

$(ii)$: By \cite[Sec.~8]{CrFe}, showing that the inclusion
$\bar{\iota}: (P/G, \pi_{\mbox{\tiny{$P/G$}}}) \hookrightarrow (R/G,
\pi_{\mbox{\tiny{$R/G$}}}) $ is a Poisson-Dirac submanifold (which
is equivalent to the statement in $(ii)$) is the same as showing
that the subbundle $\mathrm{graph}(\pi_{\mbox{\tiny{$P/G$}}}^\sharp)
= \{(\pi_{\mbox{\tiny{$P/G$}}}^\sharp(\alpha),\alpha)\,|\, \alpha\in
T^*(P/G)\}$ of $T(P/G)\oplus T^*(P/G)$ equals the subbundle
\begin{equation}\label{eq:bk}
\{ (Y, \bar{\iota}^*\a) \in T(P/G)\oplus T^*(P/G) \, | \, \a \in
T^*(R/G) \mbox{ s.t. } \pi_{\mbox{\tiny{$R/G$}}}^\sharp(\a) =
T\bar{\iota}(Y) \}.
\end{equation}
Since both subbundles have the same rank (equal to the dimension of
$P/G$, as both are Dirac structures, see \cite{NotasDirac}), it
suffices to verify that the subbundle \eqref{eq:bk} is contained in
$\mathrm{graph}(\pi_{\mbox{\tiny{$P/G$}}}^\sharp)$.

Let $\pi$ be the bivector field corresponding to the 2-form $\Omega$
on $R$. Given $(Y, \bar{\iota}^*\a)$ in the subbundle \eqref{eq:bk}
at
$\rho_{\scriptscriptstyle{P}}(x)=\rho_{\scriptscriptstyle{R}}(x)$,
for $x\in P\subseteq R$, we know that $T\! \rho_{\!
\mbox{\tiny{$R$}}} (\pi^\sharp (\rho_{\! \mbox{\tiny{$R$}}}^*\a) ) =
\pi^\sharp_{\mbox{\tiny{$R/G$}}}(\a) = T\bar{\iota}(Y)$, where
$\rho_{\! \mbox{\tiny{$R$}}}^*\a$ is taken at $x$. For $Z \in T_xP$
such that $T\! \rho_{\! \mbox{\tiny{$P$}}} Z = Y$, we see that
$\pi^\sharp (\rho_{\! \mbox{\tiny{$R$}}}^*\a) - T\iota(Z)$ is
vertical (i.e., tangent to a $G$-orbit), and hence must be tangent
to $P$ since $P$ is $G$-invariant. It follows that $\pi^\sharp
(\rho_{\! \mbox{\tiny{$R$}}}^*\a)=T\iota(X)$, for $X\in T_xP$, and
$T\bar{\iota}(Y)= T\! \rho_{\! \mbox{\tiny{$R$}}} (T\iota(X))=
T\bar{\iota}(T\! \rho_{\! \mbox{\tiny{$P$}}}(X))$, i.e., $Y = T\!
\rho_{\! \mbox{\tiny{$P$}}} (X)$. Finally note that
$$
{\bf i}_{X} \Omega_P = \iota^* {\bf i}_{\pi^\sharp (\rho_{\!
\mbox{\tiny{$R$}}}^*\a)} \Omega = -\iota^* \rho_{\!
\mbox{\tiny{$R$}}}^*\a = -\rho_{\!
\mbox{\tiny{$P$}}}^*(\bar{\iota}^*\a),
$$
which says that $Y = T\! \rho_{\! \mbox{\tiny{$P$}}} (X)=
\pi_{\mbox{\tiny{$P/G$}}}^\sharp(\bar{\iota}^* \alpha)$, that is,
$(Y,\bar{\iota}^*\a) \in
\mathrm{graph}(\pi^\sharp_{\mbox{\tiny{$P/G$}}})$.

\end{proof}

\begin{remark}\label{rm:ch}
In the special case when $\mathrm{Ker} \Omega_P = \V_P$, then the
reduced Poisson $\pi_{\mbox{\tiny{$P/G$}}}$ is defined by a
nondegenerate 2-form $\Omega_{\mbox{\tiny{$P/G$}}}$. Indeed, we can
check that $\pi_{\mbox{\tiny{$P/G$}}}$ is nondegenerate by noticing
that, if $\pi_{\mbox{\tiny{$P/G$}}}^\sharp(\alpha)=0$, then there is
a vector field $Z \in \V_P = \mathrm{Ker}  \Omega_P$ such that ${\bf
i}_Z \Omega_P = -\rho_{\! \mbox{\tiny{$P$}}}^*(\alpha)=0$, hence
$\alpha=0$; that is, $\pi_{\mbox{\tiny{$P/G$}}}^\sharp$ is an
isomorphism. Note also that  $\Omega_{\mbox{\tiny{$P/G$}}}$ is
uniquely determined by the condition that
\begin{equation}\label{Eq:pb}
\rho_{\!
\mbox{\tiny{$P$}}}^*\Omega_{\mbox{\tiny{$P/G$}}} = \Omega_P.
\end{equation}
To verify this last claim, recall that $\pi_{\mbox{\tiny{$P/G$}}}^\sharp(\alpha) =  T\! \rho_{\!
\mbox{\tiny{$P$}}} (Z)$, where $Z$ is any vector field on $P$ such
that ${\bf i}_{Z} \Omega_P = -\rho_{\!
\mbox{\tiny{$P$}}}^*(\alpha)$, and that ${\bf i}_{T\! \rho_{\!
\mbox{\tiny{$P$}}} (Z)} \Omega_{\mbox{\tiny{$P/G$}}} = -\alpha$. It follows that
$\rho_{\!
\mbox{\tiny{$P$}}}^*({\bf i}_{T\! \rho_{\!
\mbox{\tiny{$P$}}} (Z)} \Omega_{\mbox{\tiny{$P/G$}}}) = - \rho_{\!
\mbox{\tiny{$P$}}}^*\alpha = {\bf i}_{Z} \Omega_P$, which shows that \eqref{Eq:pb} holds.
\end{remark}

Suppose now that the $G$-action on $R$ is hamiltonian, with momentum
map $\mathcal{J}_R: R\to \mathfrak{g}^*$, and let ${\mathcal J}_P =
\iota^* {\mathcal J}_R$. For $\mu \in \mathfrak{g}^*$, let $G_\mu$
be the stabilizer of $\mu$ with respect to the coadjoint action, and
consider the Marsden-Weinstein quotients $({\mathcal J}_R^{-1} (\mu)
/ G_\mu, \omega_\mu)$, whose connected components are the symplectic
leaves of $\pi_{\mbox{\tiny{$R/G$}}}$, see e.g. \cite{MMORR}. Consider
the natural embeddings
\begin{equation}\label{Eq:incl}
\bar{\iota}: {\mathcal J}_P^{-1} (\mu) / G_\mu \hookrightarrow
{\mathcal J}_R^{-1} (\mu) / G_\mu.
\end{equation}

\begin{corollary}\label{C:Asl}
The symplectic leaves of $\pi_{\mbox{\tiny{$P/G$}}}$ are given by
the connected components of ${\mathcal J}_P^{-1} (\mu) / G_\mu$, for
$\mu\in \mathfrak{g}^*$, with symplectic form
$\bar{\iota}^*\omega_\mu$.
\end{corollary}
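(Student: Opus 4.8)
The plan is to combine Lemma~\ref{L:A-PresymplecticRed}$(ii)$ with the classical description of the symplectic leaves of $\pi_{\mbox{\tiny{$R/G$}}}$ as Marsden--Weinstein quotients. Recall (see e.g. \cite{MMORR}) that the leaves $\mathcal{O}_{\mbox{\tiny{$R/G$}}}$ of $\pi_{\mbox{\tiny{$R/G$}}}$ are precisely the connected components of the images of the natural injective immersions $k_\mu\colon {\mathcal J}_R^{-1}(\mu)/G_\mu \to R/G$, $[x]_{G_\mu}\mapsto [x]_G$, each carrying the reduced symplectic form $\omega_\mu$. By Lemma~\ref{L:A-PresymplecticRed}$(ii)$, every symplectic leaf of $\pi_{\mbox{\tiny{$P/G$}}}$ is a connected component of $\mathcal{O}_{\mbox{\tiny{$R/G$}}}\cap (P/G)$ inside $R/G$ (where $P/G$ is seen in $R/G$ via $\bar\iota$), hence a connected component of $(\textup{Im}\, k_\mu)\cap(P/G)$ for an appropriate $\mu\in\mathfrak{g}^*$.

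The first step is to identify this intersection. Since $P\subseteq R$ is $G$-invariant, a $G$-orbit meets $P$ if and only if it is contained in $P$; therefore $\textup{Im}(\bar\iota)$ consists exactly of the $G$-orbits contained in $P$. Consequently $[x]_G\in(\textup{Im}\, k_\mu)\cap(P/G)$ if and only if the orbit $G\cdot x$ is contained in $P$ and meets ${\mathcal J}_R^{-1}(\mu)$, i.e. if and only if there is $y\in G\cdot x$ with $y\in P$ and ${\mathcal J}_R(y)=\mu$; since ${\mathcal J}_P={\mathcal J}_R\circ\iota$, such a $y$ lies in ${\mathcal J}_P^{-1}(\mu)$, and conversely any point of ${\mathcal J}_P^{-1}(\mu)\subseteq P$ produces such an orbit. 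Using that ${\mathcal J}_P$ is $G$-equivariant, so ${\mathcal J}_P^{-1}(\mu)$ is $G_\mu$-invariant, we conclude that $(\textup{Im}\, k_\mu)\cap(P/G)$ is precisely the image of ${\mathcal J}_P^{-1}(\mu)/G_\mu$ in $R/G$ under $k_\mu\circ\bar\iota$, where $\bar\iota$ is now the embedding \eqref{Eq:incl}. Since each connected component of that image lies in a single $\mathcal{O}_{\mbox{\tiny{$R/G$}}}$, passing to connected components yields that the symplectic leaves of $\pi_{\mbox{\tiny{$P/G$}}}$ are exactly the connected components of the ${\mathcal J}_P^{-1}(\mu)/G_\mu$. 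That each such component really occurs follows because the leaves of $\pi_{\mbox{\tiny{$R/G$}}}$ cover $R/G$: any $[x]_G\in P/G$ lies in a leaf $\mathcal{O}_{\mbox{\tiny{$P/G$}}}$, which is a component of ${\mathcal J}_P^{-1}(\mu)/G_\mu$ for $\mu={\mathcal J}_R(x)$.

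It remains to identify the leafwise symplectic forms, and for this the content is already in Lemma~\ref{L:A-PresymplecticRed}$(ii)$: the leaf $\mathcal{O}_{\mbox{\tiny{$P/G$}}}$ is a symplectic submanifold of $\mathcal{O}_{\mbox{\tiny{$R/G$}}}$ (this is the Poisson--Dirac property, cf. \cite[Sec.~8]{CrFe}), so its symplectic form is the pullback of the form on $\mathcal{O}_{\mbox{\tiny{$R/G$}}}$; as the latter is the restriction of $\omega_\mu$, the leaf carries $\bar\iota^*\omega_\mu$, as claimed. Alternatively this can be verified directly using \eqref{Eq:PresymBivector} together with $\Omega_P=\iota^*\Omega$ and the Marsden--Weinstein characterization of $\omega_\mu$. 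The only delicate point is the bookkeeping of the two quotients (by $G$ and by $G_\mu$) and the use of the $G$-invariance of $P$ to collapse the two conditions ``lies over $P/G$'' and ``lies over $\mu$'' into a single point of the orbit; once this is done, everything reduces to Lemma~\ref{L:A-PresymplecticRed} with no further computation.
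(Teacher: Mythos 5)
Your proof is correct and takes essentially the same route as the paper: both combine Lemma~\ref{L:A-PresymplecticRed}$(ii)$ with the Marsden--Weinstein description of the leaves of $\pi_{\mbox{\tiny{$R/G$}}}$, and then identify the intersection of a leaf with $P/G$ as ${\mathcal J}_P^{-1}(\mu)/G_\mu$, using the $G$-invariance of $P$ and the equivariance of the momentum map, with the leafwise form $\bar\iota^*\omega_\mu$ coming from the symplectic-submanifold (Poisson--Dirac) statement of the Lemma. Your orbit-by-orbit bookkeeping simply spells out the identification the paper records as $({\mathcal J}_R^{-1}(\mu)/G_\mu)\cap(P/G)\simeq({\mathcal J}_R^{-1}(\mathcal{O}_\mu)\cap P)/G={\mathcal J}_P^{-1}(\mu)/G_\mu$.
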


\begin{proof}
The leaves of $\pi_{\mbox{\tiny{$R/G$}}}$ are given by the connected
components of ${\mathcal J}^{-1}_{\mbox{\tiny{$R$}}}(\mu) /G_\mu$,
for $\mu\in \mathfrak{g}^*$, and by
Lemma~\ref{L:A-PresymplecticRed}, $(ii)$, we have that the leaves of
$\pi_{\mbox{\tiny{$P/G$}}}$ are the connected components of
$({\mathcal J}^{-1}_{\mbox{\tiny{$R$}}}(\mu) /G_\mu) \cap (P/G)$ and
that they sit in $({\mathcal J}^{-1}_{\mbox{\tiny{$R$}}}(\mu)
/G_\mu)$ as symplectic submanifolds. Finally, note that
$$
({\mathcal J}^{-1}_{\mbox{\tiny{$R$}}}(\mu) /G_\mu) \cap (P/G)
\simeq ( {\mathcal J}^{-1}_{\mbox{\tiny{$R$}}}(\mathcal{O}_\mu) \cap
P ) /G = {\mathcal J}^{-1}_{\mbox{\tiny{$P$}}}(\mathcal{O}_\mu) /G =
{\mathcal J}_{\mbox{\tiny{$P$}}}^{-1} (\mu) / G_\mu,
$$
where $\mathcal{O}_\mu$ is the coadjoint orbit through $\mu$.
\end{proof}

Suppose that $F$ is a regular distribution on $P$ such that $TP= F
\oplus \textup{Ker}\, \Omega_P$. The pair $(F,\Omega_P)$ defines a
bivector field $\pi_{\! \mbox{\tiny{$P$}}}$ on $P$, as in
\eqref{Eq:bivector-section}. If $F$ is $G$-invariant, then $\pi_{\!
\mbox{\tiny{$P$}}}$ is also $G$-invariant, and defines a bivector
field on $P/G$ via reduction. The relation between this bivector and
the one in Lemma~\ref{L:A-PresymplecticRed}$(i)$ is as follows.

\begin{lemma} \label{L:B-PresymAlmPoisson}
The reduction of the almost Poisson structure $ \pi_{\!
\mbox{\tiny{$P$}}}$ to $P/G$ coincides with the Poisson structure
$\pi_{\mbox{\tiny{$P/G$}}}$ given by the reduction of the
presymplectic structure $ \Omega_P$ as in
Lemma~\ref{L:A-PresymplecticRed}$(i)$.
\end{lemma}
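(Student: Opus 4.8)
The plan is to compare the two bivector fields on $P/G$ directly through their sharp maps. Denote by $(\pi_{\mbox{\tiny{$P$}}})_\red$ the reduction of the $G$-invariant bivector $\pi_{\mbox{\tiny{$P$}}}$, characterized (as in \eqref{Eq:ReducedBivector}) by $T\rho_P\circ \pi_{\mbox{\tiny{$P$}}}^\sharp\circ\rho_P^* = (\pi_{\mbox{\tiny{$P$}}})_\red^\sharp$; this is well defined because $\pi_{\mbox{\tiny{$P$}}}$ is $G$-invariant and $\rho_P^*\alpha$ is basic, so $\pi_{\mbox{\tiny{$P$}}}^\sharp(\rho_P^*\alpha)$ is $G$-invariant and hence $\rho_P$-projectable (the action being free and proper). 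It then suffices to check that $(\pi_{\mbox{\tiny{$P$}}})_\red^\sharp(\alpha) = \pi_{\mbox{\tiny{$P/G$}}}^\sharp(\alpha)$ for every $\alpha\in T^*(P/G)$, using the characterization \eqref{Eq:PresymBivector} of $\pi_{\mbox{\tiny{$P/G$}}}$ established in Lemma~\ref{L:A-PresymplecticRed}.

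First I would fix $\alpha\in\Omega^1(P/G)$ and set $X := -\pi_{\mbox{\tiny{$P$}}}^\sharp(\rho_P^*\alpha)$. By the definition \eqref{Eq:bivector-section} of $\pi_{\mbox{\tiny{$P$}}}$ from the pair $(F,\Omega_P)$, the vector field $X$ is a section of $F$ satisfying ${\bf i}_X\Omega_P|_F = (\rho_P^*\alpha)|_F$. The key step is to promote this equality of $1$-forms restricted to $F$ to an equality of $1$-forms on all of $TP$, using the splitting $TP = F\oplus\textup{Ker}\,\Omega_P$: on $F$ the two $1$-forms agree by construction, while on $\textup{Ker}\,\Omega_P$ both vanish — the left-hand side because $\Omega_P$ is contracted with a vector field (so $\Omega_P(X,\cdot)$ kills $\textup{Ker}\,\Omega_P$), and the right-hand side because $\textup{Ker}\,\Omega_P\subseteq\V_P = \ker T\rho_P$, so $\rho_P^*\alpha$ annihilates $\textup{Ker}\,\Omega_P$. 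Therefore ${\bf i}_X\Omega_P = \rho_P^*\alpha$ on $P$, equivalently ${\bf i}_{-X}\Omega_P = -\rho_P^*\alpha$.

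By the defining relation \eqref{Eq:PresymBivector} of $\pi_{\mbox{\tiny{$P/G$}}}$, this last identity says precisely that $\pi_{\mbox{\tiny{$P/G$}}}^\sharp(\alpha) = T\rho_P(-X) = T\rho_P\big(\pi_{\mbox{\tiny{$P$}}}^\sharp(\rho_P^*\alpha)\big) = (\pi_{\mbox{\tiny{$P$}}})_\red^\sharp(\alpha)$. Since $\alpha$ is arbitrary, this yields $(\pi_{\mbox{\tiny{$P$}}})_\red^\sharp = \pi_{\mbox{\tiny{$P/G$}}}^\sharp$, hence $(\pi_{\mbox{\tiny{$P$}}})_\red = \pi_{\mbox{\tiny{$P/G$}}}$, as claimed. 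There is no serious obstacle in this argument: the only point that needs a little care is the compatibility of both constructions with the decomposition $TP = F\oplus\textup{Ker}\,\Omega_P$ — and, relatedly, the fact that passing to the $\rho_P$-quotient is insensitive to adding elements of $\textup{Ker}\,\Omega_P$ to a vector field — which is exactly where the standing hypothesis $\textup{Ker}\,\Omega_P\subseteq\V_P$ of Lemma~\ref{L:A-PresymplecticRed} is used.
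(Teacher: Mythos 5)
Your proof is correct and follows essentially the same route as the paper: you check the identity ${\bf i}_X\Omega_P=\rho_{\!\mbox{\tiny{$P$}}}^*\alpha$ separately on $F$ (by the definition of $\pi_{\!\mbox{\tiny{$P$}}}$ from $(F,\Omega_P)$) and on $\textup{Ker}\,\Omega_P$ (using skew-symmetry and $\textup{Ker}\,\Omega_P\subseteq\V_P$), then use $TP=F\oplus\textup{Ker}\,\Omega_P$ and the characterization \eqref{Eq:PresymBivector}, which is exactly the paper's argument with its auxiliary $1$-form $\Gamma$ made implicit. The sign bookkeeping via \eqref{Eq:bivector-section} and \eqref{Eq:PresymBivector} is also handled correctly.
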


\begin{proof} We have to
 show that
 $\pi_{\mbox{\tiny{$P/G$}}}^\sharp(\a) =
 T \! \rho_{\! \mbox{\tiny{$P$}}} ( \pi_{\! \mbox{\tiny{$P$}}}^\sharp(\rho_{\! \mbox{\tiny{$P$}}}^*\a) )$
for each 1-form $\a$ on $P /G$.
Note that the vector field $\pi_{\!
\mbox{\tiny{$P$}}}^\sharp(\rho_{\! \mbox{\tiny{$P$}}}^*\a)$
satisfies ${\bf i}_{\pi_{\! \mbox{\tiny{$P$}}}^\sharp(\rho_{\!
\mbox{\tiny{$P$}}}^*\a)} \Omega_P |_F  = - \rho_{\!
\mbox{\tiny{$P$}}} ^*\a |_F$. So there is a 1-form $\Gamma$ on $P$
such that $\Gamma |_F \equiv 0$ and ${\bf i}_{\pi_{\!
\mbox{\tiny{$P$}}}^\sharp(\rho_{\! \mbox{\tiny{$P$}}}^*\a)} \Omega_P
= - \rho_{\! \mbox{\tiny{$P$}}}^*\a + \Gamma$.  Using that
$\textup{Ker}\, \Omega_P \subset \V_{\! \mbox{\tiny{$P$}}}$
we see that ${\bf i}_{\pi_{\!\mbox{\tiny{$P$}}}^\sharp(\rho_{\! \mbox{\tiny{$P$}}}^*\a)} \Omega_P
(X) =  \Gamma(X)$ for all $X \in \textup{Ker}\, \Omega_P $ and thus
$\Gamma \equiv 0$. Hence ${\bf i}_{\pi_{\!
\mbox{\tiny{$P$}}}^\sharp(\rho_{\! \mbox{\tiny{$P$}}}^*\a)} \Omega_P
= - \rho_{\! \mbox{\tiny{$P$}}}^*\a$, which implies that
$\pi_{\mbox{\tiny{$P/G$}}}^\sharp(\a) = T \! \rho_{\!
\mbox{\tiny{$P$}}} (\pi_{\! \mbox{\tiny{$P$}}}^\sharp(\rho_{\!
\mbox{\tiny{$P$}}}^*\a) )$.
\end{proof}

\end{document}